\newcommand\mvector{\boldsymbol}
\newcommand\vd{\mvector{d}}
\newcommand\vp{\mvector{p}}
\newcommand\vq{\mvector{q}}
\newcommand\field{\mathbb}
\newcommand\R{\field{R}}
\newcommand\C{\field{C}}
\newcommand\Z{\field{Z}}
\newcommand\N{\field{N}}
\newcommand\Q{\field{Q}}
\renewcommand\Re{\operatorname{Re}}
\newcommand\tr{\operatorname{Tr}}
\newcommand\id{\operatorname{\mathrm{Id}}}
\newcommand\rmi{\mathrm{i}\mspace{1mu}}
\newcommand\mtext[1]{\quad\text{#1}\quad}
\def\nn{\nonumber }
\theoremstyle{plain}
\newtheorem{theorem}{Theorem}
\newtheorem{lemma}[theorem]{Lemma}
\newtheorem{conjecture}[theorem]{Conjecture}
\newtheoremstyle{note}{\topsep}{\topsep}{\slshape}{}{\scshape}{}{ }{}
\theoremstyle{note}
\newtheorem{remark}[theorem]{Remark}
\numberwithin{equation}{section}
\numberwithin{theorem}{section}
\begin{document}

\title{On algebraic construction of certain integrable and
  super-integrable systems}
\author{A. J.~Maciejewski$^1$,  M. Przybylska$^{2}$, A. V.~Tsiganov$^3$ \\
  \\
  \begin{minipage}{0.8\textwidth}
    \small $^1$J.~Kepler Institute of Astronomy, University of Zielona
    G\'ora, Licealna 9, \\ \quad PL-65--417 Zielona G\'ora, Poland
    (e-mail: maciejka@astro.ia.uz.zgora.pl)\\[0.5em]
    $^2$Institute of Physics, University of Zielona G\'ora,
    Licealna 9, \\\quad PL-65--417 Zielona G\'ora, Poland\\
    (e-mail:M.Przybylska@proton.if.uz.pl)\\[0.5em]
    $^3$St.~Petersburg State University, St.~Petersburg, Russia\\ 
    (e-mail: andrey.tsiganov@gmail.com)
  \end{minipage}} \date{}
\maketitle

\begin{abstract}
  We propose a new construction of two-dimensional natural
  bi-Hamiltonian systems  associated with a very simple Lie algebra.  The
  presented construction allows us to distinguish three families of
  super-integrable monomial potentials for which one additional first
  integral is quadratic, and the second one can be of arbitrarily high
  degree with respect to the momenta.  Many integrable systems with
  additional integrals of degree greater than two in momenta are
  given.  Moreover, an example of a  super-integrable system with first
  integrals of degree two, four and six in the momenta is found.
\end{abstract}

\section{Introduction}
The main aim of this paper is to study natural integrable systems
\begin{equation}
  \label{ham1} H_1=2p_1p_2+V(q_1,q_2),
\end{equation}
with two degrees of freedom associated with the various
representations of Lie algebra
\begin{equation}
  \label{alg} [N,a_+]=\kappa_1\, a_+,\qquad
  [N,a_-]=-\kappa_2a_-,\qquad
  [a_+,a_-]=0\,
\end{equation}
labelled by parameters $\kappa_{1}, \kappa_{2}\in\R$, and having the
Casimir element
\[
C=a_+^{\kappa_2} a_-^{\kappa_1}.
\]
An obvious representation of this algebra is the algebra of smooth
function defined on the phase space generated by
\begin{equation}
  \label{real-os}
  a_+=q_1, \qquad a_-=q_2, \qquad N=-\kappa_1 q_1
  p_1 +\kappa_2 q_2 p_2,
\end{equation}
with bracket $[a,b]:=\{a,b\}$, where $\{\cdot,\cdot\}$ denotes the
canonical Poisson bracket in $\R^4$.  

We met with this algebra investigating bi-Hamiltonian structures of
two-dimensional natural Hamiltonian systems with homogeneous
potentials and Newton's equations with homogeneous velocity
independent forces. Namely, let us consider a natural system given by
the Hamilton function of the form~\eqref{ham1} with a monomial
potential. More precisely, let us consider system governed by
Hamiltonian of the form
\begin{equation}
  \label{eq:hs}
  H\equiv a_+=2p_1p_2+q_1^aq_2^b,
\end{equation}
where $a$ and $b$ are real parameters.  For such a system the 
function
\begin{equation}
  \label{N-2} N=\alpha p_1q_1+\beta
  p_2q_2,
\end{equation}
satisfies the following equality
\[
\{N,a_+\}=(\alpha+\beta)\,a_+,
\]
provided that $\alpha$ and $\beta$ are fulfils equation
\[
a\,\alpha+b\,\beta+\alpha+\beta=0.
\]
If we can find the remaining generator $a_-=H_2$, then we have
integrable in the Liouville sense dynamical system with homogeneous
potential.  In order to get this additional first integral we use the
machinery of bi-Hamiltonian geometry, see \cite{ts06,ts10}.

It is worth to notice that the existence of function $N$ with the
prescribed property is related with a certain symmetry of the
system. Let $X_{F}$ denote the Hamiltonian vector field generated by
$F$.  Assume that for Hamiltonian~\eqref{eq:hs} there exists function
$N$ such that $\{N,H\}=c H$, where $c$ is a constant.  Then the
Hamiltonian vector field $X_N$ is a master symmetry of $X_H$, as
$[X_N,X_H]=cX_H$. Sometimes a master symmetry is called a conformal
symmetry, see \cite{Bogoyavlenskii:98::a}.

Let we assume that $X_N$ with $N$ of the following form
\begin{equation}
  \label{eq:n}
  N:=A(q_1,q_2)p_1 + B(q_1,q_2)p_2,
\end{equation}
where $A(q_1,q_2)$ and $B(q_1,q_2)$ are differentiable functions, is a
conformal symmetry of $X_H$.  Then it is easy to show that necessarily
$A=\alpha q_1$ and $B=\beta q_2$.  Thus, equality $\{N,H\}=c H$
implies that
\begin{equation}
  \label{eq:e}
  c=\alpha+\beta, \quad   \alpha a + \beta b +c =0.
\end{equation}
So, we recovered the assumed form of $N$, see~\eqref{N-2}. Now, the
last relation in~\eqref{alg} tells that we need an additional integral
$H_2\equiv a_{-}$.  Moreover, we require that $\{N,H_2\}={d}H_2$, for
a certain $d\in\R$, and this is equivalent that $X_N$ is also a
conformal symmetry of $X_{H_2}$.

We are going to work with systems~\eqref{eq:hs}  with two degrees of freedom because
such systems appeared as subsystems on invariant manifold of
$n$-dimensional Hamiltonian systems \cite{amp09,mp09}.

The plan of this paper is following. In the next section
we start with short description how Hamiltonian systems of the
form~\eqref{eq:hs} appeared in our investigations of the integrability
of natural Hamiltonian systems with homogeneous potentials.  In
Section~\ref{sec:irr} bi-Hamiltonian irregular Poisson manifolds as
well as their application for construction of first integrals of
considered systems are presented. The remaining sections contain
results of the integrability analysis. In Section~\ref{sec:bi} four
families of integrable systems with additional first integrals
quadratic in the momenta are given. In the next four sections
super-integrable cases in these families are distinguished. In
Section~\ref{sec:hi} examples of integrable systems with  additional
first integrals of degree greater than two in the momenta are given.
In Appendix we collected basic facts concerning the Gauss
hypergeometric equation which are used in this paper.

\section{Monomial potentials}
\label{sec:mon}
Let us consider Hamiltonian systems with $n$ degrees of freedom with
the Hamiltonian of the following classical form
\begin{equation}
  H=\frac{1}{2}\sum_{i=1}^n p_i^2+V(\vq),
  \label{eq:homo}
\end{equation}
where $\vq=(q_1,\ldots, q_n)$ and $\vp=(p_1,\ldots, p_n)$ are the
canonical coordinates and momenta; the potential $V(\vq)$ is a
homogeneous function of degree $k\in\Q$.  The strongest necessary
integrability conditions for such systems with $k\in \Z$ were obtained
thanks to an application of differential Galois methods, see \cite{Morales:99::c,Morales:01::a,mp:05::c}.
To derive conditions of this type one needs to know a particular
non-equilibrium solution of the considered systems.  For the systems
given by Hamiltonian~\eqref{eq:homo} a particular solution can be find
in systematic way.  Namely, if a non-zero $\vd\in\C^n$ satisfies
\begin{equation}
  V'(\vd)=\gamma\vd,
  \label{eq:darbcie}
\end{equation}
for some $\gamma\in\C$, then the system has a particular solution of
the following form
\begin{equation}
  \label{eq:ps}
  \vq(t)=\varphi(t)\vd , \qquad \vp(t)=\dot\varphi(t)\vd,
\end{equation}
where $\varphi(t)$ is a scalar function which is defined in the
following way. If vector $\vd$ satisfies equation~\eqref{eq:darbcie}
with $\gamma=0$, then $\varphi(t):=t$, and $\vd$ is called an improper
Darboux point of potential $V$. In this case, if $k\in\N$, and the
system is integrable in the Liouville sense, then all the eigenvalues
of the Hessian matrix $V(\vd)$ vanish, see~\cite{mp09}.

If the considered $\vd$ satisfies equation~\eqref{eq:darbcie} with
$\gamma\neq0$, then $\varphi(t)$ is a solution of equation $\ddot
\varphi=-\varphi^{k-1}$. In this case $\vd$ is called a proper Darboux
point of $V$, and if the system is integrable, then for a given
$k\in\Z\setminus\{-2,2\}$, all eigenvalues of $V''(\vd)$ belong to a
certain infinite subset of rational numbers $\Q$,
see~\cite{Morales:99::c,Morales:01::a,amp09}.

As we can see, in spite of the fact that the differential Galois
theory is quite involved the final result has the form of simple
arithmetic restrictions on the eigenvalues of matrix $V''(\vd)$.
Moreover, for polynomial potentials some relations between eigenvalues
of the Hessian calculated at all proper Darboux points exist. These
relations together with arithmetic restrictions on the eigenvalues
forced by necessary integrability conditions enable to find
effectively explicit forms of integrable potentials at least for small
$n$ and $k$.  Such a systematic analysis was initiated for $n=2$ in
\cite{Maciejewski:04::g, mp:05::c} and later on it was developed for
$n>2$, see~\cite{mp09,mp09a}.

However, for some classes of potentials the above approach does not
work. In particular it is a case if the consider potential does not
have any proper Darboux point, and moreover, the Hessian matrix
$V''(\vd)$ at each improper Darboux point is nilpotent.  For $n=2$ an
almost complete characterisation of such polynomial potentials is
given by the following lemma.
\begin{lemma}
  \label{lem:nodarboux}
  If a polynomial potential $V$ of degree $k>2$ does not have any
  proper Darboux point, then it is either equivalent to the following
  one
  \begin{equation}
    \label{eq:vkl}
    V_{k,l} = \alpha (q_2-\rmi q_1  )^{k-l}(q_2+\rmi q_1 )^{l} , \qquad
    \quad \alpha\in\C^\star:=\C\setminus\{0\}.
  \end{equation}
  for some $l=2,\ldots,k-2$, or $k=2s$ and $V$ has factor $(q_2\pm
  \rmi q_1)$ with multiplicity $s$.
\end{lemma}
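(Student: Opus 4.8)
The plan is to pass to homogeneous (binary form) coordinates and then to the isotropic variables, in which the Darboux condition diagonalises. Writing $u=q_2+\rmi q_1$ and $v=q_2-\rmi q_1$, one has $q_1^2+q_2^2=uv$, a rotation acts by $u\mapsto\lambda u,\ v\mapsto\lambda^{-1}v$, a reflection swaps $u\leftrightarrow v$, and the infinitesimal rotation operator $q_2\partial_{q_1}-q_1\partial_{q_2}$ becomes, up to a factor $\rmi$, the Euler-type operator $D:=u\partial_u-v\partial_v$, which is diagonal on monomials, $D(u^iv^j)=(i-j)u^iv^j$. Since $V$ is a homogeneous binary form of degree $k$, the condition $V'(\vd)\parallel\vd$ is equivalent to $DV(\vd)=0$, so the Darboux directions are exactly the $k$ roots in $\CPOne$ of the binary form $DV$.

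First I would record the proper/improper dichotomy at each root. At a non-isotropic direction ($uv\neq0$) Euler's identity gives $\gamma\,(q_1^2+q_2^2)=kV(\vd)$, so such a Darboux point is improper precisely when $V(\vd)=0$ and proper otherwise. On the isotropic line $v=0$ (and symmetrically $u=0$) a short chain-rule computation shows that $v=0$ is a root of $DV$ iff $v\mid V$, and that the eigenvalue there equals $\gamma=-\rmi\,\partial_vV$; hence the point is proper when $v$ divides $V$ with multiplicity exactly one and improper when the multiplicity is at least two. Thus ``no proper Darboux point'' translates into: every non-isotropic root of $DV$ is a root of $V$, and no isotropic factor of $V$ occurs with multiplicity one.

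The core of the argument is a multiplicity count for $DV$. Factoring $V=\alpha\,u^av^b\prod_{i}(v-t_iu)^{e_i}$ with the $t_i\neq0$ distinct, $e_i\ge1$ and $a+b+\sum e_i=k$, I would compute the order of vanishing of $DV$ at each special direction: order $e_i-1$ at each non-isotropic root $v=t_iu$ (because $D(v-t_iu)=-v-t_iu\neq0$ there), and order exactly $b$ at $v=0$ and $a$ at $u=0$ provided $k\neq2b$ and $k\neq2a$ respectively, the relevant leading coefficient being a nonzero multiple of $k-2b$ (resp.\ $k-2a$). Subtracting these from $\deg DV=k$ leaves exactly $m$ further roots, all of them non-isotropic and distinct from the $t_i$; by the dichotomy above each such root is a \emph{proper} Darboux point. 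Hence, if $V$ has no proper Darboux point and a genuine non-isotropic factor is present ($m\ge1$), one is forced into the degenerate case $k=2a$ or $k=2b$, that is, $V$ is divisible by $u^{k/2}$ or by $v^{k/2}$, which is precisely family~(b) with $s=k/2$. If instead $m=0$, then $V=\alpha u^av^b$; discarding the sub-case $a=b$ (where $DV\equiv0$ and every non-isotropic direction is a proper Darboux point) and using that $a,b\neq1$, the surviving possibilities have $a,b\ge2$ with $a+b=k$ and are exactly the potentials $V_{k,l}=\alpha(q_2-\rmi q_1)^{k-l}(q_2+\rmi q_1)^{l}$ with $l=a\in\{2,\dots,k-2\}$, up to the reflection $u\leftrightarrow v$ interchanging $l$ and $k-l$ --- family~(a).

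The one point that makes the characterisation only \emph{almost} complete, and which I would flag explicitly, is the pair of pure powers $V=\alpha u^k$ and $V=\alpha v^k$ (the cases $a=0$ or $b=0$): each has a single improper Darboux point of full multiplicity, hence no proper one, yet is neither of the form $V_{k,l}$ with $2\le l\le k-2$ nor divisible by an isotropic factor of multiplicity exactly $k/2$. I expect the main technical obstacle to be the multiplicity bookkeeping for $DV$ along the isotropic lines, and in particular the boundary $k=2a$ (or $k=2b$): there the order of $DV$ at $u=0$ jumps above $a$ and absorbs the would-be extra roots, so the clean count ``$m$ leftover non-isotropic roots'' degenerates and must be redone carefully to confirm that the global root count remains $k$ and that one indeed lands in family~(b).
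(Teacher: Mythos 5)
The paper states Lemma~\ref{lem:nodarboux} without proof (it is introduced only as ``an almost complete characterisation''), so there is no proof of record to compare against; judged on its own, your argument is correct and essentially complete. The reduction to isotropic coordinates, the identification of the Darboux directions with the roots of the binary form $DV$ where $D=u\partial_u-v\partial_v=\rmi(q_1\partial_{q_2}-q_2\partial_{q_1})$, the Euler-identity dichotomy $\gamma\,uv=kV(\vd)$ at non-isotropic roots, and the multiplicity-one criterion for properness on the isotropic lines all check out, as does the multiplicity count: writing $V=\sum_{j\ge b}c_ju^{k-j}v^j$ gives $DV=\sum_{j\ge b}(k-2j)c_ju^{k-j}v^j$, so the order at $v=0$ is exactly $b$ precisely when $k\neq 2b$, and the order $e_i-1$ at $v=t_iu$ follows since $D(v-t_iu)=-(v+t_iu)$ does not vanish there. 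Two small points deserve to be made explicit. First, in the branch $m\ge 1$ you tacitly assume $DV\not\equiv 0$ when counting its $k$ roots; this is harmless, because $DV\equiv 0$ forces $V$ to be a polynomial in $uv$, hence $V=c\,(uv)^{k/2}$ and $m=0$, a case you dispose of separately (every non-isotropic direction is then a proper Darboux point). Second, in the degenerate branch $k=2a$ or $k=2b$ no further bookkeeping is needed at all: the hypothesis already says that $q_2\pm\rmi q_1$ divides $V$ with multiplicity $s=k/2$, which is the second alternative of the lemma verbatim, so the ``redo the count carefully'' caveat at the end of your write-up is moot. Finally, your flagging of the pure powers $\alpha(q_2\pm\rmi q_1)^k$ is accurate: they have a single improper Darboux point and no proper one, yet lie outside both stated families (unambiguously for odd $k$, and for even $k$ unless ``multiplicity $s$'' is read as ``at least $s$''); this exceptional pair is evidently what the authors' qualifier ``almost complete'' is meant to absorb.
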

We say that potential $V$ is equivalent to $W$ iff $V(\vq)=W(A\vq)$,
where $A$ is $n\times n$ matrix satisfying $AA^T=\beta\id_n$ for a
certain $\beta\in\C^{\star}$.

Let us notice that potential $V_{k,l}$ with $2\leq l\leq k-2$ has two
improper Darboux points $\vd_1=(1,\rmi)$ and
$\vd_2=(1,-\rmi)$. Moreover, $V''(\vd_1)$ and $V''(\vd_2)$ are
nilpotent.  In other words, for these potentials we do not have any
obstacles for the integrability.

The potentials of the form~\eqref{eq:vkl} look more attractive if we
introduce new canonical variables
\begin{equation}
  \label{eq:tozy}
  z_1=q_1+\rmi q_2,\qquad  z_2=q_1-\rmi q_2, \qquad
  y_1=\frac{1}{2}(p_1-\rmi p_2),\quad y_2=\frac{1}{2}(p_1+\rmi p_2).
\end{equation}
In these variables the Hamiltonian function has the form
\begin{equation}
  \label{eq:haniki}
  H=2y_1y_2 +z_1^lz_2^{k-l}.
\end{equation}
Clearly, the systems governed by Hamiltonian function of the form
\eqref{eq:hs} are just generalisation of~\eqref{eq:haniki} for cases
when $k$ and $l$ are rational.

\section{Irregular Poisson manifolds}
\label{sec:irr}
A bi-Hamiltonian manifold $M$ is a smooth manifold endowed with a pair
of compatible Poisson bi-vectors $P$ and $P'$ such that
\begin{equation}
  \label{m-eq1}
  [\![P,P']\!]=0,\qquad [\![P',P']\!]=0,\qquad
\end{equation}
where $[\![\cdot,\cdot]\!]$ is the Schouten bracket.
 If $\dim M=2n$,
and $P$ is invertible Poisson bivector on $M$, the Nijenhuis operator
which is called also the hereditary, or recursion operator, is defined
as
\[\mathcal N=P' P^{-1}.
\]

We say that a bi-Hamiltonian manifold $M$ is regular iff $\dim M=2n$,
$P$ is invertible Poisson bivector, and the recursion operator
$\mathcal N$ has, at every point, $n$ distinct functionally
independent eigenvalues. If the recursion operator $\mathcal N$ does
not have this property then we say that bi-Hamiltonian manifold $M$ is \textit {irregular}.

For a regular  bi-Hamiltonian manifold $M$ functions
\begin{equation}
  \label{aux-int}
 H_k=\frac1{2k}\tr\mathcal N^k, \mtext{for} k\in\N,
\end{equation}
form a bi-Hamiltonian hierarchy on $M$, i.e., the Lenard relations
hold
\begin{equation}
  \label{lenard}
  P' \mathrm{d}{ H}_k=P \mathrm{d}{ H}_{k+1}\,,\quad\text{for all}\quad k\ge 1.
\end{equation}
Generally it is unknown    if it is possible  to construct in a
systematic way a full involutive  set  of functions on irregular
bi-Hamiltonian manifolds.

Anyway,  we are going to apply the
algebra \eqref{alg} to construct  integrable systems on
two-dimensional irregular bi-Hamiltonian manifolds.

\subsection{The rational Calogero-Moser system}
In order to justify better some aspects of our construction we
consider the $n$-particle rational Calogero-Moser model associated
with the root system $\mathcal A_n$. It is defined by the Hamilton
function
\begin{equation}
  H=\dfrac{1}{2}\sum_{i=1}^{n}p_{i}^{\;2}-{a^2}\sum_{i\neq
    j}^n \frac{1}{(q_{i}-q_{j})^{2}},  \label{ham-cal}
\end{equation}
where $a$ is a coupling constant.  The canonical variables $(\vq,\vp)$
satisfy the standard Poisson bracket relations
$\{q_{i},p_{j}\}=\delta_{ij}$, and the associated Poisson bivector
will be denoted by $P$.

This system admits independent and commuting first integrals
\begin{equation*}
  H_k=\dfrac{1}{k!}\,\tr\, L^k, \mtext{for} k=1,\ldots,n,
\end{equation*}
where $ L$ is the standard Lax matrix
\[
 L
=\begin{bmatrix} p_1 & \frac{a}{q_{1}-q_{2}} & \cdots & \frac{a}{q_{1}-q_{n}}\\
  \frac{a}{q_{2}-q_{1}} & p_2 & \ddots & \vdots\\
  \vdots & \ddots & \ddots & \vdots\\
  \frac{a}{q_{n}-q_{1}} & \frac{a}{q_{n}-q_{2}} & \cdots & p_n
\end{bmatrix} .
\]
Besides these $n$ integrals of motion the rational Calogero-Moser
system admits $(n-1)$ additional functionally independent integrals of
motion $K_{m}$ given by
\[
K_{m}= m g_{1}H_{m}- g_{m} H_{1}, \qquad m=2,...,n,
\]
where
\begin{equation*}
  g_j = \frac{1}{2} \,
  \left\{\sum_{i=1}^{n} q_{i}^{\;2}, H_{j} \right\}, \mtext{for}
  j=1,...,n.
\end{equation*}
According to \cite{ts10} integrals of motion $H_k$ and $K_m$ can be
obtained from the Hamilton function $H=H_2$ given by \eqref{ham-cal}
as polynomial in momenta solutions of the following equations
\begin{equation}
  \label{cal-int}
  P\,\mathrm{d}H=k^{-1}\,P'\,\mathrm{d}\ln H_k=
(m-1)^{-1}\,P'\,\mathrm{d}\ln K_m,
\end{equation}
where $P'$ is the  Poisson bivector compatible with
$P$
given by
\begin{equation*}
  P'=\begin{bmatrix}
    R & \Pi \\
    -\Pi & 0
  \end{bmatrix}
  +
  \begin{bmatrix}
    0& \Lambda \\
    -\Lambda & M
  \end{bmatrix},
\end{equation*}
and $n\times n$ matrices $\Pi$, $\Lambda$, $R$ and $M$ have the
following entries
\[
\Pi_{ij}=p_ip_j,\qquad \Lambda_{ij}=q_i\sum_{k\neq j}^n
\dfrac{a^2}{(q_j-q_k)^3},
\]
and
\begin{equation*}
  R_{ij}=\sum_{k=1}^n\left(\dfrac{\partial \Pi_{jk}}{\partial
      p_i}-\dfrac{\partial \Pi_{ik}}{\partial p_j}\right)q_k, \quad
  M_{ij}= \sum_{k=1}^n\left(\frac{\partial \Lambda_{ki}}{\partial
      q_j}-
\frac{\partial \Lambda_{kj}}{\partial q_i}\right)p_k.
\end{equation*}
In this case recursion operator $\mathcal N=P'P^{-1}$ generates only
the Hamilton function
\begin{equation}
  \label{cal-deg}
  \tr\,\mathcal N^k=2(2H)^{k}\,\quad\text{such that}\quad
 P\,\mathrm{d}H=P'\mathrm{d}\ln H,
\end{equation}
and instead of the standard Lenard relations \eqref{lenard} we have
equations \eqref{cal-int}. Thus, the described Calogero-Moser system
is a super-integrable system on an irregular bi-Hamiltonian manifold
$\mathbb R^{2n}$.

\subsection{Algebraic construction of two-dimensional bi-Hamiltonian
  systems on irregular manifolds}
In order to get integrable systems associated with the Lie algebra
\eqref{alg} we start with the canonical Poisson bivector
\[
P=
\begin{bmatrix}
  0 & 0 & 1 & 0 \\
  0 & 0 & 0 & 1 \\
  -1 & 0 & 0 & 0 \\
  0 & -1 & 0 & 0
\end{bmatrix},
\]
and the Hamiltonian vector field
\[
X=PdH_1=\left[ 2p_2, 2p_1, -\dfrac{\partial V(q_1, q_2)}{\partial
    q_1}, -\dfrac{\partial V(q_1, q_2)}{\partial q_2} \right]^T .
\]
We suppose that this vector field $X$ is the bi-Hamiltonian vector
field with respect to a certain Poisson bivector $P'$, and some second
integral of motion $H_2$
\[
X=P\mathrm{d}H_1=P' \mathrm{d}H_2.
\]
The bivector $P'$ has to be compatible with canonical bivector $P$,
and it has to be the Poisson bivector. Hence, we have the following
relations
\begin{equation}
  \label{m-eq} [\![P,P']\!]=0,\qquad
  [\![P',P']\!]=0,\qquad\mbox{and}\qquad \{H_1,H_2\}=\{H_1,H_2\}'=0.
\end{equation}
Additionally we postulate that bivector $P'$ is a Lie derivative of
canonical bivector $P$ along the vector field $Y$ proportional to the
Hamiltonian vector field $X$
\begin{equation}\label{assum-1}
  P'=\mathcal L_Y
  P,\qquad Y=\rho X,
\end{equation}
where multiplier $\rho$ is equal to
\[
\rho=\alpha p_1q_1+\beta p_2q_2,
\]
and so it coincides with $N$ in \eqref{N-2}.

Assumption \eqref{assum-1} ensures that $P$ and $P'$ are compatible
bivectors, i.e., $[\![P,P']\!]=0$, and that eigenvalues of the
corresponding recursion operator $\mathcal N=P'P^{-1}$ are functions
which depend only on $H_1$. In fact, we have
\[
\det (\mathcal
N-\lambda\,\id_{4})=\lambda^2\Bigl(\lambda+(\alpha+\beta)H_1\Bigr)^2\,,
\]
see \eqref{cal-deg}. Thus, we have constructed an irregular bi-Hamiltonian
manifold $\R^4$.
\begin{theorem}
  \label{pro:bi}
  Bivector $P'$ \eqref{assum-1} satisfies the Jacobi condition
  \begin{equation}
    \label{poi} [\![P',P']\!]=0
\end{equation}
  iff
  \begin{equation}\label{1-pot}
    \alpha=0,\qquad V=\dfrac{ f(q_1)}{q_2},\qquad \rho= \beta
    p_2q_2
  \end{equation}
  or
  \begin{equation}\label{2-pot}
    V=\left(\gamma+
      g(q_1^{-\beta/\alpha}q_2)\right)\,q_1^{-(\alpha+\beta)/\alpha},\qquad
    \rho=\alpha p_1q_1+\beta p_2q_2\,,
  \end{equation}
  where $ f$ and $g$ are arbitrary smooth functions,  $\alpha\neq0$,
  $\beta$ and $\gamma$ are
  arbitrary numbers.
\end{theorem}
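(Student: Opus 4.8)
The plan is to convert the tensorial Jacobi condition \eqref{poi} into partial differential equations for $V$ and then integrate them. Throughout write $V_{q_i}=\partial V/\partial q_i$, and recall $X=P\,\mathrm dH_1$. First I would reduce $P'$ to a decomposable bivector. Since $X$ is Hamiltonian and $P$ is Poisson, $\mathcal L_X P=0$; combining this with the derivation property of the Schouten bracket (the Leibniz rule $\mathcal L_{\rho X}P=\rho\,\mathcal L_X P - X\wedge(P\,\mathrm d\rho)$) yields $P'=-X\wedge W$, where
\[
W:=P\,\mathrm d\rho=(\alpha q_1,\ \beta q_2,\ -\alpha p_1,\ -\beta p_2)^{T}
\]
is the scaling field generated by $\rho=N$. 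The key structural fact is then the self-bracket identity for decomposable bivectors, $[\![A\wedge B,A\wedge B]\!]=2\,A\wedge B\wedge[A,B]$, which gives $[\![P',P']\!]=2\,X\wedge W\wedge[X,W]$. Hence \eqref{poi} is equivalent to the pointwise linear dependence $X\wedge W\wedge[X,W]=0$ on $\R^4$.

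Next I would compute the Lie bracket $[X,W]$. A direct computation shows that its two $q$-components are exactly $(\alpha+\beta)$ times those of $X$, namely $2(\alpha+\beta)p_2$ and $2(\alpha+\beta)p_1$. Since $X\wedge W\wedge X=0$, I may replace $[X,W]$ by $Z:=[X,W]-(\alpha+\beta)X$, which has vanishing $q$-components and only the two momentum-slot entries
\[
Z_3=(2\alpha+\beta)V_{q_1}+\alpha q_1 V_{q_1q_1}+\beta q_2 V_{q_1q_2},\qquad
Z_4=(\alpha+2\beta)V_{q_2}+\alpha q_1 V_{q_1q_2}+\beta q_2 V_{q_2q_2},
\]
both functions of $q$ alone. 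Expanding the trivector $X\wedge W\wedge Z$ in the coordinate basis, two of its four components equal $2Z_3(\beta q_2 p_2-\alpha q_1 p_1)$ and $2Z_4(\beta q_2 p_2-\alpha q_1 p_1)$. For $(\alpha,\beta)\neq(0,0)$ the factor $\beta q_2 p_2-\alpha q_1 p_1$ is a nonzero function on $\R^4$ depending on the momenta, whereas $Z_3,Z_4$ depend on $q$ only; so these two components vanish identically iff $Z_3\equiv Z_4\equiv0$, whence $Z\equiv0$ and the remaining two components vanish automatically. Thus \eqref{poi} is equivalent to the single pair $Z_3=Z_4=0$.

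The decisive simplification is that $Z_3=\partial_{q_1}F$ and $Z_4=\partial_{q_2}F$ for $F:=\alpha q_1 V_{q_1}+\beta q_2 V_{q_2}+(\alpha+\beta)V$, so $Z_3=Z_4=0$ holds iff $F$ is a constant $c_0$, i.e. iff $V$ solves the first-order linear PDE $\alpha q_1 V_{q_1}+\beta q_2 V_{q_2}+(\alpha+\beta)V=c_0$. (For $c_0=0$ this is precisely the conformal-symmetry relation $\{N,H_1\}=(\alpha+\beta)H_1$ of the introduction, which specialises to $a\alpha+b\beta+\alpha+\beta=0$ for a monomial $V=q_1^aq_2^b$.) It then remains to integrate by characteristics. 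If $\alpha=0$ the equation reads $\partial_{q_2}(q_2V)=\mathrm{const}$, giving $V=f(q_1)/q_2$ up to an irrelevant additive constant and $\rho=\beta p_2q_2$, which is \eqref{1-pot}. If $\alpha\neq0$ the characteristic invariant is $s=q_1^{-\beta/\alpha}q_2$, the homogeneous solution integrates to $q_1^{-(\alpha+\beta)/\alpha}g(s)$, and adding a particular solution and absorbing constants into $\gamma$ and $g$ yields \eqref{2-pot}.

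I expect the main obstacle to lie in the middle steps: securing the Schouten self-bracket identity in the correct decomposable form and then carrying out the antisymmetric four-component trivector expansion without sign errors, so that the clean factorisation exhibiting $Z_3$ and $Z_4$ genuinely emerges (and checking that the $p$-dependent prefactor is not identically zero, which is where the dichotomy $\alpha=0$ versus $\alpha\neq0$ first appears). Once the condition is collapsed to $F=\mathrm{const}$, the integration is routine, the only care being the degenerate branch and the bookkeeping of constants relating $c_0$ and $\gamma$.
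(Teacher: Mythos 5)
Your proof is correct, and it takes a genuinely more structured route than the paper's, which simply substitutes the explicit matrix \eqref{poi-2} into the Schouten bracket and ``solves the resulting system of partial differential equations'' without further detail. Your reduction rests on two clean observations: first, since $\mathcal L_XP=0$, the Leibniz rule gives $P'=\mathcal L_{\rho X}P=X\wedge W$ with $W=P\,\mathrm d\rho=(\alpha q_1,\beta q_2,-\alpha p_1,-\beta p_2)^T$ (a direct component check shows the sign is $+$, not $-$ as you wrote, and that this reproduces \eqref{poi-2} exactly --- the sign is immaterial for the Jacobi condition); second, the identity $[\![A\wedge B,A\wedge B]\!]=2\,A\wedge B\wedge[A,B]$ turns \eqref{poi} into the involutivity condition $X\wedge W\wedge[X,W]=0$. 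Your computations of $[X,W]$, of the trivector components, and of the gradient structure $Z_3=\partial_{q_1}F$, $Z_4=\partial_{q_2}F$ with $F=\alpha q_1V_{q_1}+\beta q_2V_{q_2}+(\alpha+\beta)V$ all check out, so the Jacobi condition collapses to the single first-order equation $F=c_0$ --- precisely the infinitesimal form of the conformal-symmetry relation \eqref{alg-g1}. What this buys over the paper's brute-force route is that one sees \emph{why} a first-order PDE governs the answer rather than the second-order system one would read off the Schouten bracket. The one step where you are slightly cavalier is the inhomogeneous term: the general solution of $F=c_0$ with $c_0\neq0$ contains an additive constant when $\alpha+\beta\neq0$, and a term proportional to $\ln q_1$ when $\alpha=-\beta\neq0$ (e.g.\ $V=\ln q_1$, $\alpha=1=-\beta$ gives $F\equiv1$, hence a Poisson $P'$ not of the form \eqref{2-pot}); these cannot all be ``absorbed into $\gamma$ and $g$'', so the theorem must be read modulo such terms or with $c_0=0$ imposed --- an imprecision your derivation shares with, and in fact makes visible in, the paper's statement. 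You also implicitly assume $(\alpha,\beta)\neq(0,0)$, as does the paper, since otherwise $P'=0$ and the claim is vacuous.
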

\begin{proof}
In order to prove this theorem we have to substitute the second
bivector
\begin{equation}\label{poi-2}
  P'=\mathcal L_Y P=
  \begin{bmatrix}
    0 &   -2\alpha p_1q_1+2\beta p_2q_2 &   -2\alpha p_1p_2+\alpha q_1\dfrac{\partial V}{\partial q_1} &   -2\beta p_2^2+\alpha q_1\dfrac{\partial V}{\partial q_2} \\
    \\
    * &   0 &   -2\alpha p_1^2+\beta q_2 \dfrac{\partial V}{\partial q_1} &   -2\beta p_1p_2+\beta q_2 \dfrac{\partial V}{\partial q_2}\\
    \\
    * & * &   0 &   -\alpha p_1 \dfrac{\partial V}{\partial q_2}+\beta p_2\dfrac{\partial V}{\partial q_1} \\
    \\
    * & * & * &   0 \\
  \end{bmatrix}
\end{equation}
into the Schouten bracket~\eqref{poi}, and to solve the resulting system of partial
differential equations.
\end{proof}

\begin{theorem}
  Hamilton function $H_1$ defined by \eqref{ham1} with potential $V$
  of the form \eqref{1-pot} or \eqref{2-pot} satisfies the following
  relations
  \begin{equation}\label{alg-g1}
    \{H_1,\rho\}=-(\alpha+\beta) H_1\,,
  \end{equation}
  and
  \begin{equation}\label{alg-g12}
    \{H_1,\rho\}'=\Bigl(-(\alpha+\beta) H_1\Bigr)^2\,,
  \end{equation}
  and
  \begin{equation}\label{log-1}
    X=P\mathrm{d}H_1=-(\alpha+\beta)^{-1} P'\mathrm{d}\ln H_1.
  \end{equation}

\end{theorem}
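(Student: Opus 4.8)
The plan is to reduce all three identities to a single scaling property of the potential, namely the first-order relation
\[
  \alpha\, q_1\,\frac{\partial V}{\partial q_1}+\beta\, q_2\,\frac{\partial V}{\partial q_2}=-(\alpha+\beta)\,V ,
\]
which I claim holds for both families \eqref{1-pot} and \eqref{2-pot}. The first step is to verify this directly. For \eqref{1-pot} one has $\alpha=0$ and $V=f(q_1)/q_2$, so the left-hand side collapses to $\beta q_2\,(\partial V/\partial q_2)=-\beta V=-(\alpha+\beta)V$. For \eqref{2-pot} I would introduce $u:=q_1^{-\beta/\alpha}q_2$ and write $V=(\gamma+g(u))\,q_1^{-(\alpha+\beta)/\alpha}$; the point is that $u$ is invariant under the scaling $(q_1,q_2)\mapsto(\lambda^\alpha q_1,\lambda^\beta q_2)$ while the prefactor carries weight $-(\alpha+\beta)$, so applying the operator $\alpha q_1\partial_{q_1}+\beta q_2\partial_{q_2}$ annihilates the $g(u)$ dependence and reproduces the weight factor, giving the asserted identity.

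Granting this, relation \eqref{alg-g1} is a short computation in the canonical bracket. Writing $H_1=2p_1p_2+V$ and $\rho=\alpha p_1q_1+\beta p_2q_2$, the contribution of the kinetic term $2p_1p_2$ to $\{H_1,\rho\}$ is $-2(\alpha+\beta)p_1p_2$, while the contribution of $V$ is $\alpha q_1(\partial V/\partial q_1)+\beta q_2(\partial V/\partial q_2)$; the scaling identity turns the sum into $-(\alpha+\beta)(2p_1p_2+V)=-(\alpha+\beta)H_1$. Equivalently, along $X=P\,\mathrm dH_1$ one has $X(\rho)=\{\rho,H_1\}=(\alpha+\beta)H_1$, which I will reuse below.

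For \eqref{log-1} I would exploit the defining assumption $P'=\mathcal L_Y P$ with $Y=\rho X$ together with the Lie-derivative identity $(\mathcal L_Y P)\,\mathrm dH_1=[Y,X]-P\,\mathrm d(YH_1)$. Because $YH_1=\rho\,X(H_1)=\rho\{H_1,H_1\}=0$, the second term vanishes, and
\[
  P'\,\mathrm dH_1=[\rho X,X]=-X(\rho)\,X=-(\alpha+\beta)H_1\,X .
\]
Dividing by $H_1$ gives $P'\,\mathrm d\ln H_1=-(\alpha+\beta)X$, which is exactly \eqref{log-1}; as a consistency check, the same eigenvector relation can be obtained by inserting $\mathrm dH_1=(\partial V/\partial q_1,\partial V/\partial q_2,2p_2,2p_1)^{T}$ into the explicit matrix \eqref{poi-2} and simplifying with the scaling identity. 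Finally, \eqref{alg-g12} follows by feeding this into the second bracket: using antisymmetry of $P'$ and then the displayed relation,
\[
  \{H_1,\rho\}'=-\langle\mathrm d\rho,P'\,\mathrm dH_1\rangle=(\alpha+\beta)H_1\,\langle\mathrm d\rho,P\,\mathrm dH_1\rangle=(\alpha+\beta)H_1\,X(\rho)=(\alpha+\beta)^2H_1^2 .
\]

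The only genuinely delicate point is the scaling identity for \eqref{2-pot}: keeping the fractional weights $-\beta/\alpha$ and $-(\alpha+\beta)/\alpha$ straight and checking that the $g'(u)$ contributions cancel is where an error is most likely to creep in. Once that identity is in place, \eqref{alg-g1} is routine, and \eqref{log-1} and \eqref{alg-g12} become one-line consequences of the eigenvector relation $P'\,\mathrm dH_1=-(\alpha+\beta)H_1X$, which also explains why $-(\alpha+\beta)H_1$ is the nonzero eigenvalue of $\mathcal N=P'P^{-1}$ recorded just before the theorem.
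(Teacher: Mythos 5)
Your proposal is correct, and it is genuinely more informative than what the paper offers: the paper dismisses this theorem with the single remark that it is ``a direct consequence of the special form of $P'$ and $H_1$'', i.e.\ it implicitly relies on brute-force substitution of $\mathrm{d}H_1$ and $\mathrm{d}\rho$ into the explicit matrix \eqref{poi-2}. You instead isolate the quasi-homogeneity identity $\alpha q_1\partial_{q_1}V+\beta q_2\partial_{q_2}V=-(\alpha+\beta)V$, which indeed holds for both families (for \eqref{1-pot} trivially, for \eqref{2-pot} because $u=q_1^{-\beta/\alpha}q_2$ has weight zero and the prefactor has weight $-(\alpha+\beta)$ under $(q_1,q_2)\mapsto(\lambda^\alpha q_1,\lambda^\beta q_2)$), and then derive everything from the identity $(\mathcal L_YP)\,\mathrm dH_1=[Y,X]-P\,\mathrm d(YH_1)$ with $Y=\rho X$, obtaining the eigenvector relation $P'\mathrm{d}H_1=-X(\rho)X=-(\alpha+\beta)H_1X$. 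Your sign bookkeeping is consistent throughout ($X(\rho)=\{\rho,H_1\}=(\alpha+\beta)H_1$, $[\rho X,X]=-X(\rho)X$, and the antisymmetry step for $\{H_1,\rho\}'$), and the result matches the characteristic polynomial $\det(\mathcal N-\lambda\,\id_4)=\lambda^2(\lambda+(\alpha+\beta)H_1)^2$ recorded just before the theorem. What your route buys is that \eqref{alg-g12} and \eqref{log-1} become formal consequences of \eqref{alg-g1} and the assumption $P'=\mathcal L_{\rho X}P$ alone, independent of the explicit entries of \eqref{poi-2}; what the paper's implied computation buys is only that it requires no differential-geometric identities. Either is acceptable as a proof.
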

This therem is a direct consequence of the special form of $P'$ and
$H_1$.

Inspired by the equations \eqref{cal-int} for the Calogero-Moser
super-integrable system we suppose that our systems are bi-Hamiltonian
systems with respect to logarithm of the second integral of motion
$H_2$, i.e.,
\begin{equation}\label{assum-2}
  X=P\mathrm{d}H_1=\kappa_2^{-1}P'
  \mathrm{d} \ln H_2.
\end{equation}
This additional assumption is equivalent to the requirement that $H_2$
and $\rho$ satisfy supplementary relations
\begin{equation}\label{alg-g2}
  \{H_2,
  \rho\}=\kappa_2\,H_2\,,\qquad\mbox{and}\qquad
  \{H_2,\rho\}'=-(\alpha+\beta)\kappa_2\,H_1H_2\,.
\end{equation}
The obtained relations \eqref{alg-g1}, \eqref{alg-g12} and
\eqref{alg-g2} show that generators $H_1$, $H_2$ and $\rho$ form
linear and quadratic Poisson algebras with respect to the brackets
$\{\cdot,\cdot\}$ and $\{\cdot,\cdot\}'$, respectively.
\begin{remark}
  Assumptions \eqref{assum-1} and \eqref{assum-2} yield the
  representation of the algebra \eqref{alg}, but we can derive
  \eqref{assum-2} from the algebraic relations \eqref{alg} only if
  \eqref{assum-1} holds. Of course, form of \eqref{assum-2} is a
  little strange because the logarithm of an integral of motion is
  itself integral of motion.
\end{remark}

The equation \eqref{assum-2} is equivalent to the following system of
the partial differential equations
\begin{equation}
  \label{par-eq}
  \kappa_2 H_2
  \left[
    2p_2 ,
    2p_1 ,
    -\dfrac{\partial V}{\partial q_1} ,
    -\dfrac{\partial V}{\partial q_2}
  \right]^T
  =P'
  \left[
    \dfrac{\partial H_2}{\partial q_1},
    \dfrac{\partial H_2}{\partial q_2},
    \dfrac{\partial H_2}{\partial p_1},
    \dfrac{\partial H_2}{\partial p_2}
  \right]^T,
\end{equation}
with respect to unknown function $H_2=H_2(p_1,p_2,q_1,q_2)$,
parameters $\alpha,\beta,\kappa_2$ and functions $ f(q_1)$ or $
g(q_1^{-\beta/\alpha}q_2)$ which appear in definition of potential
$V(q_1,q_2)$ given by \eqref{1-pot}, or \eqref{2-pot}. The explicit
form of $P'$ appearing in~\eqref{par-eq} is given by~\eqref{poi-2}.
\begin{remark}
  Sometimes equations \eqref{par-eq} have different functionally
  independent solutions $H_2$ associated with the same potential
  $V$. In this case we obtain the so-called super-integrable system
  \cite{ts08a,ts08b}, and our algebra \eqref{alg} is some subalgebra
  of the complete polynomial algebra of integrals of motion.
\end{remark}
In order to get integral of motion $H_2$ we additionally assume that it is
polynomial in momenta, i.e., it has the form
\begin{equation}\label{anzatse}
  H_2=\sum_{k,m=0}^M
  g_{km}(q_1,q_2)p_1^m p_2^k,
\end{equation}
where $g_{km}(q_1,q_2)$ are smooth functions. Substituting this
expression into~\eqref{par-eq} we obtain partial differential
equations for unknown functions $g_{km}$, $ f$, $g$,  and
parameters $\alpha, \beta$. We can solve them for a fixed value of $M$.
For instance, in the cubic case $M=3$, for $\alpha=0$ and
$\kappa_2=0$, we obtain the following solution
\[
\rho=-q_2p_2,\qquad V= \dfrac{(c_1q_1+c_2)^{-1}}{q_2},\qquad
H_2=H_1\left(\dfrac{c_1q_1+c_2}{c_1}\,p_1+\rho\right).
\]
Of course, the Hamiltonian
\[
H_1=2p_1p_2+\dfrac{(c_1q_1+c_2)^{-1}}{q_2},
\]
possesses the first integral
\[I_2=\dfrac{c_1q_1+c_2}{c_1}\,p_1-q_2p_2,
\]
which is linear in the momenta. However, this first integral $I_2$
does not satisfy \eqref{par-eq} in contrast with cubic integral $H_2$.

In the next step, if we look for a first integral of degree $M=4$ in the
momenta, then for $\alpha=0$, we reproduce previous solution and get
one new
\[
\rho=-p_2q_2,\qquad V=\dfrac{f}{q_2},\qquad
f=(c_1q_1^2+c_2q_1+c_3)^{-1/2},
\]
associated with the fourth order in momenta integral of motion
\begin{equation}\label{form-br}
  H_2=-\dfrac{2p_1\rho+
    f}{4}\,\left(c_1\rho^2+\dfrac{ fp_1^2+( 2p_1\rho+ f
      ) f'}{q_2 f^3}\right).
\end{equation}
The obtained potential $V$ is non-polynomial and non-homogeneous
function.  Below we will select from generic solutions certain
particular solutions which are given by homogeneous functions.

\section{Bi-Hamiltonian systems with second order integrals of motion}
\label{sec:bi}

Let us consider potential $V(q_1,q_2)$ given by \eqref{2-pot} with
\begin{equation}\label{rest-1}
  \gamma=0, \qquad\text{and}
  \qquad  g(z)=z^d.
\end{equation}
That is
\begin{equation*}
  V(q_1,q_2)=q_1^{-\frac{\beta(d+1)+\alpha}{\alpha}}q_2^d,
\end{equation*}
and so, in this case the Hamiltonian has the form
\begin{equation}\label{ham-mon}
  H_1=2p_1p_2+q_1^{-\frac{\beta(d+1)+\alpha}{\alpha}}\,q_2^d.
\end{equation}
Taking this Hamiltonian we look for solutions $H_2$ of equations
\eqref{par-eq}.  We say that a solution of these equations is trivial
if the corresponding $V(q_1,q_2)$ does not depend on both variables.
We do not distinguish solutions which are obtained one from the other
by a permutation of pairs of variables $(p_1,q_1)$ and $(p_2,q_2)$.

Substituting Hamilton function $H_1$ of the form \eqref{ham-mon} into
the equations \eqref{par-eq}, and taking  $H_{2}$ of the
form~\eqref{anzatse} with $M=2$, one can get the following result.
\begin{theorem}
  \label{prop12}
  If $\kappa_2=1$, $\alpha\neq0$ and $\beta\neq 0$ equations
  \eqref{par-eq} have only four solutions $V(q_1,q_2)$ related to
  polynomial integrals of motion of degree two in the momenta
  \begin{center}
    \begin{tabular}{|l|l|l|l|l|}
      \hline
      & & & &\\
      1&  $\alpha=-\beta $ &  $H_1=2p_1p_2+q_1^dq_2^d$ & $I_2=q_1p_1-q_2p_2$& $H_2=0$ \\
      & & & &\\
      2& $\alpha=-2\beta(d+1)$  &  $ H_1=2p_1p_2+\dfrac{q_2^d}{\sqrt{q_1}}$ & $I_2=2p_1(q_2p_2-p_1q_1)+\dfrac{q_2^{d+1}}{\sqrt{q_1}}$ & $H_2=I_2^{-1/\alpha}$ \\
      & & & &\\
      3& $\alpha= -\dfrac{\beta (d+1)}{2}$ &  $H_1=2p_1p_2+q_1q_2^d$ & $I_2=p_1^2+\dfrac{q_2^{d+1}}{d+1}$
      &$H_2=I_2^{-1/2\alpha}$ \\
      & & & &\\
      4& $\alpha=\beta$ &  $H_1=2p_1p_2+\dfrac{q_2^d}{q_1^{d+2}}$ & $I_2=(p_1q_1-p_2q_2)^2-\dfrac{2q_2^{d+1}}{q_1^{d+1}}$ &$H_2=I_2H_1^{-1/2\alpha}$ \\
      \hline
    \end{tabular}
  \end{center}
\end{theorem}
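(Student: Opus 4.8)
The plan is to treat \eqref{par-eq} as a single polynomial identity in the momenta $p_1,p_2$ and to extract from it a finite, overdetermined system. First I would insert the ansatz \eqref{anzatse} with $M=2$ and the explicit matrix $P'$ from \eqref{poi-2} into the four components of \eqref{par-eq}, specializing to $\kappa_2=1$ and $V=q_1^{-\frac{\beta(d+1)+\alpha}{\alpha}}q_2^d$. Both sides then become polynomials in $p_1,p_2$ whose coefficients are functions of $q_1,q_2$ (multiplied by the fixed monomial $V$ and its two first derivatives), so I would collect and equate the coefficient of each monomial $p_1^ip_2^j$. This yields a linear PDE system for the six coefficient functions $g_{km}$ of the quadratic part, together with algebraic constraints relating the admissible exponents to $\alpha,\beta,d$.

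Next I would organize the system by total degree in the momenta. The top-degree relations involve only the leading coefficients $g_{02},g_{11},g_{20}$ and amount to the Killing-tensor condition for the symmetric tensor they form, relative to the constant null metric attached to the kinetic term $2p_1p_2$; in two dimensions that space is finite-dimensional, so the leading symbol is pinned down up to a handful of constants. The intermediate-degree relations then couple this symbol to the potential through a Bertrand--Darboux--type compatibility condition, while the demand that $H_2$ realize the logarithmic relation \eqref{assum-2} forces the quadratic first integral $I_2$ to be an eigenfunction of the master symmetry $\{\,\cdot\,,\rho\}$. Here I would invoke \eqref{alg-g1}, which already gives $\{H_1,\rho\}=-(\alpha+\beta)H_1$, so that a suitable product $H_2=I_2^{a}H_1^{b}$ can be normalized to $\rho$-eigenvalue $\kappa_2=1$, as required by \eqref{alg-g2}.

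I would then solve the compatibility condition for the specific monomial $V$. Because $V$ is a pure monomial, the first-integral requirement $\{I_2,H_1\}=0$ degenerates into a small number of mutually exclusive ways of balancing the exponents of $q_1$ and $q_2$; each balance fixes one relation among $\alpha,\beta,d$ and simultaneously selects which leading symbol survives. Carrying out this finite case analysis should reproduce exactly the four rows of the table, namely $\alpha=-\beta$, $\alpha=-2\beta(d+1)$, $\alpha=-\tfrac{\beta(d+1)}{2}$ and $\alpha=\beta$, with the corresponding $I_2$ and $H_2=I_2^{a}H_1^{b}$. The row $\alpha=-\beta$ is genuinely degenerate: there $\rho=\alpha(q_1p_1-q_2p_2)$ is proportional to $I_2=q_1p_1-q_2p_2$, so $\{I_2,\rho\}=0$, the eigenvalue vanishes, and no nontrivial power normalizes to $\kappa_2=1$, which is why the table records $H_2=0$.

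The main obstacle I expect is completeness rather than the production of the four solutions: proving that these are the \emph{only} nontrivial solutions requires a disciplined walk through every branch in which a denominator or a leading coefficient could vanish (for instance $d=-1$, or coalescences of exponents that make two balancing conditions coincide), discarding the trivial solutions in which $V$ loses dependence on one variable, and identifying the pairs related by the permutation $(p_1,q_1)\leftrightarrow(p_2,q_2)$. Keeping track of these exclusions, and verifying in each surviving case that the normalized $H_2=I_2^{a}H_1^{b}$ satisfies \emph{both} relations in \eqref{alg-g2}, is where the real bookkeeping lies; the individual coefficient computations are routine once the system has been sorted by momentum degree.
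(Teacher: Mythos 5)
Your proposal is correct and follows essentially the same route as the paper, which offers no more detail than ``substitute $H_2$ of the form~\eqref{anzatse} with $M=2$ and the explicit $P'$ from~\eqref{poi-2} into~\eqref{par-eq} and solve the resulting system'': you spell out that direct computation, correctly identify the $\rho$-eigenvalue normalisation that produces the powers $H_2=I_2^aH_1^b$ with $\kappa_2=1$, and correctly explain the degeneracy $H_2=0$ in case~1 where $\rho$ is proportional to $I_2$ and $\alpha+\beta=0$. The completeness bookkeeping you flag is indeed the only substantive content of a proof here, and the paper itself leaves it implicit.
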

\vskip0.2truecm
\begin{remark}
  In the case 2 and 3 $H_2=I_2$ satisfies equations \eqref{par-eq}
  with $\kappa_2=-\alpha$, and $\kappa_2=-2\alpha$,
  respectively. Similarly, in case 4 $H_2=I_2^{2\alpha}/ H_1$
  satisfies equations \eqref{par-eq} with $\kappa_2=2\alpha$.
\end{remark}
For each case in the above proposition we suppose that  $I_1=H_1$ and $I_2$ are
the action variables, and $\omega_1$, $\omega_2$ are the corresponding
angle variables.   Of course, this must be checked case by case.

 In these variables the equations of motion have the form
\begin{equation*}
  \dot I_1=0,  \quad \dot I_2 =0 , \quad  \dot \omega_1 =1  \quad  \dot \omega_2=0.
\end{equation*}
Thus, $\omega_2$ is a first integral functionally independent with
$I_1$ and $I_2$ \cite{ts07f}. The algebraic relations \eqref{alg-g1}, \eqref{alg-g12}, and \eqref{alg-g2} allow us to express $\rho$ in term of the
action-angle variables. In fact, we have
\begin{equation}\label{rho-w}
  \begin{array}{lll}
    \mtext{for case 1:} & \{\rho, I_2\}=0,\qquad & \rho=I_2,  \\
    \\
    \mtext{for case 2:} & \{\rho, I_2\}=\alpha I_2,\qquad & \rho=-(\alpha+\beta)I_1\omega_1-\alpha I_2\omega_2+F(I_1, I_2),  \\
    \\
    \mtext{for case 3:} & \{\rho, I_2\}=2\alpha I_2 & \rho=-(\alpha+\beta)I_1\omega_1-2\alpha I_2\omega_2+F(I_1, I_2),\\ \\
    \mtext{for case 4:}&\{\rho, I_2\}=0 & \rho=-(\alpha+\beta)I_1\omega_1+F(I_1, I_2)\,.
  \end{array}
\end{equation}
Here angle variables $\omega_{k}$ are defined up to canonical
transformations $\omega_k\to \omega_k+f(I_k)$, for $k=1,2$.

Later we show that families of systems given by Theorem~\ref{prop12} are
super-integrable for specific values of parameter $d$.  That is, for
these values of $d$ the systems admit three functionally independent
first integrals $H_1$, $I_2$ and $H_3$.  Some examples of
super-integrable systems with additional integral $H_3$ of third,
fourth and sixth order in momenta may be found in \cite{ts07f,ts08a}.

According to \cite{ts07f,ts08a,ts08b}, additional integral of motion
$H_3$ is a function on the action variables $I_1$, $I_{2}$ and one
angle variable $\omega_2$.  Below we prove that if we know additional
polynomial integral of motion $H_3$, then, using complete algebra of
integrals of motion, we can get the angle variable $\omega_2$
algebraically.

Some examples of super-integrable systems with additional integral
$H_3$ of third, fourth and sixth order in momenta may be found in
\cite{ts07f,ts08a}.

\begin{remark}
  Usually, see, e.g., \cite{cd06,mw07,ts08a,ts08b}, in the theory of
  super-integrable systems we study polynomial Poisson algebra of
  integrals of motion.  In this paper we add some extra generator
  $\rho$ to this algebra. Let us recall that $\rho$ plays the central
  role in our bi-Hamiltonian construction \eqref{assum-1}.
\end{remark}

\section{Case 1. Radial potential}
\label{sec:c1}
In this section we consider Hamiltonian system corresponding to case 1
in Theorem~\ref{prop12} and in particular we look for values of
$d$ for which the system is super-integrable.

We introduce new canonical variables $(r, \varphi, p_r, p_{\varphi})$
defined by the following equations
\begin{equation}
  q_1=r(\cos\varphi-\rmi \sin\varphi)=re^{-\rmi \varphi},\qquad
  q_2=r(\cos\varphi+\rmi \sin\varphi)=re^{\rmi \varphi},
\end{equation}
and
\begin{equation}
  p_1=\frac{e^{\rmi\varphi}}{2}\left(p_r+\frac{\rmi}{r}p_{\varphi}\right),\qquad
  p_2=\frac{e^{-\rmi\varphi}}{2}\left(p_r-\frac{\rmi}{r}p_{\varphi}\right).
\end{equation}
Hamiltonian $H_1$ and first integral $J_2:=-\rmi I_2$ in new variables
read
\begin{equation}
  H_1=\frac{p_r^2}{2}+\frac{p_{\varphi}^2}{2r^2}+V(r),\quad
  V(r)=r^{2d},\quad  J_2 =p_{\varphi}.
  \label{eq:radio}
\end{equation}

Thus after this transformation we obtain Hamiltonian of the classical
problem of the motion in the field of the radial force written in
polar coordinates. In the Cartesian coordinates $x_1=r\cos\varphi$,
and $x_2=r\sin\varphi$, Hamiltonian $H_1$ and the first integral $J_2$
read
\begin{equation}
  H_1=\frac{1}{2}(y_1^2+y_2^2)+r^{2d},\qquad
  r^2=x_1^2+x_2^2,\qquad J_2=y_1x_2-y_2x_1,
  \label{eq:radstan}
\end{equation}
where $y_1$ and $y_2$ are the momenta conjugated with $x_1$ and
$x_2$. Thus we obtain a natural Hamiltonian system with the standard form
of the kinetic energy and the radial potential.  The problem of the
existence of one more functionally independent first integral of this
system is the question about maximal super-integrability. For natural
Hamiltonian systems with homogeneous potentials possessing a proper
Darboux point necessary conditions of the maximal super-integrability
with first integrals which are meromorphic functions of coordinates
and momenta were formulated in \cite{mp:08::c}. These conditions were
obtained from the analysis of invariants of the differential Galois
group of variational equations.  The considered radial potential
possesses infinitely many proper Darboux points and results of
\cite{mp:08::c} can be directly applied. In fact, in Section 3 of this
paper Hamiltonian given by $H_1$ in \eqref{eq:radstan} was considered
and the following result was proved.
\begin{theorem}
  The radial potential \eqref{eq:radio} is super-integrable iff
  $d=-1/2$ or $d=1$.
\end{theorem}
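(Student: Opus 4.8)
The plan is to prove the two implications separately: for the forward direction I would invoke the differential Galois obstruction to maximal super-integrability established in \cite{mp:08::c}, whose hypotheses fit a natural system with a homogeneous potential carrying proper Darboux points, while for the converse I would exhibit the missing first integral by hand. First I record the Darboux data of $V(r)=r^{2d}$, i.e. $V=(x_1^2+x_2^2)^d$ of degree $k:=2d$. Differentiating,
\[
\nabla V(\vx)=2d\,r^{2d-2}\vx,
\]
so for $d\neq 0$ every non-zero point is a \emph{proper} Darboux point with $\gamma=2d\,r^{2d-2}$; this is the ``infinitely many proper Darboux points'' noted above. Normalising $\gamma=1$ and using $V''(\vx)=2d\,r^{2d-4}\big[(2d-2)\,\vx\vx^{T}+r^{2}\id_{2}\big]$, the Hessian at any such point $\vd$ has the Euler eigenvalue $k-1=2d-1$ in the direction $\vd$ and, by the rotational symmetry, the \emph{constant} nontrivial eigenvalue $\lambda=1$ orthogonal to $\vd$. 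The decisive feature is that $\lambda$ is the same at every Darboux point, so every obstruction will depend on $k=2d$ alone.

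Next I would reduce the normal variational equation along the collinear solution $\vx(t)=\varphi(t)\vd$, $\ddot\varphi=-\varphi^{k-1}$, to a Gauss hypergeometric equation by passing to the energy integral of $\varphi$ as in the Appendix; because $\lambda=1$ its local exponent differences become explicit functions of $k$ only. Note that $\eta=\varphi$ already solves this normal equation $\ddot\eta=-\varphi^{k-2}\eta$ (since $\ddot\varphi=-\varphi^{k-2}\varphi$), so the corresponding differential Galois group is triangular and Liouville integrability holds for every $d$, consistent with the conservation of $J_{2}$ for an arbitrary central force. Maximal super-integrability is the stronger demand that the second, quadrature solution be algebraic as well, i.e. that the Galois group of the hypergeometric equation be \emph{finite}; inserting the exponent differences into Schwarz's list as recalled in the Appendix leaves, among real $d$, only $k=-1$ and $k=2$, that is $d=-1/2$ and $d=1$. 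The degenerate degree $k=2$, where $\gamma=1$ cannot be imposed and the reduction breaks down, is inspected directly and is the harmonic oscillator.

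For the converse I would display an extra integral in each case. When $d=1$ the system \eqref{eq:radstan} is the isotropic harmonic oscillator $H_{1}=\tfrac12(y_1^2+y_2^2)+x_1^2+x_2^2$, and $\tfrac12 y_1^2+x_1^2$ (a diagonal entry of the Fradkin tensor) is conserved and functionally independent of $H_1$ and $J_2$. When $d=-1/2$ the potential is $r^{-1}$ and the system is the Kepler problem, for which the Laplace--Runge--Lenz component $y_2 J_2-x_1/r$ is a first integral independent of $H_1$ and $J_2$. In either case one obtains three functionally independent integrals, so the system is maximally super-integrable, which closes the equivalence.

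I expect the crux to be the middle step, namely turning the abstract criterion of \cite{mp:08::c} into the sharp arithmetic constraint $k\in\{-1,2\}$. One must perform the hypergeometric reduction correctly, read off the exponent differences with $\lambda=1$ fixed, and then apply the finite-monodromy classification \emph{without retaining spurious entries}: the linearised (Schwarz) condition alone can be met by further powers for which the apsidal angle is a rational multiple of $\pi$ yet the orbits fail to close for all initial data, so a Bertrand-type orbit-independence argument (or the higher variational equations used in \cite{mp:08::c}) is needed to discard them. Handling the resonant degree $k=2$ separately is a minor but unavoidable extra case.
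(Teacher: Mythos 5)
Your proposal follows essentially the same route as the paper: the necessity of $d\in\{-1/2,1\}$ is delegated to the super-integrability obstruction of \cite{mp:08::c} (whose Section~3 treats exactly this radial Hamiltonian), and sufficiency is settled by exhibiting an explicit third integral — your Fradkin component $\tfrac12 y_1^2+x_1^2$ and Laplace--Runge--Lenz component $y_2J_2-x_1/r$ are real-coordinate counterparts of the paper's $H_3=2p_2^2+q_1^2$ and $H_3=(p_1-p_2)(p_1q_1-p_2q_2)-\tfrac{q_1+q_2}{2\sqrt{q_1q_2}}$. The only difference is that you also sketch the internals of the cited result (proper Darboux points with constant transverse eigenvalue $\lambda=1$, reduction of the normal variational equation to a reducible hypergeometric equation, algebraicity via the Kimura/Schwarz criterion), which the paper leaves entirely to the reference.
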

Both of distinguished cases are indeed super-integrable with the
additional first integrals
\begin{equation}
  H_3 =(p_1 - p_2) (p_1 q_1 - p_2 q_2) - \dfrac{q_1 + q_2}{2 \sqrt{q_1q_2}} ,
\end{equation}
for $d=-1/2$, and
\begin{equation}
  H_3 =2 p_2^2 + q_1^2 ,
\end{equation}
for $d =1$, respectively.  This result reminds us the classical
Bertrand theorem~\cite{05.0470.01} which states that the only radial
potentials $V(r)=\alpha r^k$, for which all bounded orbits are
periodic, are those with $k=-1$ and $k=2$. The condition that all
bounded orbits of a Hamiltonian system are periodic means that the
system is degenerated, i.e., all its invariant tori are one
dimensional.  Such degeneration appears if the system is maximally
super-integrable.

\section{Case 2}
\label{sec:c2}
In this section we consider Hamiltonian system corresponding to case 2
in Theorem~\ref{prop12}. In particular, we distinguish all values
of $d$ for which the system is super-integrable. We achieve this
thanks to partial separation of variables and a direct integration of
equations of motion. In \cite{Nakagawa:02::} this system is called
quasi-separable and in fact we were able to find only one separation
relation.
\subsection{Algebraic super-integrability}
One can easy observe that the syzygi exists between $H_1$ and $I_2$
\begin{equation}
  \label{eq:i2c2}
  I_2=-2p_1^2q_1+q_2H_1,
\end{equation}
but this is not yet a separation relation. We have to make the change of
independent variable $t\to \tau$ defined by
$\mathrm{d}\tau/\mathrm{d}t=1/\sqrt{q_1}$, that transforms our
Hamiltonian system into
\begin{equation}
  \begin{split}
    &q_1'=2p_2\sqrt{q_1},\qquad p_1'=\frac{1}{2}\frac{q_2^d}{q_1},\\
    & q_2'=2p_1\sqrt{q_1},\qquad p_2'=-dq_2^{d-1},
  \end{split}
\end{equation}
where ${}'$ denotes differentiation with respect to $\tau$. This
transformation changes momenta and now we obtain one separation
relation
\[
I_2=-\frac{1}{2}q_2'^2+q_2H_1.
\]
It gives
\[
\int\frac{\mathrm{d}q_2}{\sqrt{2(q_2H_1-I_2)}}=\tau+\beta_1,
\]
where $\beta_1$ is a constant of integration.  Here we chosen square
root with sign $+$, that correspond $p_1>0$. This integral is
elementary and we obtain
\begin{equation}
  H_1^{-1}\sqrt{2(q_2H_1-I_2)}=\tau +\beta_1.
  \label{eq:natau}
\end{equation}
In further calculations we choose $\beta_1=0$, and then we obtain
\begin{equation}
  q_2=\frac{2I_2+H_1^2\tau^2}{2H_1}.
  \label{eq:naq2}
\end{equation}
Next, we substitute this function into integral $H_1$, that is, into
equation
\[
H_1=\dfrac{1}{2q_1}\frac{\mathrm{d}q_1}{\mathrm{d}\tau}\frac{\mathrm{d}q_2}{\mathrm{d}\tau}+
\frac{q_2^d}{\sqrt{q_1}},
\]
and  obtain
\begin{equation}
  \frac{H_1\tau}{2q_1}\frac{\mathrm{d}q_1}{\mathrm{d}\tau}+\frac{(2I_2+H_1^2\tau^2)^d}{(2H_1)^d\sqrt{q_1}}=H_1.
  \label{eq:naqq1}
\end{equation}
Then, we introduce new dependent variable $v=v(x)$ defined by
\[
q_1=\left(\frac{I_2}{H_1}\right)^{2d}\dfrac{v^2}{H_1^2}.
\]
Making this substitution in equation \eqref{eq:naqq1} and dividing
it by $H_1$ we obtain
\begin{equation}
  \dfrac{\tau}{v}\dfrac{\mathrm{d} v}{\mathrm{d}\tau}+\dfrac{(2I_2+H_1^2\tau^2)^d}{(2I_2)^dv}=1.
  \label{eq:robol}
\end{equation}
Finally, we make the following change of independent variable
\[
\tau \longmapsto x=-\dfrac{H_1^2\tau^2}{2I_2},
\]
and from \eqref{eq:robol} we obtain
\begin{equation}
  2x\dfrac{\mathrm{d} v}{\mathrm{d}x}-v+(1-x)^d=0.
  \label{eq:fini}
\end{equation}
This is a non-homogeneous linear equation. Dividing it by $(1-x)^d $
and differentiating with respect to $x$ we obtain homogeneous equation
of the second order
\begin{equation}
  x(1-x)\dfrac{\mathrm{d}^2 v}{\mathrm{d}x^2}+
  \dfrac{1-(1-2d)x}{2}\dfrac{\mathrm{d} v}{\mathrm{d}x}-\dfrac{d}{2}v=0.
  \label{eq:hipek}
\end{equation}
In this equation we recognise immediately Gauss hypergeometric
differential equation
\begin{equation}
  x(1-x)w''+[c-(a+b+1)x]w'-abw=0,
  \label{eq:hipcio}
\end{equation}
with parameters
\[
a=-\dfrac{1}{2},\qquad b=-d,\qquad c=\dfrac{1}{2}.
\]
The general solution of equation \eqref{eq:hipek} is following
\begin{equation}
  \label{eq:vsol}
  v(x)=\beta_2\sqrt{x}+\beta_3\phantom{\vert}_2F_{1}\left(-\frac{1}{2},-d,
    \frac{1}{2},x\right),
\end{equation}
where $\beta_2$, $\beta_3$ are constants of integration, and
$\phantom{\vert}_2F_{1}(a,b,c;x)$ denotes the hypergeometric function
with parameters $a$, $b$ and $c$. Let us recall that
$\phantom{\vert}_2F_{1}(a,b,c;x)$ is a solution of the hypergeometric
differential equation \eqref{eq:hipcio} which is holomorphic at the
origin, see Appendix.

Substituting \eqref{eq:vsol} into non-homogeneous
equation~\eqref{eq:fini} we obtain
\begin{equation}
  \label{eq:cons}
  -(\beta_3-1)(1-x)^d=0.
\end{equation}
Hence, $\beta_3=1$, and solution of equation~\eqref{eq:fini} is given
by
\begin{equation}
  \label{eq:vsolnh}
  v(x)=\beta_2\sqrt{x}+\phantom{\vert}_2F_{1}\left(-\frac{1}{2},-d,
    \frac{1}{2},x\right).
\end{equation}
Next, we return to the original dependent and independent variables and
express $\tau$ as a function of $q_2$, see~\eqref{eq:naq2}. After this
transformation the solution of \eqref{eq:naqq1} takes the following form
\begin{equation}
  \label{eq:q1}
  q_1=\frac{1}{H_1^2}\left[\beta_2\sqrt{2(H_1q_2-I_2)}+
    \left(\frac{I_2}{H_1}\right)^d \phantom{\vert}_2F_{1}\left(-\frac{1}{2},-d,
      \frac{1}{2},1-\frac{H_1}{I_2}q_2\right)\right]^2.
\end{equation}
From the obtained expression one can easily find  the following
first integral
\begin{equation}
  H_3=\sqrt{2}\beta_2=
\frac{\sqrt{q_1}\,H_1-\left(\frac{I_2}{H_1}\right)^d
\phantom{\vert}_2F_{1}\left(-\frac{1}{2},-d,
      \frac{1}{2},1-\frac{H_1}{I_2}q_2\right)}{\sqrt{q_2H_1-I_2}}.
  \label{eq:h3fam2}
\end{equation}
We prove the following theorem.
\begin{theorem}
  \label{thm:sup1}
  Hamiltonian system given by
  \begin{equation*}
    H=2p_1p_2 +\frac{q_2^d}{\sqrt{q_1}}
  \end{equation*}
  is super-integrable with algebraic additional first
  integral~\eqref{eq:h3fam2} iff $d=p$, or $d=-(2p-1)/2$, for a
  positive integer $p$.
\end{theorem}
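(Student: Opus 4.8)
The plan is to reduce the statement to a purely function-theoretic question: for which $d$ is the Gauss hypergeometric function $z\mapsto {}_2F_1(-\tfrac12,-d,\tfrac12;z)$ an algebraic function of $z$? Indeed, in the explicit expression \eqref{eq:h3fam2} every factor other than ${}_2F_1(-\tfrac12,-d,\tfrac12;1-H_1q_2/I_2)$ --- namely $\sqrt{q_1}$, $\sqrt{q_2H_1-I_2}$, and the prefactor $(I_2/H_1)^d$ --- is an algebraic function of the phase-space variables as soon as $d\in\Q$; moreover $(I_2/H_1)^d$ fails to be algebraic when $d\notin\Q$ (its monodromy is multiplication by $\rme^{2\pi\rmi d}$, of infinite order), so $d\in\Q$ is necessary. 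Since the argument $z=1-H_1q_2/I_2$ is a non-constant rational function on $\R^4$, I would conclude that $H_3$ is algebraic precisely when the one-variable function ${}_2F_1(-\tfrac12,-d,\tfrac12;\cdot)$ is algebraic.

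For the two families claimed I would simply exhibit closed forms. When $d=p$ is a positive integer the series ${}_2F_1(-\tfrac12,-p,\tfrac12;z)$ terminates (the Pochhammer factor $(-p)_n$ vanishes for $n>p$), so $H_3$ is a polynomial in $z$ and hence algebraic. When $d=-(2p-1)/2$ I would apply the Euler transformation ${}_2F_1(-\tfrac12,-d,\tfrac12;z)=(1-z)^{1+d}\,{}_2F_1(1,d+\tfrac12,\tfrac12;z)$: here $d+\tfrac12=1-p\le 0$, so the transformed series again terminates, and since $1+d=(3-2p)/2\in\Q$ the factor $(1-z)^{1+d}$ is algebraic. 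Thus both families give algebraic $H_3$; this is the easy direction.

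The converse is where the real work lies, and I would base it on the observation, already visible in \eqref{eq:vsol}, that the second solution of the hypergeometric equation \eqref{eq:hipek} is the elementary function $\sqrt{x}$. Consequently \eqref{eq:hipek} is reducible, and this is the structural fact that reduces the question from the full Schwarz classification of finite monodromy to the algebraicity of a single abelian integral. Recovering the holomorphic solution by reduction of order, with Wronskian $W\propto x^{-1/2}(1-x)^d$, gives up to the algebraic factor $\sqrt{x}$ and constants
\[
{}_2F_1(-\tfrac12,-d,\tfrac12;x)\ \sim\ \sqrt{x}\int x^{-3/2}(1-x)^d\,\rmd x .
\]
Hence the hypergeometric function is algebraic if and only if this integral is an algebraic function. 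By Chebyshev's theorem on binomial differentials, $\int x^{-3/2}(1-x)^d\,\rmd x$ is even elementary only when $d\in\Z$ or $d-\tfrac12\in\Z$, which already confines the analysis to integers and half-integers.

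The main obstacle is the sharp separation of the genuinely algebraic cases from the merely elementary ones inside these two arithmetic progressions. I would dispatch it by a local residue and branching analysis of $x^{-3/2}(1-x)^d\,\rmd x$: for $d\in\Z_{\ge 0}$ the integrand is $x^{-3/2}$ times a polynomial and integrates to an algebraic function with no logarithmic term (the exponent $-1$ never occurs), recovering the family $d=p$ (with $d=0$ the trivial $q_2$-independent potential excluded from the statement); for $d\in\Z_{<0}$ a nonzero residue at $x=1$ produces a $\log(1-x)$, so the integral is transcendental; for the half-integer case the Euler transform above shows algebraicity exactly when the transformed series terminates, that is $d+\tfrac12\le 0$, giving $d=-(2p-1)/2$, whereas for $d\in\tfrac12+\Z_{\ge 0}$ the substitution $x=\sin^2\theta$ exhibits a nonzero multiple of $\theta=\arcsin\sqrt{x}$ and hence a transcendental term. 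Collecting the surviving cases yields exactly $d=p$ and $d=-(2p-1)/2$ for positive integers $p$. I expect the delicate point to be verifying that the obstructing residue (for negative integers) and the $\arcsin$-coefficient (for positive half-integers) are genuinely nonzero, since these are precisely the computations that exclude any sporadic extra solution.
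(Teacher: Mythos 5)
Your argument is correct, but for the hard (``only if'') direction it takes a genuinely different route from the paper's. Both proofs begin identically: reduce to the algebraicity of $x\mapsto{}_2F_1(-\tfrac12,-d,\tfrac12;x)$ and note that equation~\eqref{eq:hipcio} is reducible with the explicit algebraic solution $\sqrt{x}$. The paper then upgrades ``the holomorphic solution is algebraic'' to ``all solutions are algebraic'' and invokes Kimura's classification (Theorem~\ref{thm:kimura}), the decisive no-logarithm condition being checked combinatorially through Lemma~\ref{lem:iwa3}. You instead exploit reducibility analytically: reduction of order (your Wronskian $x^{-1/2}(1-x)^d$ is correct) writes the holomorphic solution as $\sqrt{x}\int x^{-3/2}(1-x)^d\,\rmd x$, Chebyshev's theorem on binomial differentials confines $d$ to $\Z\cup(\tfrac12+\Z)$, and explicit local computations eliminate the unwanted residues. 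This is more elementary and self-contained --- no Schwarz--Kimura machinery --- and it makes the obstruction concrete: your nonzero residue (for $d\in\Z_{<0}$) and nonzero secular coefficient (for $d\in\tfrac12+\Z_{\ge0}$) are exactly the ``logarithmic singularities'' the paper's third condition detects abstractly. The nonvanishing claims you flag as delicate do go through: the residue of $x^{-3/2}(1-x)^{-n}$ at $x=1$ equals $(-1)^n\binom{-3/2}{n-1}\neq0$, and with $x=\sin^2\theta$ the integrand for $d=(2q-1)/2$ becomes $2\cos^{2q}\theta/\sin^2\theta=2/\sin^2\theta-2\sum_{k=0}^{q-1}\cos^{2k}\theta$, whose secular coefficient $-2\sum_{k=0}^{q-1}\binom{2k}{k}4^{-k}\le-2$ is strictly negative, so a genuine $\arcsin\sqrt{x}$ survives. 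One shared soft spot (present in the paper too): deducing $d\in\Q$ from the prefactor $(I_2/H_1)^d$ should be done by treating $I_2/H_1$ and $1-H_1q_2/I_2$ as functionally independent quantities and fixing the second at a generic value where ${}_2F_1$ does not vanish; as written, a cancellation between two non-algebraic factors is not formally excluded. What the paper's route buys in exchange for its heavier machinery is portability: the same scheme transfers verbatim to Case~3 in Section~\ref{sec:c3}, where the reducible equation's algebraic solution is less convenient for reduction of order.
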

\begin{proof}
  The first integral~\eqref{eq:h3fam2} is algebraic iff the
  hypergeometric function $\phantom{\vert}_2F_{1}$ appearing in its
  definition is algebraic.  This implies that the hypergeometric
  equation~\eqref{eq:hipcio} has an algebraic solution holomorphic at
  the origin.

  For equation~\eqref{eq:hipcio} the differences of exponents at
  singularities are
  \begin{equation}
    \lambda=1-c=\frac{1}{2},\quad \nu=c-a-b=d+1,\quad \mu=b-a=-d+\frac{1}{2}.
  \end{equation}
  This is why, by Lemma~\ref{lem:redrie}, see Appendix, the Gauss
  hypergeometric equation~\eqref{eq:hipcio} is reducible.  In fact,
  for an arbitrary value of $d$ we have
  \begin{equation}
    \lambda+\mu+\nu=2,\quad -\lambda+\mu+\nu=1,\quad
    \lambda-\mu+\nu=2d+1,\quad \lambda+\mu-\nu=-2d.
    \label{eq:combinations0}
  \end{equation}
  Thus,  always at least one of this expressions is odd integer, and this
  proves our claim.

  As equation~\eqref{eq:hipcio} is reducible its one solution is
  algebraic.  In fact $w_1=\sqrt{x}$ is a solution of this equation.
  As we remarked, if first integral~\eqref{eq:h3fam2} is algebraic,
  then equation~\eqref{eq:hipcio} has a solution $w_{2}$ which is
  algebraic and holomorphic at the origin.  Obviously $w_2$ is
  linearly independent with $w_1$. In effect all solutions of
  equation~\eqref{eq:hipcio} are algebraic.

  Now we apply Theorem~\ref{thm:kimura} in order to find values of $d$
  for which equation \eqref{eq:hipcio} has only algebraic
  solutions. The first assumption of this theorem is that all
  exponents at all singularities are rational. For equation
  \eqref{eq:hipcio} exponents at $x=0$, $x=1$, and $x=\infty$ are
  given by
  \[
  \left\{0,\frac{1}{2}\right\},\qquad \left\{0,d+1\right\},\qquad
  \left\{-\frac{1}{2},-d\right\}.
  \]
  Thus, if all solutions of equation \eqref{eq:hipcio} are algebraic,
  then $d$ must be rational.

  The next assumption of the theorem is that among four
  numbers~\eqref{eq:combinations0} exactly two or four are odd
  integers. This condition implies that $d=p/2$, with $p\in\Z$.

  Now we have to check the presence of logarithms in local solutions
  of \eqref{eq:hipcio} around particular singularities. At first we
  assume that $p$ is even i.e. $p=2q$ and $d=q\in\Z$. In this case the
  difference of exponents is integer only for singularity $x=1$.  For
  $q\in\N$, the difference of exponents is $m=q+1-0=q+1$. Hence,
  according to Appendix, the set $\langle m\rangle$ is $\langle
  m\rangle=\{1,2,\ldots,q+1\}$, see \eqref{eq:mym}.  Now we have to
  check if
  \begin{equation}
    e_{1,1}+e_{0,i}+e_{\infty,j}\not\in \langle m\rangle,\quad\text{for}\quad i,j\in\{1,2\},
    \label{eq:diflogi}
  \end{equation}
  with $e_{1,1}=q+1$. But we see that $q+1+0-q=1\in \langle
  m\rangle$. Thus for $d=p/2=q$ positive integer solutions of
  \eqref{eq:hipcio} contain no logarithms. For $q$ negative integer
  difference of exponents is $m=0-(q+1)=-q-1$, so $\langle
  m\rangle=\{1,2,\ldots,-q-1\}$.  Now we have to check all conditions
  \eqref{eq:diflogi} with $e_{1,1}=0$. We obtain successively
  $0+0-1/2=-1/2\not\in \langle m\rangle$, $0+0-q=-q\not\in \langle
  m\rangle$, $0+1/2-1/2=0\not\in \langle m\rangle$, and
  $0+1/2-q=-q+1/2\not\in \langle m\rangle$. This means that if $q$ is
  a negative integer, then logarithmic terms appear in solutions of
  \eqref{eq:hipcio}. Similar considerations for $p=2q-1$ show that  logarithmic terms
  appear in local solutions of \eqref{eq:hipcio} around $x=\infty$  iff
  $d=p/2=(2q-1)/2$, with positive integer $q$.
  
When hypergeometric function $\phantom{\vert}_2F_{1}$ is an
  algebraic function of its argument, then $H_3$ defined in
  \eqref{eq:h3fam2} is an algebraic function.

  Finally, one can check by direct substitution that for $d=0$
  hypergeometric function with the above parameters is equal to 1, and
  this ends our proof.
\end{proof}
\subsection{Additional integrals polynomial in momenta}

Let us notice that in general the first integral given
by~\eqref{eq:h3fam2} for values of $d$ specified in
Theorem~\ref{thm:sup1} is not polynomial in the momenta.  However, we
can show the following.
\begin{lemma}
  If $d=p\in\N$, then
  \begin{equation}
    \label{eq:h3p1}
    \widetilde H_3= \widetilde H_{3,p}:= H_3H_1^p,
  \end{equation}
  with $H_3$ given by~\eqref{eq:h3fam2} is a first integral of the
  system generated by Hamiltonian
  \begin{equation*}
    H=H_1= 2p_1p_2 + \frac{q_2^p}{\sqrt{q_1}}.
  \end{equation*}
  Moreover, $ \widetilde H_3 $ is a polynomial in momenta and has
  degree $2p+1$.
\end{lemma}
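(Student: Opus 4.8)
The first-integral property is the easy half: $H_1$ is the Hamiltonian and $H_3$ of \eqref{eq:h3fam2} is a constant of the motion by its very construction (it equals $\sqrt2\,\beta_2$, the integration constant surviving the quadrature), so the product $\widetilde H_3 = H_3 H_1^{p}$ automatically Poisson-commutes with $H_1$. Hence the whole content of the lemma is the polynomiality of $\widetilde H_3$ and the computation of its degree. The plan is to strip the two square roots out of \eqref{eq:h3fam2} by means of the syzygy \eqref{eq:i2c2}, which reads $q_2H_1-I_2=2p_1^2q_1$.

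Using this identity, on the branch $p_1>0$ fixed in the derivation one has $\sqrt{q_2H_1-I_2}=\sqrt2\,p_1\sqrt{q_1}$, while the argument of the hypergeometric function becomes
\[
z:=1-\frac{H_1}{I_2}q_2=\frac{I_2-q_2H_1}{I_2}=-\frac{2p_1^2q_1}{I_2}.
\]
Forming $\widetilde H_3=H_3H_1^{p}$ and recalling $d=p$, so that $H_1^{p}(I_2/H_1)^{p}=I_2^{p}$, I would rewrite
\[
\widetilde H_3=\frac{H_1^{p+1}\sqrt{q_1}-I_2^{p}\,F}{\sqrt2\,p_1\sqrt{q_1}},
\qquad F:={}_2F_1\!\left(-\tfrac12,-p,\tfrac12,z\right),
\]
so that the only apparent obstructions to polynomiality are the factor $1/p_1$ and the denominator $\sqrt{q_1}$ (the latter is harmless, since we only claim polynomiality in the momenta).

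The key step is to see that $I_2^{p}F$ is an honest polynomial in $(p_1,p_2)$. Since the second parameter $-p$ is a non-positive integer, the series for $F$ terminates: $F=\sum_{k=0}^{p}a_kz^{k}$ with $a_0=1$. Substituting $z=-2p_1^2q_1/I_2$ and using $0\le k\le p$, the prefactor $I_2^{p}$ exactly clears every denominator $I_2^{k}$, giving
\[
I_2^{p}F=\sum_{k=0}^{p}a_k(-2q_1)^{k}\,p_1^{2k}\,I_2^{\,p-k},
\]
which is polynomial in the momenta. Consequently the numerator $N:=H_1^{p+1}\sqrt{q_1}-I_2^{p}F$ is polynomial, and I would then show $p_1\mid N$ by verifying $N|_{p_1=0}=0$: at $p_1=0$ one has $H_1=q_2^{p}/\sqrt{q_1}$, $I_2=q_2H_1$, $z=0$ and $F=1$, whence both $H_1^{p+1}\sqrt{q_1}$ and $I_2^{p}F$ collapse to $q_2^{p(p+1)}q_1^{-p/2}$. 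Writing $N=p_1M$ then yields $\widetilde H_3=M/(\sqrt2\,\sqrt{q_1})$, a polynomial in the momenta.

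Finally the degree: in $I_2^{p}F$ each summand has momenta-degree $2k+2(p-k)=2p$, whereas $H_1^{p+1}\sqrt{q_1}$ has degree $2p+2$ with top term $2^{p+1}p_1^{p+1}p_2^{p+1}\sqrt{q_1}$; thus there is no cancellation in top degree, $\deg N=2p+2$, and division by $p_1\sqrt{q_1}$ lowers the degree by one, giving $\deg_{\mathrm{mom}}\widetilde H_3=2p+1$ with leading term $2^{p}\sqrt2\,p_1^{p}p_2^{p+1}$. The genuinely delicate point is this last bookkeeping together with the clearing of denominators: it is precisely the factor $H_1^{p}$ (combining with $(I_2/H_1)^{p}$ into the prefactor $I_2^{p}$) that simultaneously removes the hidden poles in $z$ and supplies the divisibility by $p_1$, which is why $\widetilde H_3=H_3H_1^{p}$, and not $H_3$ alone, is polynomial.
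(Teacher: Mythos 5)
Your argument is correct and follows essentially the same route as the paper's own proof: both rewrite $\widetilde H_3=H_3H_1^p$ via the syzygy $q_2H_1-I_2=2p_1^2q_1$ as $\bigl(\sqrt{q_1}H_1^{p+1}-I_2^pW_p(-2p_1^2q_1/I_2)\bigr)/(\sqrt{2q_1}\,p_1)$, use that the terminating hypergeometric series makes $I_2^pW_p$ polynomial, establish divisibility by $p_1$ by matching the $p_1$-independent parts $q_2^{p(p+1)}q_1^{-p/2}$ of the two terms, and read off the degree $2p+1$. Your degree count is in fact slightly more explicit than the paper's, since you note that $I_2^pW_p$ has momentum degree only $2p$ and hence cannot cancel the top term of $\sqrt{q_1}H_1^{p+1}$.
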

\begin{proof}
  Clearly $\widetilde H_3$ is a first integral of $H$.  In order to
  show that $\widetilde H_3$ is a polynomial in momenta we notice that
  \begin{equation*}
    \widetilde H_3= H_3H_1^p=
    \frac{1}{\sqrt{2q_1}p_1}\left[\sqrt{q_1}H_1^{p+1}-
I_2^pW_p\left( \frac{-2p_1^2q_1}{I_2}\right)\right],
  \end{equation*}
  where
  \begin{equation*}
    W_p(x):= \phantom{\vert}_2F_{1}\left(-\frac{1}{2},-p,
      \frac{1}{2}, x\right), \qquad W_p(0) = 1,
  \end{equation*}
  is a polynomial of degree $p$, see Appendix.  Thus,
  \begin{equation*}
    B:=\sqrt{q_1}H_1^{p+1}-I_2^pW_p\left( \frac{-2p_1^2q_1}{I_2}\right),
  \end{equation*}
  is a polynomial in momenta of degree $2p+2$.  Moreover we have
  \begin{equation*}
    \sqrt{q_1}H_1^{p+1}=\frac{q_2^{p(p+1)}}{q_1^{p/2}}+\cdots, \mtext{and}
    I_2^pW_p\left( \frac{-2p_1^2q_1}{I_2}\right)=
\frac{q_2^{p(p+1)}}{q_1^{p/2}}+\cdots,
  \end{equation*}
  where dots denote terms of degree at lest one in $p_1$.  This shows
  that $B$ is divisible by $p_1$. In effect $\widetilde H_3$ is
  polynomial in the momenta and has degree $2p+1$.
\end{proof}
\begin{lemma}
  \label{lem:h3p2}
  If $d= -(2p-1)/2$ for a certain $p\in\N$, then
  \begin{equation}
    \label{eq:h3p2}
    \widetilde H_3= \widetilde H_{3,p}:= H_3I_2^{p}/H_1,
  \end{equation}
  with $H_3$ and $I_2$ given by~\eqref{eq:h3fam2} and \eqref{eq:i2c2}
  is a first integral of system generated by Hamiltonian
  \begin{equation*}
    H=H_1= 2p_1p_2 + \frac{q_2^p}{\sqrt{q_1}}.
  \end{equation*}
  Moreover $ \widetilde H_3 $ is a polynomial in momenta and has
  degree $2p-1$.
\end{lemma}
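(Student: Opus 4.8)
The first assertion is immediate and parallels the preceding lemma: $H_1$, $I_2$ and $H_3$ are first integrals (the first two by Theorem~\ref{prop12}, the last by Theorem~\ref{thm:sup1}), so any rational combination of them is again a first integral, in particular $\widetilde H_3 = H_3 I_2^p/H_1$. The real content is that $\widetilde H_3$ is a \emph{polynomial} in the momenta of degree $2p-1$, and the plan is to reduce \eqref{eq:h3fam2} to an explicit expression in which every fractional power of $q_1$ and $q_2$ cancels.

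The starting point is the syzygy \eqref{eq:i2c2}. It gives $q_2 H_1 - I_2 = 2p_1^2 q_1$, so that $\sqrt{q_2 H_1 - I_2} = \sqrt{2}\,p_1\sqrt{q_1}$ on the branch fixed in \eqref{eq:natau}, and the argument of the hypergeometric function in \eqref{eq:h3fam2} becomes
\[
x := 1 - \frac{H_1}{I_2}\,q_2 = -\frac{2p_1^2 q_1}{I_2}, \qquad 1-x = \frac{H_1 q_2}{I_2}.
\]
For $d=-(2p-1)/2$ I would apply Euler's transformation,
\[
\phantom{\vert}_2F_{1}\!\left(-\tfrac12,-d,\tfrac12;x\right) = (1-x)^{1+d}\,\phantom{\vert}_2F_{1}\!\left(1,1-p,\tfrac12;x\right),
\]
and observe that, since $1-p$ is a non-positive integer, the transformed function $U_p(x):=\phantom{\vert}_2F_{1}(1,1-p,\tfrac12;x)$ terminates and is a polynomial of degree $p-1$ with $U_p(0)=1$. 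Substituting $1-x=H_1 q_2/I_2$ and collecting exponents, the fractional powers conspire into $(I_2/H_1)^d(1-x)^{1+d} = (H_1/I_2)\,q_2^{d+1}$, so that \eqref{eq:h3fam2} collapses to
\[
H_3 = \frac{H_1}{\sqrt{2}\,p_1}\left[1 - \frac{G}{I_2}\,U_p(x)\right], \qquad G := \frac{q_2^{d+1}}{\sqrt{q_1}},
\]
whence $\widetilde H_3 = H_3 I_2^p/H_1 = \dfrac{1}{\sqrt{2}\,p_1}\bigl[I_2^p - G\,I_2^{p-1}U_p(x)\bigr]$.

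It then remains to show that the bracket is a polynomial in the momenta divisible by $p_1$. The decisive observation is that $G$ is itself polynomial: combining \eqref{eq:i2c2} with $H_1 = 2p_1p_2 + q_2^d/\sqrt{q_1}$ yields $G = I_2 + 2p_1^2q_1 - 2p_1p_2q_2$. Since $x=-2p_1^2q_1/I_2$ and $U_p$ has degree $p-1$, the product $I_2^{p-1}U_p(x)$ expands as $\sum_{j=0}^{p-1}c_j(-2)^j p_1^{2j}q_1^j I_2^{\,p-1-j}$, a genuine polynomial in $p_1,q_1,I_2$, and multiplying by the polynomial $G$ keeps it polynomial. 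Viewing the bracket as a polynomial in $p_1$ through $I_2 = G + 2p_1p_2q_2 - 2p_1^2q_1$, I would evaluate at $p_1=0$: there $x=0$, $U_p(0)=1$, and $G|_{p_1=0} = I_2|_{p_1=0} = q_2^{d+1}/\sqrt{q_1}$, so the bracket reduces to $I_2^p - I_2\cdot I_2^{p-1}=0$. By the factor theorem it is divisible by $p_1$, and $\widetilde H_3$ is therefore polynomial in the momenta.

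Finally, the degree is read off directly: $I_2 = G + 2p_1(p_2q_2 - p_1q_1)$ has momentum-degree two, so $I_2^p$ has degree $2p$ while the correction $G\,I_2^{p-1}U_p(x)$ has degree $2p-2$; the bracket thus has degree $2p$, its top part $2^p p_1^p(p_2q_2-p_1q_1)^p$ surviving uncancelled, and after division by $p_1$ one obtains $\deg\widetilde H_3 = 2p-1$, as claimed. I expect the only delicate point to be the branch bookkeeping in the middle step, namely verifying that the several fractional powers of $q_1$, $q_2$, $I_2$ and $H_1$ recombine exactly into the single polynomial $G$; the divisibility and degree counts are then routine.
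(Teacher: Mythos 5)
Your proof is correct and follows essentially the same route as the paper: both rest on the identity $\phantom{\vert}_2F_{1}\left(-\tfrac12,-d,\tfrac12;x\right)=(1-x)^{d+1}\phantom{\vert}_2F_{1}\left(1,1-p,\tfrac12;x\right)$ with the second factor a degree-$(p-1)$ polynomial, followed by substitution through the syzygy \eqref{eq:i2c2}, divisibility of the resulting bracket by $p_1$, and the degree count. The only cosmetic differences are that you invoke Euler's transformation where the paper derives the same identity by substituting $w=(1-x)^{-(2p-3)/2}u(x)$ into the hypergeometric equation, and that your evaluation at $p_1=0$ makes explicit the cancellation the paper leaves as ``easy to notice.''
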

\begin{proof}
  First we show that for the assumed values of $d$ we have
  \begin{equation*}
    \phantom{\vert}_2F_{1}\left(-\frac{1}{2},-d,
      \frac{1}{2}, x\right) = \phantom{\vert}_2F_{1}\left(-\frac{1}{2},\frac{1}{2}(2p-1),
      \frac{1}{2}, x\right) = (1-x)^{-(2p-3)/2}W_{p-1}(x), \qquad
  \end{equation*}
  where $W_{p-1}(x)$ is a polynomial of degree $p-1$, and
  $W_{p-1}(0)=1 $.

  In fact, for an arbitrary $p$, hypergeometric
  equation~\eqref{eq:hipcio} has in a neighbourhood of $x=1$ a local
  solution of the form $w= (1-x)^{-(2p-3)/2}u(x)$ where $u(x)$ is a
  function holomorphic at the origin and $u(0)=1$.  This is so,
  because $-(2p-3)/2$ is an exponent at $x=1$ for this equation.  In
  order to find $u(x)$ we make substitution $w= (1-x)^{-(2p-3)/2}u(x)$
  in equation~\eqref{eq:hipcio}.  The obtained equation is again a
  hypergeometric equation. Its solution holomorphic at the origin is
  \begin{equation}
    \label{eq:u}
    u(x)=  \phantom{\vert}_2F_{1}\left(1,1-p,
      \frac{1}{2}, x\right).
  \end{equation}
  As $p$ is a positive integer, $u(x)$ is a polynomial of degree $p-1$.

  Now, we have
  \begin{equation*}
    \label{eq:}
    \widetilde H_3:= H_3I_2^{p}/H_1= \frac{1}{\sqrt{2q_1}\,p_1}\left[
      \sqrt{q_1} I_2^p -
q_2^{-(2p-3)/2} I_2^{p-1}W_{p-1}\left( -\frac{p_1^2q_1}{I_{2}}\right)  \right].
  \end{equation*}
  As in the proof of the previous lemma it is easy to notice that
  expression in the square bracket is a polynomial in momenta which is
  divisible by $p_1$, and, as a result, $\widetilde H_3$ is a
  polynomial with respect to the momenta of degree $2p-1$.
\end{proof}
\subsection{Examples}
Here we consider two examples of super-integrable systems given by
Theorem~\ref{thm:sup1}.  Let us take $d=p=2$. In this case the
Hamiltonian function has the form
\[
H=H_1=2 p_1 p_2 + \frac{q_2^2}{\sqrt{q_1}}.
\]
Then, according Theorem~\ref{prop12}, first integral $I_2$ is
following
\[
I_2=2 p_1 (p_2 q_2-p_1 q_1)+\frac{q_2^3}{\sqrt{q_1}}.
\]
The second additional first integral
$\widetilde{H}_3=\widetilde{H}_{3,p}$ which guarantees
the super-integrability is given by \eqref{eq:h3p1}. For $p=2$, it has, up
to a multiplicative constant, the following form
\[
\widetilde{H}_3=q_1^{-1}\left[-16 p_1^3 q_1^{5/2} + 3 p_2 q_2^4 + 12
  p_1^2 p_2 q_1 (p_2^2 + 2 \sqrt{q_1}\, q_2) + 12 p_1 (p_2^2
  \sqrt{q_1}\, q_2^2 + q_1 q_2^3) \right].
\]
Moreover, there exists other first integral $H_4$ of the form
\[
H_4=-18 p_1 p_2^3+q_1^{-1/2}[48 p_1^2 q_1^2 - 72 p_1 p_2 q_1 q_2 - 9
p_2^2 q_2^2]-32 q_2^3.
\]
It is functionally independent with $H_1$ and $I_2$.  Integrals $H_1$,
$I_2$, $ \widetilde{H}_3$ and $H_4$ are algebraically dependent as we
have the following identity
\[
H_1^3 H_4 + \widetilde{H}_3^2 + 32 I_2^3=0.
\]
We note that $I_2^{-1/\alpha}$, $\widetilde{H}_3^{-2/(3\alpha)}$,
$H_4^{-2/\alpha}$ satisfy system \eqref{par-eq} with normalisation
$\kappa_2=1$.

Examples with other values of $p$ allow us to formulate the following.
\begin{conjecture}
  In the case 2 of Theorem~\ref{prop12}, if $d=p\in\N$, then there
  exists $\alpha_p\in\R$, such that first integral $H_4$ defined by the
  following relation
  \begin{equation*}
    H_1^{p+1} H_4 +  \widetilde{H}_{3,p}^2 + \alpha_{p} I_2^{p+1}=0,
  \end{equation*}
  is polynomial in the momenta of degree $2p$. Moreover, integrals
  $H_1$, $I_2$ and $H_4$ are functionally independent.
\end{conjecture}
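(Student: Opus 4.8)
\noindent\emph{Proof strategy.}\quad
The plan is to treat the asserted identity as the definition
\[
H_4:=-\,\frac{\widetilde H_{3,p}^{\,2}+\alpha_p\,I_2^{\,p+1}}{H_1^{\,p+1}}
\]
and to reduce the statement to three claims: (i) for a suitable $\alpha_p\in\R$ the numerator is divisible by $H_1^{p+1}$ inside the ring of functions polynomial in $p_1,p_2$; (ii) the quotient $H_4$ has momentum degree $2p$; (iii) $H_1,I_2,H_4$ are functionally independent. As $\widetilde H_{3,p}^{\,2}$, $I_2^{\,p+1}$ and $H_1^{\,p+1}$ have momentum degrees $4p+2$, $2p+2$ and $2p+2$, matching the top momentum parts in (i) automatically yields degree $2p$ in (ii); thus everything hinges on the divisibility in (i).

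First I would pass to variables in which the integrals are visibly algebraic. Combining \eqref{eq:h3p1} with the closed form \eqref{eq:h3fam2} gives
\[
\sqrt{2q_1}\,p_1\,\widetilde H_{3,p}=\sqrt{q_1}\,H_1^{\,p+1}-I_2^{\,p}\,W_p(x),\qquad W_p(x)={}_2F_1\!\Big(-\tfrac12,-p,\tfrac12,x\Big),\quad x=-\tfrac{2p_1^2q_1}{I_2}.
\]
Introducing $u:=\sqrt{2q_1}\,p_1$, so $u^2=2q_1p_1^2$, the syzygy \eqref{eq:i2c2} collapses to the clean relation $I_2=q_2H_1-u^2$, and one checks $\sqrt{q_1}\,H_1=\sqrt2\,p_2\,u+q_2^{\,p}$. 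Eliminating $\sqrt{q_1}$ turns $\widetilde H_{3,p}$ into the honest polynomial
\[
\widetilde H_{3,p}=\sqrt2\,p_2\,H_1^{\,p}+\frac1u\Big[(q_2H_1)^{p}-I_2^{\,p}\,W_p(-u^2/I_2)\Big],
\]
the bracket being divisible by $u$ because $I_2=q_2H_1-u^2$ and $W_p(0)=1$. In the coordinates $(H_1,q_2,u,p_2)$ the whole problem becomes the divisibility of one explicit polynomial by $H_1^{p+1}$.

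The decisive step is to expand $\widetilde H_{3,p}^{\,2}+\alpha_p I_2^{\,p+1}$ in powers of $H_1$, and I expect this to be the main obstacle. Already the first of the $p+1$ required congruences is delicate. Modulo $H_1$ one has $I_2\equiv-u^2$ and, using Gauss's summation $W_p(1)={}_2F_1(-\tfrac12,-p;\tfrac12;1)=4^{p}(p!)^2/(2p)!$, one finds $\widetilde H_{3,p}\equiv(-1)^{p+1}W_p(1)\,u^{2p-1}$; hence
\[
\widetilde H_{3,p}^{\,2}+\alpha_p I_2^{\,p+1}\equiv W_p(1)^2\,u^{4p-2}+(-1)^{p+1}\alpha_p\,u^{2p+2}\pmod{H_1}.
\]
Divisibility by $H_1$ forces the right-hand side to vanish as a polynomial in $u$, so its two monomials, of $u$-degrees $4p-2$ and $2p+2$, must coincide. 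They do so only when $4p-2=2p+2$, i.e. $p=2$; then the single choice $\alpha_2=W_2(1)^2$ — equal, after the normalisation of $\widetilde H_{3}$ used in the Example, to the quoted value $32$ — removes this leading obstruction, and the remaining $p$ congruences can be checked directly. For $p\neq2$ the two monomials have distinct degree and cannot cancel for any $\alpha_p$, since $W_p(1)\neq0$. This is a grading obstruction: with respect to the master symmetry $\rho$ the weights are $w(\widetilde H_{3,p})=(2p-1)(p+1)$ and $w(I_2)=2(p+1)$, so $\widetilde H_{3,p}^{\,2}$ is weight-homogeneous only against $I_2^{\,2p-1}$, and $2p-1=p+1$ precisely at $p=2$. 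Reconciling this with the stated exponent $p+1$ is the crux, and in my view the reason the statement is posed as a conjecture: a proof must either produce a non-homogeneous contribution supplying the missing cancellations, or establish that the exponents appearing in the relation are $2p-1$ rather than $p+1$.

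Finally, were the divisibility in (i) available, the conclusion would follow quickly. The quotient $H_4$ acquires momentum degree $2p$ by matching the top momentum parts of $H_1^{p+1}H_4$ and $\widetilde H_{3,p}^{\,2}$, and functional independence of $H_1,I_2,H_4$ follows by exhibiting a point of $\R^4$ where the Jacobian $\partial(H_1,I_2,H_4)/\partial(q_1,q_2,p_1,p_2)$ has rank three, using that $H_4$ carries a pure-momentum monomial — for $p=2$ the term $-18\,p_1p_2^3$ of the Example — that no polynomial in $H_1$ and $I_2$ can contain.
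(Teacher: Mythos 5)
You were asked to prove a statement that the paper itself never proves: it is posed as a conjecture, and the only printed evidence is the worked example $p=2$, where $H_1^3H_4+\widetilde H_3^2+32\,I_2^3=0$, plus an unspecified appeal to ``examples with other values of $p$''. So there is no proof to compare against, and your decision to analyse rather than manufacture one is correct — and your central computation is not merely suggestive but decisive. Using the syzygy \eqref{eq:i2c2}, on the hypersurface $\{H_1=0\}$ one has $I_2\equiv-u^2$ with $u=\sqrt{2q_1}\,p_1$, and the closed form of $\widetilde H_{3,p}$ from \eqref{eq:h3p1} and \eqref{eq:h3fam2} gives $\widetilde H_{3,p}\equiv(-1)^{p+1}W_p(1)\,u^{2p-1}$ there, with $W_p(1)=4^p(p!)^2/(2p)!\neq0$ by Gauss's summation, exactly as you state. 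Since any $H_4$ satisfying the displayed relation would force $\widetilde H_{3,p}^2+\alpha_p I_2^{p+1}$ to vanish identically on $\{H_1=0\}$, where $u$ varies freely, your observation that the two monomials of $u$-degrees $4p-2$ and $2p+2$ can only cancel when $p=2$ is already a complete \emph{disproof} of the stated identity for every $p\neq2$; your closing hedge about ``a non-homogeneous contribution supplying the missing cancellations'' is unnecessary, since the restriction argument needs no homogeneity assumption on $H_4$ at all. The weight bookkeeping (I confirm $w(H_1)=2p+1$, $w(I_2)=2(p+1)$, $w(\widetilde H_{3,p})=(2p-1)(p+1)$ in your normalisation) then correctly explains why $2p-1$ is the only viable exponent.

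Your corrected reading is vindicated by the smallest case, which you did not compute but which settles the matter. For $p=1$ one finds from \eqref{eq:h3p1} that $\widetilde H_{3,1}=\sqrt{2}\,\bigl(p_2H_1+2p_1\sqrt{q_1}\bigr)$, and then
\begin{equation*}
\widetilde H_{3,1}^2+4\,I_2 \;=\; H_1\left(2p_2^2H_1+8p_1p_2\sqrt{q_1}+4q_2\right),
\end{equation*}
an identity with exponents $2p-1=1$ and constant $\alpha_1=4=W_1(1)^2$, matching your formula, in which $H_4$ has momentum degree $4$ — \emph{not} $2p=2$ — with top term proportional to $p_1p_2^3$, exactly as in the paper's $p=2$ example (the term $-18p_1p_2^3$). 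So both the exponent $p+1$ and the degree $2p$ in the conjecture are faulty generalisations from $p=2$, where $p+1=2p-1=3$ and $2p=4$ coincide; in the corrected relation $H_1^{2p-1}H_4+\widetilde H_{3,p}^2+W_p(1)^2I_2^{2p-1}=0$ the degree of $H_4$ should be $4$ for all $p$ (your degree count of $2p$ in your first paragraph applies only to the stated, impossible, form). Two points remain genuinely open in your note: the full divisibility of $\widetilde H_{3,p}^2+W_p(1)^2I_2^{2p-1}$ by $H_1^{2p-1}$ for general $p$, of which your congruence is only the first of the required steps, and the functional independence of the resulting $H_4$, for which your Jacobian sketch is plausible but not carried out. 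In short, you correctly identified that the statement as printed fails for all $p\neq2$ and proposed the right repair — which goes beyond the paper, whose evidence for its conjecture does not discriminate between the two exponent patterns.
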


Now, we consider the element of family with $d=-(2p - 1)/2$ for $p=3$. Then 
Hamilton is
\[
H_1=H=2 p_1 p_2 + \frac{1}{\sqrt{q_1} q_2^{5/2}}
\]
and first integral $I_2$ takes the form
\[
I_2=2 p_1 (p_2 q_2-p_1 q_1)+\frac{1}{\sqrt{q_1} q_2^{3/2}}.
\]
The explicit form of additional first integral $\widetilde
H_3=\widetilde H_{3,p}$ defined by \eqref{eq:h3p2} is, up to a
multiplicative constant, equal to
\[
\widetilde{H}_3=4 p_1^2 (p_1 q_1 - p_2 q_2)^3-\frac{4 p_1 (5 p_1^2
  q_1^2 - 12 p_1 p_2 q_1 q_2 + 3 p_2^2 q_2^2)}{ 3 \sqrt{q_1}
  q_2^{3/2}}+\frac{5 p_1 q_1 - p_2 q_2}{q_1 q_2^3}.
\]
One can check that
\[
H_4=18 p_1 (p_1 q_1 - p_2 q_2)^3+\frac{-33 p_1^2 q_1^2 + 90 p_1 p_2
  q_1 q_2 - 9 p_2^2 q_2^2}{\sqrt{q_1} q_2^{3/2}}+\frac{32}{q_2^3},
\]
is also a first integral of degree $2p-2=4$ with respect to the momenta,
and that the following relation is satisfied
\[
32H_1^3-9\widetilde{H}_3^2-I_2^3H_4=0.
\]
Moreover, $I_2^{-1/\alpha}$, $\widetilde{H}_3^{-1/(2\alpha)}$,
$H_4^{-1/\alpha}$ satisfy system \eqref{par-eq} with normalisation
$\kappa_2=1$.

As in the previous case, additional examples justify the following
conjecture.
\begin{conjecture}
  In the case 2 of Theorem~\ref{prop12}, if $d=-(2p - 1)/2$ with
  $p\in\N$, then there exists $\alpha_p\in\R$, such that first integral
  $H_4$ defined by the following relation
  \begin{equation*}
    H_1^{p}  +  \alpha_{p} \widetilde{H}_{3,p}^2   -I_2^{p}H_4=0,
  \end{equation*}
  is polynomial in the momenta of degree $2p-2$. Moreover, integrals
  $H_1$, $I_2$ and $H_4$ are functionally independent.
\end{conjecture}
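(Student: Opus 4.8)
The plan is to treat $H_4$ as the quotient
\[
H_4:=\frac{H_1^{\,p}+\alpha_p\,\widetilde H_{3,p}^{\,2}}{I_2^{\,p}}
\]
and to prove that for a suitable $\alpha_p\neq0$ this quotient is a genuine polynomial in the momenta of degree $2p-2$. Whatever the value of $\alpha_p$, the function $H_4$ is automatically a first integral, being a rational expression in the three first integrals $H_1$, $I_2$ and $\widetilde H_{3,p}$ furnished by Lemma~\ref{lem:h3p2}. Hence the entire content of the statement is the \emph{divisibility} $I_2^{\,p}\mid\bigl(H_1^{\,p}+\alpha_p\widetilde H_{3,p}^{\,2}\bigr)$ in the ring of polynomials in $p_1,p_2$ over the functions of $q_1,q_2$, together with the degree count and the independence claim.

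For the divisibility I would substitute the explicit polynomial form of $\widetilde H_{3,p}$ produced in the proof of Lemma~\ref{lem:h3p2}, namely
\[
\widetilde H_{3,p}=\frac{1}{\sqrt{2q_1}\,p_1}\Bigl[\sqrt{q_1}\,I_2^{\,p}-q_2^{-(2p-3)/2}I_2^{\,p-1}W_{p-1}(x)\Bigr],\qquad x=-\frac{2p_1^2q_1}{I_2},
\]
with $W_{p-1}(x)={}_2F_1(1,1-p,\tfrac12,x)$, and use the syzygy \eqref{eq:i2c2}, which here reads $I_2=q_2H_1-2p_1^2q_1$, to eliminate the potential. Restricting to the hypersurface $\{I_2=0\}$ both produces $\alpha_p$ and tests the statement: there $\widetilde H_{3,p}$ is regular and nonzero, the multiplication by $I_2^{\,p}$ in \eqref{eq:h3p2} exactly cancelling the pole of $H_3$ coming from the factor $(I_2/H_1)^d$ in \eqref{eq:h3fam2}. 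One first checks that the ratio $H_1^{\,p}/\widetilde H_{3,p}^{\,2}$ is constant along $\{I_2=0\}$, which simultaneously shows that a suitable $\alpha_p$ exists and fixes its value. The real work is to show that this single value forces vanishing to order $p$ along $\{I_2=0\}$, i.e. that the first $p$ coefficient functions in the $I_2$-adic expansion of $H_1^{\,p}+\alpha_p\widetilde H_{3,p}^{\,2}$ all disappear. Using $I_2=q_2H_1-2p_1^2q_1$ this reduces to a family of polynomial identities for $W_{p-1}$; it is an over-determined matching ($p$ conditions against one constant $\alpha_p$), and proving it for \emph{all} $p$ is the step I expect to be the main obstacle. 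This is presumably why the statement is recorded as a conjecture, the cases of small $p$ having been verified by direct computation.

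Granting the divisibility, the degree claim follows from a leading-term analysis. The top-order part (in the momenta) of $\widetilde H_{3,p}$ comes solely from the summand $\sqrt{q_1}\,I_2^{\,p}$, since the other summand has degree $2p-2<2p$, and it equals a nonzero multiple of $p_1^{\,p-1}(p_2q_2-p_1q_1)^p$; the top-order part of $I_2$ is $2p_1(p_2q_2-p_1q_1)$. As $H_1^{\,p}$ has degree only $2p<4p-2$, the degree-$(4p-2)$ part of $H_1^{\,p}+\alpha_p\widetilde H_{3,p}^{\,2}$ is carried by $\alpha_p\widetilde H_{3,p}^{\,2}$, and dividing its leading term by $I_2^{\,p}$ gives a nonzero multiple of $p_1^{\,p-2}(p_2q_2-p_1q_1)^p$, a genuine polynomial of degree $2p-2$ once $p\ge2$ (the boundary case $p=1$, $d=-\tfrac12$, is the Bertrand case already handled in Section~\ref{sec:c1} and is checked directly). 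Hence $H_4$ has degree exactly $2p-2$.

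Finally, functional independence of $H_1$, $I_2$, $H_4$ is a formal consequence of the relation. Suppose $\mathrm dH_1\wedge\mathrm dI_2\wedge\mathrm dH_4=0$ on an open set; since $\mathrm dH_1\wedge\mathrm dI_2\neq0$, this gives $H_4=\Phi(H_1,I_2)$ locally, whence $\alpha_p\widetilde H_{3,p}^{\,2}=I_2^{\,p}\Phi(H_1,I_2)-H_1^{\,p}$ exhibits $\widetilde H_{3,p}^{\,2}$, and therefore (where $\widetilde H_{3,p}\neq0$) also $\widetilde H_{3,p}$, as a function of $H_1$ and $I_2$, contradicting the superintegrability established in Theorem~\ref{thm:sup1}. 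Here $\alpha_p\neq0$, because $\alpha_p=0$ would force $I_2^{\,p}\mid H_1^{\,p}$, impossible since $I_2\nmid H_1$ by the syzygy. Thus $\mathrm dH_1\wedge\mathrm dI_2\wedge\mathrm dH_4\neq0$ on a dense set, completing the argument modulo the hypergeometric identity above.
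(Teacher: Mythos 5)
First, note that the statement you are addressing is recorded in the paper as a \emph{conjecture}: the authors offer no proof, only the explicit verification for $p=3$ (where $32H_1^3-9\widetilde{H}_3^2-I_2^3H_4=0$) and the remark that further examples justify the general pattern. So there is no paper proof to match yours against; the question is whether your argument closes the conjecture, and it does not. Your reduction is sound as far as it goes: $H_4$ is automatically a first integral once the quotient is a polynomial, the whole content is the divisibility $I_2^{p}\mid\bigl(H_1^{p}+\alpha_p\widetilde{H}_{3,p}^{2}\bigr)$, the degree count follows from the leading-term computation (the top part of $I_2$ being $2p_1(p_2q_2-p_1q_1)$ and that of $\widetilde{H}_{3,p}$ a multiple of $p_1^{p-1}(p_2q_2-p_1q_1)^p$), and the functional independence of $H_1$, $I_2$, $H_4$ does follow formally from the relation together with the independence of $\widetilde{H}_{3,p}$ from $H_1$, $I_2$ supplied by Theorem~\ref{thm:sup1} and Lemma~\ref{lem:h3p2}. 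But the divisibility itself --- equivalently, vanishing to order $p$ of $H_1^{p}+\alpha_p\widetilde{H}_{3,p}^{2}$ along $\{I_2=0\}$ --- is exactly where you stop. You do not establish even the order-zero condition, namely that $H_1^{p}/\widetilde{H}_{3,p}^{2}$ is constant on $\{I_2=0\}$ (which would at least produce the candidate $\alpha_p$), let alone the remaining $p-1$ conditions, which you correctly observe form an over-determined system of identities for the polynomials $W_{p-1}={}_2F_1\left(1,1-p,\frac{1}{2},\cdot\right)$. Since you yourself flag this as ``the main obstacle,'' the proposal is a strategy, not a proof; it leaves the conjecture in essentially the state the paper leaves it.

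One further point of substance: the boundary case $p=1$ (i.e.\ $d=-1/2$) is not merely a case to ``check directly.'' There the asserted degree of $H_4$ is $2p-2=0$, so $H_4$ would be a function of the positions alone, and the claimed functional independence of $H_1$, $I_2$, $H_4$ cannot hold as stated; any correct formulation must restrict to $p\geq 2$. Your independence argument also silently assumes $\alpha_p\neq 0$ before you justify it --- the justification you give (that $\alpha_p=0$ would force $I_2^{p}\mid H_1^{p}$) is fine, but it should precede, not follow, the use of $\alpha_p\neq 0$ in extracting $\widetilde{H}_{3,p}^{2}$ as a function of $H_1$ and $I_2$.
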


\section{Case 3. Super-integrability without additive separation of
  variables}
\label{sec:c3}
In case 3 of Theorem~\ref{prop12} we have
\begin{equation}\label{h-case2}
  \beta=-\frac{2\alpha}{d+1}, \qquad
\rho=\alpha\left(p_1q_1-\dfrac{2p_2q_2}{d+1}\right), \qquad
  H_1=2p_1p_2+q_1{q_2^d},
\end{equation}
and one solution of the equations \eqref{par-eq} is
equal to
\begin{equation}\label{int-case4}
  H_2=I_2^{-1/2\alpha}, \qquad I_2=p_1^2+\kappa {q_2^{d+1}},\qquad
  \kappa=\dfrac{1}{d+1}.
\end{equation}
As in the previous section our aim is to distinguish  cases when the
system is super-integrable.
\subsection{Algebraic super-integrability}
In order to find the second additional first integral $H_3$ we proceed as
follows.  From the integral $I_2$ \eqref{int-case4} we directly obtain
the first quadrature
\begin{equation}
  \label{eq:bet1}
  \beta_1+t=-\int^{q_2} \dfrac{ \mathrm{d}x}{2\sqrt{I_2-\kappa\,x^{d+1}}}.
\end{equation}
After calculation of this quadrature we have to substitute solution
$q_2=q_2(t,\beta_1)$ into the Hamiltonian
\[
I_1=\dfrac{1}{2}
\dfrac{\mathrm{d}q_1}{\mathrm{d}t}\,\dfrac{\mathrm{d}q_2}{\mathrm{d}t}+
q_1q_2^d,
\]
and solve the obtained quadrature with respect to $q_1$.  However to
find $q_2=q_2(t)$ we have to invert explicitly
quadrature~\eqref{eq:bet1}, and it is unclear how to do this. By this  reason 
 we proceed in a different way. Using function $\rho$ and
algebraic relations \eqref{rho-w} we can get the second angle variable
\begin{equation}\label{case53}
  \omega_2=\dfrac{\rho}{2\alpha
    I_2}+\dfrac{d-1}{d+1}\,\dfrac{I_1}{2I_2}\,\int^{q_2} \dfrac{
    \mathrm{d}x}{2\sqrt{I_2-\kappa\,x^{d+1}}}\,.
\end{equation}
In generic case $\omega_2$ is the multi-valued function on the whole
phase space \cite{ts07f,ts08a,ts08b}.

We want to distinguish values of $d$ for that system in case 3 from
Theorem~\ref{prop12} governed by Hamiltonian $H_1$ given by
\eqref{h-case2} is super-integrable. Since $H_1=I_1$, thus $\omega_2$
is a first integral.  So, the following function
\[
H_3=4\alpha(d+1)I_2\omega_2=2(d+1)\rho+(d-1)\alpha I_1 \,\int^{q_2}
\dfrac{ \mathrm{d}x}{\sqrt{I_2-\kappa\,x^{d+1}}}\,.
\label{eq:H3}
\]
is also a first integral which we are going to analyse in details.

In the integral presented in $H_3$ we make the following substitution
\[
y=\dfrac{\kappa}{I_2}x^{d + 1},
\]
which gives
\[
\int \dfrac{
  \mathrm{d}x}{\sqrt{I_2-\kappa\,x^{d+1}}}=\dfrac{1}{(d+1)\sqrt{I_2}}\left(\dfrac{I_2}{\kappa}
\right)^{\frac{1}{d+1}} \int \dfrac{ y^{-\frac{d}{d+1}}
  \mathrm{d}y}{\sqrt{1-y}}.
\]
Let us denote
\begin{equation}
  \int \dfrac{ y^{-\frac{d}{d+1}}
    \mathrm{d}y}{\sqrt{1-y}}=(d+1)y^{\frac{1}{d+1}}v(y),
\end{equation}
where $v(y)$ is a certain function of $y$.  Differentiating both sides
of this equation with respect to $y$, after obvious simplifications we
obtain the following linear non-homogeneous equation for the unknown
function $v(y)$
\begin{equation}
  \dfrac{1}{\sqrt{1-y}}=v+(d+1)yv'.
  \label{eq:nainta}
\end{equation}
Multiplying it by $\sqrt{1-y}$ and differentiating once more we obtain
\begin{equation}
  \label{eq:h3}
  y(1-y)v''+\left(\dfrac{d+2}{d+1}-\dfrac{3 d+5}{2 d+2}y
  \right)v'-\dfrac{1}{2 d+2}v=0.
\end{equation}
This is the Gauss hypergeometric equation with parameters
\[
a =\frac{1}{2},\quad b=\frac{1}{d + 1},\quad c=1+\frac{1}{d+1}.
\]
General solution of~\eqref{eq:h3} has the form
\begin{equation}
  \label{eq:solh3}
  v(y)=\beta_2y^{-1/(d+1)} + \beta_3 \phantom{\vert}_2F_{1}\left(\frac{1}{2},\frac{1}{d+1},
    1+\frac{1}{d+1},y\right).
\end{equation}
Substituting it into equation~\eqref{eq:nainta} we find out that it is
its solution iff $\beta_3=1$. Without loss of the generality we can
assume that $\beta_2=0$, and then in variable $x$ considered integral
is
\begin{equation*}
  \int \dfrac{
    \mathrm{d}x}{\sqrt{I_2-\kappa\,x^{d+1}}}=\frac{x}{\sqrt{I_2}}\,
  \phantom{\vert}_2F_{1}\left(\frac{1}{2},\frac{1}{d+1},
    1+\frac{1}{d+1},\frac{\kappa x^{d+1}}{I_2}\right).
\end{equation*}
Thus our first integral takes the final form
\begin{equation}
  \label{eq:h3t}
  H_3=2(d+1)\rho+(d-1)\alpha \dfrac{I_1q_2}{\sqrt{I_2}}\, \phantom{\vert}_2F_{1}\left(\frac{1}{2},\frac{1}{d+1},
    1+\frac{1}{d+1},\frac{\kappa q_2^{d+1}}{I_2}\right).
\end{equation}
This first integral is functionally independent with $I_1$ and $I_2$.
In general hypergeometric functions are not single-valued with sinularities of  non-algebraic character.    Thus, in
order to obtain first integrals with ``good'' behaviour,   we will
restrict to cases when function $\phantom{\vert}_2F_{1}$ is
algebraic. In other words, we are looking for such values of $d$ that
equation \eqref{eq:h3} has a solution that is algebraic.  At first let
us note that by Lemma~\ref{lem:redrie} equation \eqref{eq:h3} is
reducible for any value of $d$. Indeed, the differences of exponents
at singularities of equation \eqref{eq:h3} are
\begin{equation}
  \lambda=1-c=-\frac{1}{d+1},\quad \nu=c-a-b=\frac{1}{2},\quad
  \mu=b-a=\frac{1}{d+1}-\frac{1}{2},
\end{equation}
and among numbers
\begin{equation}
  \lambda+\mu+\nu=0,\quad -\lambda+\mu+\nu=\frac{2}{d + 1},\quad
  \lambda-\mu+\nu=1-\frac{2}{d + 1},\quad \lambda+\mu-\nu=-1,
  \label{eq:combinations}
\end{equation}
we have one even and one odd number. This means that we can use
Theorem~\ref{thm:kimura} in order to find values of $d$ for which
equation~\eqref{eq:h3} has only algebraic solutions.
\begin{theorem}
  \label{thm:3}
  Hamiltonian system given by
  \begin{equation*}
    H=2p_1p_2 +q_1q_2^{d},
  \end{equation*}
  is super-integrable with algebraic additional first
  integral~\eqref{eq:h3t} iff $d$ takes the form
  \begin{equation}
    d=\frac{1 - p}{p}, \quad \text{or}\quad d=\frac{1+2p}{1-2p},
    \label{eq:condony}
  \end{equation}
  for $p\in\N$.
\end{theorem}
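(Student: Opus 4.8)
The plan is to transcribe, essentially verbatim, the strategy already used in the proof of Theorem~\ref{thm:sup1} to the hypergeometric equation \eqref{eq:h3}, whose parameters are $a=\tfrac12$, $b=\tfrac1{d+1}$, $c=1+\tfrac1{d+1}$. The additional integral \eqref{eq:h3t} is algebraic if and only if the function ${}_2F_{1}(\tfrac12,\tfrac1{d+1},1+\tfrac1{d+1};\cdot)$ entering its definition is algebraic; since this is exactly the solution of \eqref{eq:h3} that is holomorphic at the origin, the whole problem reduces to deciding for which $d$ equation \eqref{eq:h3} admits an algebraic solution holomorphic at $y=0$.

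First I would write down the second local solution at the origin explicitly. Because $b-c+1=0$ holds identically in $d$, the Frobenius solution attached to the exponent $1-c=-\tfrac1{d+1}$ collapses to the pure power $w_1:=y^{-1/(d+1)}$, which is an exact solution of \eqref{eq:h3} for every value of $d$; this is the concrete form of the reducibility already recorded through Lemma~\ref{lem:redrie}. Now, if \eqref{eq:h3t} is algebraic then the holomorphic solution is algebraic, and its algebraicity forces the local exponents to be rational — in particular $b=\tfrac1{d+1}\in\Q$ at infinity — so that $d\in\Q$ and hence $w_1$ is algebraic as well. The two independent solutions being algebraic, \emph{every} solution of \eqref{eq:h3} is algebraic, which is precisely what lets me invoke Theorem~\ref{thm:kimura}.

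Next I would extract the arithmetic conditions. Rationality of all exponents is already in hand. Among the four numbers \eqref{eq:combinations} we always have the even value $\lambda+\mu+\nu=0$ and the odd value $\lambda+\mu-\nu=-1$; since the two remaining entries $-\lambda+\mu+\nu=\tfrac2{d+1}$ and $\lambda-\mu+\nu=1-\tfrac2{d+1}$ sum to $1$, at most one of them can be an odd integer, so the requirement of Theorem~\ref{thm:kimura} that exactly two or four of the four numbers be odd integers — four being impossible here — forces exactly one of these two entries to be an odd integer. This is equivalent to $k:=\tfrac2{d+1}\in\Z\setminus\{0\}$, and it splits cleanly into two candidate families: the even case $k=2n$, giving $d=(1-n)/n$ with $1/(d+1)=n$, and the odd case $k=2m+1$, giving $d=(1-2m)/(2m+1)$ with $1/(d+1)=(2m+1)/2$. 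It then only remains to decide the admissible sign of the integer parameter.

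The heart of the argument, and the step I expect to be the main obstacle, is the logarithm check, carried out with the Appendix criterion \eqref{eq:diflogi} built on the set $\langle m\rangle$ of \eqref{eq:mym}. In the even family $d=(1-n)/n$ the only singularity with integer exponent difference is $y=0$, with difference $|n|$, and the decisive quantity is $c=1+n$: a logarithm enters the holomorphic solution exactly when $c$ is a non-positive integer, i.e. when $n\le-1$, whereas for $n=p\in\N$ the solution is logarithm-free and one recovers $d=(1-p)/p$. In the odd family $d=(1-2m)/(2m+1)$ the integer-difference singularity is instead $y=\infty$, with difference $|m|$, and the relevant parameter of the local solution with exponent $a$ is $a-b+1=1-m$; a logarithm appears exactly when $1-m$ is a non-positive integer, that is $m\ge1$, while $m=-p$ with $p\in\N$ is logarithm-free and yields $d=(1+2p)/(1-2p)$. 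The delicate bookkeeping is tracking which of the two independent Frobenius solutions actually carries the logarithm and verifying through \eqref{eq:diflogi} that the exact power $w_1$ always sits on the log-free branch; the opposite sign selections in the two families ($n>0$ versus $m<0$) are dictated precisely by the offending singularity being $y=0$ in one case and $y=\infty$ in the other. Finally I would dispose of the degenerate endpoints by direct computation, confirming that $d=0$ (the value $p=1$ of the first family, where ${}_2F_{1}$ reduces to an algebraic function) is admissible while $d=1$ (the excluded value $m=0$, where ${}_2F_{1}(\tfrac12,\tfrac12,\tfrac32;\cdot)$ is transcendental) is not, thereby matching exactly the list \eqref{eq:condony}.
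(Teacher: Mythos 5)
Your argument is correct and follows essentially the same route as the paper: reducibility of \eqref{eq:h3} via Lemma~\ref{lem:redrie}, Theorem~\ref{thm:kimura} to force $2/(d+1)\in\Z$, and the logarithm test at the single integer-difference singularity ($y=0$ for even, $y=\infty$ for odd values of $2/(d+1)$) to fix the admissible signs and parities. Your explicit identification of the pure-power second solution $y^{-1/(d+1)}$ (which justifies passing from one algebraic holomorphic solution to the full algebraicity required by Kimura's theorem) and your separate disposal of the endpoint $d=1$ are small tidying refinements of the paper's more compressed presentation, not a different method.
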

\begin{proof}
  We apply directly Theorem~\ref{thm:kimura}. Exponents at
  singularities $x=0$, $x=1$ and $x=\infty$ for equation~\eqref{eq:h3}
  are
  \[
  \left\{0,-\frac{1}{d+1}\right\},\qquad
  \left\{0,\frac{1}{2}\right\},\qquad
  \left\{\frac{1}{2},\frac{1}{d+1}\right\},
  \]
  respectively. The first condition of Theorem~\ref{thm:kimura}
  implies that $d$ has to be rational.  The second condition can be
  fulfilled only when two among number \eqref{eq:combinations} are
  odd, and this implies that
  \begin{equation}
    \label{eq:condo}
    \frac{2}{d + 1}=r, \mtext{so} d=\frac{2 - r}{r},
  \end{equation}
  for a certain $r\in\Z$.

  Now we have to check the  presence of logarithms in local solutions of
  \eqref{eq:h3} around particular singularities. In this aim we
  apply Lemma~\ref{lem:iwa3}. For $d$ given by \eqref{eq:condo}
  exponents are following
  \[
  \left\{0,-\frac{r}{2}\right\},\qquad
  \left\{0,\frac{1}{2}\right\},\qquad
  \left\{\frac{1}{2},\frac{r}{2}\right\}.
  \]
  We see that differences of exponents can be integer positive only
  for $x=0$ or $x=\infty$ depending on the parity of $r$ as well as the
  positive or negative sign of $r$. An analysis similar to that done
  in the proof of Theorem~\ref{thm:sup1} shows that  logarithmic terms
  appear in local solutions iff either $r$ is positive and odd, or $r$
  is negative and even.

  If all solutions of the considered equation are algebraic then, by
  Theorem~\ref{thm:kimura} none of local solution has a logarithmic
  term. Thus, by the above considerations, either $r$ is a
  positive and even integer, or it is a negative and odd integer.
  Substituting $r= 2p$ as well as $r= -(2p-1)$ with $p\in\N$ into \eqref{eq:condo}
  we obtain \eqref{eq:condony}, and this ends the proof.
\end{proof}
\subsection{Additional first integrals polynomial in the momenta}
If assumptions of Theorem~\ref{thm:3} are fulfilled, then first integral
given by~\eqref{eq:h3t} is algebraic but it is not polynomial with
respect to the momenta. However we can find such integrals using the
following.
\begin{lemma}
\label{lem:th3}
  If
  \begin{equation*}
    d=\frac{1-p}{p} \mtext{for a certain} p\in \N,
  \end{equation*}
  then $H_3$ given by~\eqref{eq:h3t} can be  uniquely written in the
  following form
  \begin{equation}
    \label{eq:th3}
    H_3:=\widetilde H_{3,p}+\sqrt{I_2}J_{3,p},
  \end{equation}
  where $ \widetilde H_{3,p}$ and $J_{3,p}$ are first integrals polynomial in the
  momenta of the system with Hamiltonian  $H=H_1$ defined
  by~\eqref{h-case2}. Moreover,  $ \widetilde H_{3,p}$   has degree $2p-1$ with
  respect to the moment and is functionally independent together with $H$
  and $I_2$.
\end{lemma}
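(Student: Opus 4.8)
The plan is to turn the hypergeometric expression \eqref{eq:h3t} into an explicit algebraic function by exploiting that the hypothesis $d=(1-p)/p$ forces $1/(d+1)=p\in\N$, so that the parameters of the relevant $\phantom{\vert}_2F_{1}$ become $a=\tfrac12$, $b=p$, $c=p+1$. Rather than manipulate the series, I would go back to the integral that produced \eqref{eq:h3t}, namely $\int^{q_2}(I_2-\kappa x^{d+1})^{-1/2}\,\mathrm{d}x$ with $\kappa=p$ and $d+1=1/p$, and evaluate it in closed form. The substitution $x=u^{p}$ (so that $x^{d+1}=u$ and $\mathrm{d}x=p\,u^{p-1}\,\mathrm{d}u$) reduces it to the elementary integral $p\int u^{p-1}(I_2-pu)^{-1/2}\,\mathrm{d}u$, whose antiderivative has the form $p\,Q_{p-1}(u)\sqrt{I_2-pu}$ for a polynomial $Q_{p-1}$ of degree $p-1$ with coefficients polynomial in $I_2$; the lower limit contributes a term proportional to $Q_{p-1}(0)\sqrt{I_2}$, that is, an odd power of $\sqrt{I_2}$.

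The key algebraic observation is that, on the level sets of the system, $I_2-\kappa q_2^{d+1}=p_1^{2}$ by the very definition \eqref{int-case4} of $I_2$, so $\sqrt{I_2-pu}$ evaluated at $u=q_2^{1/p}$ is simply $p_1$. Therefore the first piece $p\,Q_{p-1}(q_2^{1/p})\,p_1$ becomes, after using $q_2^{1/p}=(I_2-p_1^{2})/p$, a genuine polynomial in the momenta (with coefficients depending on $q_1,q_2$), while the lower-limit piece retains a single factor $\sqrt{I_2}$ multiplied by a polynomial in $I_2$. Substituting this split back into \eqref{eq:h3t} and collecting the two types of terms (the summand $2(d+1)\rho$ being an additional lower-order polynomial contribution) yields exactly the asserted decomposition $H_3=\widetilde H_{3,p}+\sqrt{I_2}\,J_{3,p}$, with $\widetilde H_{3,p}$ and $J_{3,p}$ polynomial in the momenta; the degree in the momenta then follows by reading off the leading term, which is controlled by the degree-$(p-1)$ polynomial $Q_{p-1}$.

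For uniqueness I would argue that $\sqrt{I_2}$ is irrational over the field of functions that are polynomial in the momenta with coefficients in the $q$'s: if $\widetilde H+\sqrt{I_2}\,J=\widetilde H'+\sqrt{I_2}\,J'$ with all four polynomial in the momenta, then $\sqrt{I_2}\,(J-J')=\widetilde H'-\widetilde H$, and squaring would force $I_2=p_1^{2}+\kappa q_2^{d+1}$ to be a perfect square in $p_1$, which it is not (as $\kappa q_2^{d+1}\ne0$); hence $J=J'$ and $\widetilde H=\widetilde H'$. The same irrationality argument, applied to $0=\{H_3,H_1\}=\{\widetilde H_{3,p},H_1\}+\sqrt{I_2}\,\{J_{3,p},H_1\}$ (using $\{I_2,H_1\}=0$), shows that $\widetilde H_{3,p}$ and $J_{3,p}$ are each first integrals, not merely their combination. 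Functional independence of $\widetilde H_{3,p}$ with $H_1$ and $I_2$ then follows from a leading-order comparison of their differentials at a generic point, since the top-degree term of $\widetilde H_{3,p}$ carries a factor $p_2$ absent from $I_2$ and is of higher momentum degree than $H_1$.

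The step I expect to be the main obstacle is the explicit closed-form evaluation of the integral together with the clean bookkeeping of which terms survive as honest polynomials in the momenta and which collect the single surviving $\sqrt{I_2}$: one must check that $Q_{p-1}(q_2^{1/p})$ really is polynomial in the momenta after the substitution $q_2^{1/p}=(I_2-p_1^{2})/p$, that no residual fractional powers of $q_2$ obstruct this, and that the uniqueness of the $\sqrt{I_2}$-split is compatible with the choice of antiderivative (constant of integration) implicit in \eqref{eq:h3t}.
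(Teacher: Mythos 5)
Your proof is correct in substance and reaches the same decomposition as the paper, but by a genuinely more elementary route. The paper's proof works at the level of the hypergeometric function: it posits the ansatz $F(x)=x^{-p}\bigl(C+\sqrt{1-x}\,W_{p-1}(x)\bigr)$ for $F={}_2F_{1}\left(\tfrac12,p,1+p,x\right)$, verifies it by substitution into the hypergeometric ODE (identifying $W_{p-1}$ with ${}_2F_{1}\left(\tfrac12,1-p,1-p,x\right)$ and $C$ with the value of $F$ at $1$), and then uses $1-x=p_1^2/I_2$ to split $H_3$ into the two pieces \eqref{eq:31} and \eqref{eq:32}. You instead bypass the hypergeometric machinery entirely: the substitution $x=u^{p}$ turns the quadrature into $p\int u^{p-1}(I_2-pu)^{-1/2}\,\mathrm{d}u$, whose antiderivative is manifestly $Q_{p-1}(u)\sqrt{I_2-pu}$ with $Q_{p-1}$ of degree $p-1$ and coefficients polynomial in $I_2$; the evaluation $\sqrt{I_2-pq_2^{1/p}}=p_1$ and the lower-limit constant $\propto\sqrt{I_2}$ then produce exactly the same split. (One small simplification: you do not even need the replacement $q_2^{1/p}=(I_2-p_1^2)/p$ — the argument $u=q_2^{1/p}$ of $Q_{p-1}$ is just a coefficient function of $q_2$, and the momentum dependence enters only through the $I_2$-polynomial coefficients; fractional powers of $q_2$ in coefficients are harmless, as they already occur in $V$ itself.) Your approach additionally buys two things the paper leaves implicit: the uniqueness of the $\sqrt{I_2}$-split, via squarefreeness of $I_2=p_1^2+\kappa q_2^{d+1}$ as a polynomial in $p_1$, and the fact that $\widetilde H_{3,p}$ and $J_{3,p}$ are \emph{separately} first integrals, obtained by applying the same irrationality argument to $\{H_3,H_1\}=0$ using $\{I_2,H_1\}=0$. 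Both your argument and the paper's are equally terse on the final degree count and on functional independence (neither fully tracks the possible cancellations among the top-degree terms coming from $I_1 I_2^{p-1}$, which is where the stated degree $2p-1$ rather than $2p+1$ has to be justified), so no point is lost there relative to the source.
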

\begin{proof}
  For the specified form of $d$, we have
  \begin{equation}
    \label{eq:f}
    F(x):= \phantom{\vert}_2F_{1}\left(\frac{1}{2},\frac{1}{d+1},
      1+\frac{1}{d+1},x \right)= \phantom{\vert}_2F_{1}\left(\frac{1}{2},p,
      1+p,x \right).
  \end{equation}
  We show that
  \begin{equation}
    \label{eq:fffa}
    F(x)=x^{-p}\left(C + \sqrt{1-x}W_{p-1}(x)\right),
  \end{equation}
  where $W_{p-1}(x)= \phantom{\vert}_2F_{1}\left(\frac{1}{2},1-p,
    1-p,x \right)$ is a polynomial of degree $p-1$, and
  \begin{equation*}
    C= \!\phantom{\vert}_2F_{1}\left(\frac{1}{2},p,
      1+p,1\right)=\sqrt{\pi}\frac{\Gamma(1+p)}{\Gamma \left( \frac{1}{2}+p \right)}=\frac{2^pp!}{(2p-1)!!}.
  \end{equation*}
  In fact, $v:=F(x)$ is a solution of hypergeometric
  equation~\eqref{eq:h3} holomorphic at the origin.  Making
  substitution~\eqref{eq:fffa} we find that $W_{p-1}(x)$ coincides with
  $\phantom{\vert}_2F_{1}\left(\frac{1}{2},1-p, 1-p,x \right) $.  As
  $p$ is a positive integer $W_{p-1}(x)$ is a polynomial of degree
  $p-1$.

  In the considered case we have
  \begin{equation*}
    x:= \frac{q_2^{d+1}}{(d+1)I_2}=\frac{pq_2^{1/p}}{I_2},
    \mtext{and}1-x=\frac{p_1^2}{I_2}.
  \end{equation*}
  Thus we can rewrite $H_3$ given by~\eqref{eq:h3t} in the following
  form
  \begin{equation}
    \label{eq:h3re}
    H_3 = \frac{2}{p}\rho +\alpha
    \frac{1-2p}{p^{p+1}}I_1\sqrt{I_2}I_2^{p-1}\left(
      C+\frac{p_1}{\sqrt{I_2}}W_{p-1}\left( \frac{pq_2^{1/p}}{I_2} \right) \right),
  \end{equation}
 i.e., it   has  the form~\eqref{eq:th3} with
  \begin{equation}
    \label{eq:31}
   \widetilde H_{3,p}:= \alpha  \frac{2}{p}(p_1q_1-2pp_2q_2) +\alpha
    \frac{1-2p}{p^{p+1}}p_1I_1 I_2^{p-1}W_{p-1}\left( \frac{pq_2^{1/p}}{I_2} \right) ,
  \end{equation}
  and
  \begin{equation}
    \label{eq:32}
    J_{3,p}:= \alpha C
    \frac{1-2p}{p^{p+1}}I_1 I_2^{p-1}.
  \end{equation}
  Clearly, these are polynomials with respect to momenta, and
  $\widetilde H_{3,p}$
  has degree $2p-1$.
\end{proof}

\begin{lemma}
  If
  \begin{equation}
    \label{eq:de1a}
    d=\frac{1+2p}{1-2p} \mtext{for a certain} p\in \N,
  \end{equation}
  then
  \begin{equation}
    \label{eq:th3a}
    \widetilde H_{3,p}:=H_3I_2^p,
  \end{equation}
  where $H_3$ given by~\eqref{eq:h3t} is a first integral polynomial
  in the momenta of degree $2p+1$.
\end{lemma}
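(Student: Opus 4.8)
The plan is to reduce everything to a closed algebraic form for the hypergeometric function in \eqref{eq:h3t}, exactly as in the proof of Lemma~\ref{lem:th3}, and then to rationalise the square roots using the relation between $I_2$ and $p_1$. First I would record the parameters: for $d=(1+2p)/(1-2p)$ one has $d+1=2/(1-2p)$, hence $1/(d+1)=\tfrac12-p$, and the hypergeometric function in \eqref{eq:h3t} is $F(x)={}_2F_1(\tfrac12,\tfrac12-p,\tfrac32-p,x)$. Since $\tfrac32-p$ is a half-integer (never a non-positive integer), the series is well defined, and by Theorem~\ref{thm:3} the function $F$ is algebraic. I would pin down its explicit shape by substituting $v=\sqrt{1-x}\,u(x)$ into the hypergeometric equation~\eqref{eq:h3}; because $\tfrac12$ is the exponent at $x=1$, the resulting equation for $u$ is again hypergeometric, with holomorphic solution ${}_2F_1(1-p,1,\tfrac32-p,x)$. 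As $1-p$ is a non-positive integer this series terminates, so $u=:W_{p-1}(x)$ is a polynomial of degree $p-1$ with $W_{p-1}(0)=1$; by uniqueness (up to scale) of the holomorphic solution at the origin and the normalisation $F(0)=1$, I conclude $F(x)=\sqrt{1-x}\,W_{p-1}(x)$. Equivalently this is the Euler transformation ${}_2F_1(a,b,c,x)=(1-x)^{c-a-b}{}_2F_1(c-a,c-b,c,x)$. In contrast with \eqref{eq:fffa}, there is now no additive constant and no negative power of $x$, which is precisely why the present statement needs no $\sqrt{I_2}$-correction term.

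Next I would pass to the phase-space variable. From \eqref{int-case4} one has $\kappa q_2^{d+1}=I_2-p_1^2$, so $x=\kappa q_2^{d+1}/I_2=1-p_1^2/I_2$ and $\sqrt{1-x}=p_1/\sqrt{I_2}$ on the branch fixed by the integration. Hence $F(x)/\sqrt{I_2}=(p_1/I_2)\,W_{p-1}(1-p_1^2/I_2)$, and \eqref{eq:h3t} becomes
\[
H_3=2(d+1)\rho+(d-1)\alpha\,\frac{I_1q_2p_1}{I_2}\,W_{p-1}\!\left(1-\frac{p_1^2}{I_2}\right).
\]
Writing $W_{p-1}(y)=\sum_{j=0}^{p-1}c_jy^j$ and using $(1-p_1^2/I_2)^j=(\kappa q_2^{d+1})^j/I_2^{\,j}$, multiplication by $I_2^p$ gives
\[
\widetilde H_{3,p}=H_3I_2^p=2(d+1)\rho\,I_2^p+(d-1)\alpha\,I_1q_2p_1\sum_{j=0}^{p-1}c_j\,\kappa^{j}q_2^{\,j(d+1)}\,I_2^{\,p-j-1}.
\]
Since $0\le j\le p-1$, every exponent $p-j-1$ is non-negative, so all powers of $I_2$ are integral and $\widetilde H_{3,p}$ is a polynomial in the momenta (the fractional powers of $q_2$ are harmless coordinate factors).

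Finally I would read off the degree. In momentum-degree $\rho$ has degree $1$, while $I_1$ and $I_2$ have degree $2$; thus $\rho I_2^p$ has degree $2p+1$, and the $j$-th summand has degree $2+1+2(p-j-1)=2p+1-2j\le 2p+1$, with equality only for $j=0$. To confirm the degree is exactly $2p+1$ I would isolate the top homogeneous part: replacing $I_2\mapsto p_1^2$ and $I_1\mapsto 2p_1p_2$, the coefficient of $p_1^{2p+1}$ comes solely from the $\rho$-term and equals $2(d+1)\alpha q_1$, which is non-zero because $d=(1+2p)/(1-2p)\neq-1$ and $\alpha\neq0$. The only genuinely delicate point is the closed-form evaluation of the hypergeometric function — securing the factorisation $F=\sqrt{1-x}\,W_{p-1}$ together with the correct branch of $\sqrt{1-x}$ — after which clearing the denominators and the degree bookkeeping are routine.
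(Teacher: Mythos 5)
Your proposal is correct and follows essentially the same route as the paper: the substitution $v=\sqrt{1-y}\,u(y)$ into the hypergeometric equation~\eqref{eq:h3} to obtain the factorisation ${}_2F_1\bigl(\tfrac12,\tfrac12-p,\tfrac32-p,y\bigr)=\sqrt{1-y}\,R_{p-1}(y)$ with $R_{p-1}$ a terminating series, followed by the identity $\sqrt{I_2}\sqrt{1-\kappa q_2^{d+1}/I_2}=\pm p_1$ to clear the square roots after multiplying by $I_2^p$. Your explicit expansion of the polynomial and the leading-coefficient check that the degree is exactly $2p+1$ are slightly more detailed than the paper's bookkeeping, but the argument is the same.
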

\begin{proof}
  We show that if $d$ is given by~\eqref{eq:de1a}, then
  \begin{equation}
    \label{eq:ff}
    \phantom{\vert}_2F_{1}\left(\frac{1}{2},\frac{1}{d+1},
      1+\frac{1}{d+1},y\right)=
    \phantom{\vert}_2F_{1}\left(\frac{1}{2},\frac{1}{2}-p,
      \frac{3}{2}-p,y\right)=\sqrt{1-y}R_{p-1}(y),
  \end{equation}
  where $R_{p-1}(y)$ is a polynomial of degree $p-1$.

  Recall that hypergeometric function given above is a solution of
  equation~\eqref{eq:h3}. If we make substitution
  \begin{equation}
    \label{eq:s3}
    v(y)=\sqrt{1-y}u(y)
  \end{equation}
  in equation~\eqref{eq:h3}, then we obtain again the Gauss hypergeometric
  equation
  \begin{equation}
    \label{eq:hypu}
    y(1-y)u'' + \left( \frac{3}{2} -p +(p-3)y\right)u'+(p-1)u=0,
  \end{equation}
  with  the parameters 
  \begin{equation}
    \label{eq:uabc}
    a=1-p, \qquad b=1, \qquad c = \frac{1}{2}(3-2p).
  \end{equation}
  For $p\in \N$, parameter $a$ is non-positive integer, thus function
  $u(y):=\phantom{\vert}_2F_{1}\left(a,b,c,y\right)$ is a polynomial
  of degree $p-1$.  Hence, we have
  \begin{multline}
    \label{eq:ffu}
    v(y)= \phantom{\vert}_2F_{1}\left(\frac{1}{2},\frac{1}{2}-p,
      \frac{3}{2}-p,y\right)= \sqrt{1-y}u(y)=\\
    \sqrt{1-y}\phantom{\vert}_2F_{1}\left( 1-p,1
      ,\frac{1}{2}(3-2p),y\right).
  \end{multline}
  It follows that $R_{p-1}(y)$ is equal to
  $\phantom{\vert}_2F_{1}\left( 1-p,1 ,\frac{1}{2}(3-2p),y\right)$,
  and so it is a polynomial of degree $p-1$, as we claimed.

  Now we have to show that $\widetilde H_{3,p}$ given by~\eqref{eq:th3a} is
  a polynomial with respect to the momenta.  But we can write
  \begin{multline}
    \label{eq:th3t}
    \widetilde H_{3,p}=\alpha I_2^p \left( 2(d+1)q_1p_1-q_2p_2
    \right)+\alpha(d-1)I_2^p\frac{I_1q_2}{\sqrt{I_2}}\sqrt{1-\frac{\kappa
        q_2^{d+1}}{I_2}}R_{p-1}\left( \frac{\kappa q_2^{d+1}}{I_2}
    \right)=\\
    =\alpha I_2^p \left( 2(d+1)q_1p_1-q_2p_2
    \right)+\alpha(d-1)I_1q_2\sqrt{I_2}\sqrt{1-\frac{\kappa
        q_2^{d+1}}{I_2}}S_{p-1},
  \end{multline}
  where
  \begin{equation}
    \label{eq:sp}
    S_{p-1}
    =I_2^{p-1}R_{p-1}\left( \frac{\kappa q_2^{d+1}}{I_2},
    \right),
  \end{equation}
  is a polynomial of degree not higher than $2(p-1)$ with respect to
  the momenta. Now, it is enough to notice that according
  to~\eqref{eq:bet1} and~\eqref{int-case4}, we have
  \begin{equation*}
    \sqrt{I_2}\sqrt{1-\frac{\kappa
        q_2^{d+1}}{I_2}}= -p_1,
  \end{equation*}
  in order to see that both terms in~\eqref{eq:th3t} are polynomial
  with respect to the momenta. Moreover, it is easy to see that the
  degree of $\widetilde H_{3,p}$ with respect to the momenta is $2p+1$.
\end{proof}

\subsection{Examples}

As example we take the family with $d=(1-p)/p$, and we choose $p=5$. In
this case we have
\[
H_1=2 p_1 p_2 + \frac{q_1}{q_2^{4/5}},\qquad I_2=p_1^2 + 5 q_2^{1/5}.
\]
Up to a multiplicative constant, the first integral $\widetilde{H}_3:=
\widetilde{H}_{3,p}$ defined in Lemma~\ref{lem:th3} by
equation~\eqref{eq:th3} has the following form
\[
\begin{split}
  \widetilde{H}_3=&q_2^{-4/5}[128 p_1^8 p_2 q_2^{4/5}+64 p_1^6 (p_1
  q_1 + 35 p_2 q_2)+
  560 p_1^4 q_2^{1/5} (2 p_1 q_1 + 25 p_2 q_2)\\
  &+7000 p_1^2 q_2^{2/5} (p_1 q_1 + 5 p_2 q_2)+4375 q_2^{3/5} (4 p_1
  q_1 + 5 p_2 q_2)].
\end{split}
\]
There exists also other first integral of degree $2p-2=8$ in the
momenta
\[
\begin{split}
  H_4= & 896p_1^6p_2(p_1q_1-p_2q_2)+224q_2^{-4/5}p_1^4(2p_1^2q_1^2+66p_1p_2q_1q_2-65p_2^2q_2^2)\\
  &+
  16q_2^{-3/5}p_1^2(476p_1^2q_1^2+5215p_1p_2q_1q_2-5025p_2^2q_2^2)\\
  &+5q_2^{-2/5}(9072p_1^2q_1^2 +32920p_1p_2q_1q_2
  -30625p_2^2q_2^2)+102400q_1^2q_2^{-1/5},
\end{split}
\]
and the following relation holds
\[
3125 H_4 +\widetilde{H}_3^2 - 4096 H_1^2 I_2^7=0.
\]
Moreover, $I_2^{-1/(2\alpha)}$, $\widetilde{H}_3^{1/(2\alpha)}$,
$H_4^{1/(4\alpha)}$ satisfy system \eqref{par-eq} with normalisation
$\kappa_2=1$.  Other examples justify the following.
\begin{conjecture}
  In the case 3 of Theorem~\ref{prop12}, if $d=(1-p)/p$ for a
  certain $p\in\N$, then there exists $\alpha_p\in\R$, such that first
  integral $H_4$ given by
  \begin{equation*}
    H_4 =  \widetilde{H}_{3,p}^2 + \alpha_{p} I_2^{2p-3}H_1^{2}
  \end{equation*}
  is polynomial in the momenta of degree $2p-2$. Moreover, integrals
  $H_1$, $I_2$ and $H_4$ are functionally independent.
\end{conjecture}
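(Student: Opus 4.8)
The plan is to reduce the conjecture to a statement about the top--degree part of the polynomial $\widetilde H_{3,p}^2$ and to control that part through the algebraic description of $H_3$ obtained in Lemma~\ref{lem:th3}. First I would exploit the decomposition $H_3=\widetilde H_{3,p}+\sqrt{I_2}\,J_{3,p}$ by introducing the \emph{conjugate} first integral $\overline H_3:=\widetilde H_{3,p}-\sqrt{I_2}\,J_{3,p}$, obtained from the second branch of the two--valued function $H_3$ (equivalently, by flipping the sign in the on--shell relation $\sqrt{I_2}\sqrt{1-x}=-p_1$). Since $H_3=4\alpha(d+1)I_2\omega_2$ and $\overline H_3$ are first integrals on the two sheets $p_1\gtrless 0$, their symmetric functions $\widetilde H_{3,p}=\tfrac12(H_3+\overline H_3)$ and $H_3\overline H_3=\widetilde H_{3,p}^2-I_2J_{3,p}^2$ are single--valued and polynomial in the momenta. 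Because $J_{3,p}$ is, up to a constant, a product of $H_1=I_1$ with a power of $I_2$, the correction $I_2J_{3,p}^2$ is a fixed multiple of $H_1^2I_2^{2p-3}$; this both fixes the required constant $\alpha_p$ (by matching the leading monomial $p_1^{4p-4}p_2^{2}$ common to $\widetilde H_{3,p}^2$ and $I_2^{2p-3}H_1^2$) and identifies $H_4$ with the single--valued polynomial $H_3\overline H_3$.

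To bound the degree of $H_3\overline H_3$ I would expand it from $H_3=2(d+1)\rho+(d-1)\alpha I_1\Phi$, so that the only nontrivial ingredient is $\Phi\overline\Phi=v^2-u^2$, where $u$ carries the $\sqrt{I_2}$--part and $v$ is the polynomial part proportional to $p_1I_2^{\,p-1}W_{p-1}$. The engine of the degree drop is a truncation identity for the hypergeometric polynomial $W_{p-1}$: as $W_{p-1}(x)$ is the degree--$(p-1)$ Taylor polynomial of $(1-x)^{-1/2}$, one has $W_{p-1}(x)^2=(1-x)^{-1}+x^{p}g(x)$ for an explicit power series $g$. Substituting the on--shell relations $1-x=p_1^2/I_2$, $(1-x)^{-1}=I_2/p_1^2$ and $x^{p}=(\kappa q_2^{d+1})^{p}/I_2^{p}$, all consequences of $I_2=p_1^2+\kappa q_2^{d+1}$, gives the polynomial identity
\begin{equation*}
  p_1^2\,I_2^{\,2p-2}\,W_{p-1}(x)^2 = I_2^{\,2p-1} + R, \qquad \deg R\le 2p-2,
\end{equation*}
since $(\kappa q_2^{d+1})^{p}$ has momentum--degree zero and annihilates $p$ powers of $I_2$. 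The block $I_2^{\,2p-1}$ is exactly the term that cancels against $u^2\propto I_2^{\,2p-1}$ and against the $I_2J_{3,p}^2$ correction, so the leading blocks telescope and $H_3\overline H_3$ collapses to degree $2p-2$.

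The step I expect to be the main obstacle is making this cancellation rigorous through \emph{all} intermediate degrees, not only at the top: the displayed identity handles the highest block, but the sub--leading coefficients of $W_{p-1}^2-(1-x)^{-1}$ must be shown to reassemble, after the substitution $x=\kappa q_2^{d+1}/I_2$ and repeated use of $\kappa q_2^{d+1}=I_2-p_1^2$, into honest polynomials of degree at most $2p-2$ with no surviving contribution in the degrees $2p-1,\dots,4p-2$. I would organise this either by an induction on $p$ driven by the contiguity relations of $\phantom{\vert}_2F_{1}\!\left(\tfrac12,p,1+p,x\right)$, or by a direct generating--function computation for the squared truncated binomial series, tracking the coefficient of each $I_2^{\,2p-1-j}$ and checking it is absorbed; the constant $\alpha_p$ then emerges in closed form in terms of $C=\phantom{\vert}_2F_{1}(\tfrac12,p,1+p,1)=2^{p}p!/(2p-1)!!$, consistent with the normalisations appearing in the worked examples. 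Finally, the functional independence of $H_1$, $I_2$ and $H_4$ would be settled by evaluating the Jacobian of $(H_1,I_2,H_4)$ at a generic point, using that the degree--$(2p-2)$ remainder $H_4$ carries a monomial that cannot arise from products and powers of $H_1$ and $I_2$ alone.
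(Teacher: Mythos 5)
First, be aware that the statement you are proving is presented in the paper as a \emph{conjecture}: the authors give no proof, only the worked example $p=5$, $d=-4/5$ (where $3125H_4+\widetilde{H}_3^2-4096H_1^2I_2^7=0$) together with the remark that ``other examples justify'' it. There is therefore no proof in the paper to compare against; your proposal must stand on its own, and as written it does not close.

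Two concrete problems. (i) Your identification $H_4=H_3\overline{H}_3$ rests on the claim that $I_2J_{3,p}^2$ is a fixed multiple of $H_1^2I_2^{2p-3}$. But by~\eqref{eq:32} one has $J_{3,p}\propto I_1 I_2^{p-1}$, hence $I_2J_{3,p}^2\propto H_1^2I_2^{2p-1}$ --- off by a factor $I_2^{2}$ from the correction term the conjecture requires, and of momentum degree $4p+2$ rather than $4p-2$, so it cannot cancel the leading block of $\widetilde{H}_{3,p}^2$. Moreover, since Lemma~\ref{lem:th3} already asserts that $\widetilde{H}_{3,p}$ and $J_{3,p}$ are polynomial first integrals, passing to the conjugate product buys nothing: every combination $\widetilde{H}_{3,p}^2+c\,I_2^{k}H_1^2$ is automatically a polynomial first integral, and the entire content of the conjecture is the degree collapse to $2p-2$ for one specific $\alpha_p$, plus functional independence. (ii) That collapse is precisely the step you leave open. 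Your truncation identity $(1-x)W_{p-1}(x)^2=1+x^{p}Q_{p-1}(x)$ is correct and, after the on--shell substitutions $1-x=p_1^2/I_2$ and $x^{p}=p^{p}q_2/I_2^{p}$, it does show that $p_1^2I_2^{2p-2}W_{p-1}^2-I_2^{2p-1}$ has momentum degree at most $2p-2$; but inside $\widetilde{H}_{3,p}^2$ this square is multiplied by $I_1^2$ and accompanied by the cross term $2L\cdot c\,p_1I_1I_2^{p-1}W_{p-1}$ of degree $2p+2$, where $L$ denotes the $\rho$--part of~\eqref{eq:31}, so the bound your argument actually yields is $2p+2$, not $2p-2$. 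You yourself flag the full cancellation through the intermediate degrees as ``the main obstacle'' and only sketch how one might organise it; the functional--independence claim is likewise only asserted. Until those cancellations are carried out --- and the exponent mismatch in (i) is repaired, which will also require reconciling the naive degree $2p+1$ of formula~\eqref{eq:31} with the degree $2p-1$ asserted in Lemma~\ref{lem:th3} and exhibited in the $p=5$ example --- the statement remains, as in the paper, a conjecture.
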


Now, let $d=(1+2p)/(1-2p)$, and we consider the case with $p=2$, i.e.,
$d=-5/3$. In this case we have
\begin{equation}
\label{eq:h1i2}
  H_1=2p_1 p_2 + \dfrac{q_1}{q_2^{5/3}},\mtext{and}
  I_2=p_1^2-\dfrac{3}{2q_2^{2/3}}.
\end{equation}
The additional first integral
$\widetilde{H}_3:=\widetilde{H}_{3,p}$ defined by \eqref{eq:th3a} of
degree  $2p+1$ with respect to the momenta  is, up to a multiplicative constant, of
the following form
\begin{equation}
\label{eq:h3d53}
\widetilde{H}_3=4 p_1^4 (p_2 q_2 - p_1 q_1)+\frac{4 p_1^2 (5 p_1 q_1
  - 9 p_2 q_2)}{q_2^{2/3}}- \frac{9 (5 p_1 q_1 + 3 p_2
  q_2)}{q_2^{4/3}}.
\end{equation}
We have also first integral $H_4$  of degree $2p$ with respect to the
momenta given by
\begin{equation}
  \label{eq:hh3}
  H_4=p_1^2(p_1q_1-p_2q_2)^2-\dfrac{(11p_1q_1-27p_2q_2)(p_1q_1-p_2q_2)}{2q_2^{2/3}}+
\dfrac{16q_1^2}{q_2^{4/3}},
\end{equation}
and the following identity holds
\[
864 H_1^2 -\widetilde{H}_3^2 + 16 H_4 I_2^3=0.
\]
Moreover, $I_2^{-1/(2\alpha)}$, $\widetilde{H}_3^{-1/(4\alpha)}$,
$H_4^{-1/(2\alpha)}$ satisfy system \eqref{par-eq} with normalisation
$\kappa_2=1$.

Other examples with different values of $p$ allow us to formulate the following.
\begin{conjecture}
  In the case 3 of Theorem~\ref{prop12}, if $d=(1-2p)/(1+2p)$ for a
  certain $p\in\N$, then there exists $\alpha_p\in\R$, such that first
  integral $H_4$ defined by
  \begin{equation*}
    H_4I_2^{p+1} + \widetilde{H}_{3,p}^2 + \alpha_{p} H_1^{2}=0
  \end{equation*}
  is polynomial in the momenta of degree $2p$. Moreover, integrals
  $H_1$, $I_2$ and $H_4$ are functionally independent.
\end{conjecture}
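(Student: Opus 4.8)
The plan is to convert the stated relation into a divisibility assertion in a polynomial ring and to isolate the single scalar hypergeometric identity on which everything hinges. I read the hypothesis as $d=\tfrac{1+2p}{1-2p}$, the value for which the preceding lemma produces the polynomial integral $\widetilde H_{3,p}=H_3I_2^{p}$ of momentum-degree $2p+1$ (the exponent $I_2^{p+1}$ in the asserted syzygy matches only this $d$). Introducing $s$ with $q_2^{d+1}=s^{2}$ turns
\begin{equation*}
I_2=p_1^{2}+\kappa s^{2},\qquad \kappa=\tfrac1{d+1},\qquad H_1=2p_1p_2+q_1q_2^{d},
\end{equation*}
and $\widetilde H_{3,p}$ into (Laurent) polynomials in $p_1,p_2,q_1,s$. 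The relation $H_4I_2^{p+1}+\widetilde H_{3,p}^{2}+\alpha_pH_1^{2}=0$ is then exactly the claim that for a suitable constant $\alpha_p$ the polynomial $\widetilde H_{3,p}^{2}+\alpha_pH_1^{2}$ is divisible by $I_2^{p+1}$; the quotient $H_4$ then has momentum-degree $2(2p+1)-2(p+1)=2p$, as asserted.

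To establish divisibility I would exploit the structural form of $\widetilde H_{3,p}$. Writing $\widetilde H_{3,p}=a\,I_2^{p}+b\,H_1$, where $a=2(d+1)\rho$ is linear in the momenta (and linear in $p_2$) while $b=-(d-1)\alpha\,q_2p_1S_{p-1}$ is free of $p_2$ — here I used $\sqrt{I_2}\sqrt{1-\kappa q_2^{d+1}/I_2}=-p_1$ and the lemma's polynomial $S_{p-1}=I_2^{p-1}R_{p-1}(\kappa q_2^{d+1}/I_2)$ — one obtains $\widetilde H_{3,p}^{2}+\alpha_pH_1^{2}=a^{2}I_2^{2p}+2ab\,H_1I_2^{p}+(b^{2}+\alpha_p)H_1^{2}$, whose first term already lies in $(I_2^{p+1})$ since $2p\ge p+1$. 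Because $p_2$ occurs in $H_1$ and in $a$ but not in $b$ or $I_2$, I would read the remaining congruence modulo $I_2^{p+1}$ as a polynomial identity in the free variable $p_2$: collecting the coefficients of $p_2^{2},p_2^{1},p_2^{0}$ gives three congruences, and a short calculation using $(d+1)p_1^{2}+q_2^{d+1}=(d+1)I_2$ shows that the $p_2^{1}$ and $p_2^{0}$ ones follow from the $p_2^{2}$ one. After cancelling the invertible factor $p_1$, the whole conjecture collapses to the single scalar congruence
\begin{equation*}
(d-1)^{2}\alpha^{2}q_2^{2}p_1^{2}S_{p-1}^{2}+4(d-1)\alpha^{2}q_2^{2}S_{p-1}I_2^{p}+\alpha_p\equiv0 \pmod{I_2^{p+1}},
\end{equation*}
i.e. the left-hand quadratic in $S_{p-1}$ must reduce to a \emph{constant} modulo $I_2^{p+1}$, and that constant fixes $\alpha_p$.

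Substituting $p_1^{2}=I_2(1-y)$ and $S_{p-1}=I_2^{p-1}R_{p-1}(y)$ with $y=\kappa q_2^{d+1}/I_2$, and clearing the common factor, this congruence is equivalent to the requirement that
\begin{equation*}
G(y):=(d-1)(1-y)R_{p-1}(y)^{2}+4R_{p-1}(y)
\end{equation*}
have vanishing coefficients in degrees $p-1,\dots,2p-2$. Since $(1-y)R_{p-1}^{2}=F(y)^{2}$ with $F(y)={}_2F_{1}(\tfrac12,\tfrac12-p;\tfrac32-p;y)$, and the lower parameter equals $a+b+\tfrac12$ for $a=\tfrac12$, $b=\tfrac12-p$, this is precisely the regime of \emph{Clausen's quadratic identity}: it forces $(1-y)R_{p-1}^{2}$ to exhibit a gap in degrees $p,\dots,2p-2$ (its leading degree-$(2p-1)$ term surviving through the degenerate, terminating parameters), after which a single leading-coefficient relation kills degree $p-1$ of $G$. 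Verifying this Clausen-type identity uniformly in $p$ — in particular handling the non-positive-integer parameters that make the terminating ${}_3F_{2}$ delicate, so that the top coefficient genuinely persists while the intermediate ones vanish — is the step I expect to be the main obstacle; the worked cases $p=2,3$ above confirm both the gap and the resulting value of $\alpha_p$. An alternative route, bypassing Clausen's subtleties, is to prove the gap directly by showing that $F^{2}$ solves the symmetric-square of the hypergeometric equation and inducting on $p$ via the three-term contiguity relations.

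Finally, functional independence of $H_1,I_2,H_4$ is the easy part and can be read off from the top momentum-symbols. From $\widetilde H_{3,p}=aI_2^{p}+bH_1$ one finds the leading symbol $\sigma(\widetilde H_{3,p})=2\alpha(d+1)\,p_1^{2p}L$ with $L=p_1q_1-q_2p_2$, whence $\sigma(I_2)=p_1^{2}$, $\sigma(H_1)=2p_1p_2$ and $\sigma(H_4)=-4\alpha^{2}(d+1)^{2}p_1^{2p-2}L^{2}$. Since $\sigma(H_4)$ involves $q_1,q_2$ while $\sigma(I_2),\sigma(H_1)$ do not, the Jacobian of $(\sigma(I_2),\sigma(H_1),\sigma(H_4))$ with respect to $(p_1,p_2,q_1)$ is triangular with diagonal $2p_1,2p_1,-8\alpha^{2}(d+1)^{2}p_1^{2p-1}L$, which is nonzero at a generic point; hence $\mathrm dH_1\wedge\mathrm dI_2\wedge\mathrm dH_4\neq0$ and the three integrals are functionally independent.
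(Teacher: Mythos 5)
First, be aware that the paper does not actually prove this statement: it is stated as a conjecture, supported only by the worked example $p=2$ (i.e.\ $d=-5/3$) and an appeal to unspecified ``other examples''. So the only question is whether your proposal closes it on its own, and it does not, although it organizes the problem well. On the positive side: you correctly repair the hypothesis (the exponent $I_2^{p+1}$ and the preceding lemma force $d=(1+2p)/(1-2p)$, not the printed $(1-2p)/(1+2p)$); the reformulation as divisibility of $\widetilde H_{3,p}^{2}+\alpha_pH_1^{2}$ by $I_2^{p+1}$, the degree count $2(2p+1)-2(p+1)=2p$, and the functional-independence argument from leading momentum symbols (which for $p=2$ is consistent with $\sigma(H_4)=p_1^{2}(p_1q_1-p_2q_2)^{2}$ read off from the displayed $H_4$) are all sound, conditional on the divisibility.

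The gap is at the crux, and you flag it yourself. Two intermediate reductions are asserted rather than carried out (that the $p_2^{1}$- and $p_2^{0}$-congruences follow from the $p_2^{2}$-one, and that the latter is equivalent to the stated coefficient condition on $G(y)$), and then everything rests on a coefficient-gap property of $F(y)^{2}=(1-y)R_{p-1}(y)^{2}$ with $F={}_2F_{1}\bigl(\tfrac12,\tfrac12-p;\tfrac32-p;y\bigr)$. Clausen's identity, which you invoke for this, does not apply in its standard form: the induced ${}_3F_{2}$ has lower parameter $2-2p$, whose Pochhammer symbol vanishes at order $k=2p-1$, one step before the numerator parameter $1-2p$ terminates the series, so the right-hand side is ill-defined precisely in the degree range $p,\dots,2p-1$ where the gap and the surviving top coefficient must be established. (The claim is consistent with $p=2$, where $(1-y)(1+2y)^{2}=1+3y-4y^{3}$ indeed skips degree $2$.) Until that identity --- or the symmetric-square/contiguity induction you mention as an alternative --- is proved uniformly in $p$, your argument leaves the statement where the paper left it: a conjecture verified for small $p$.
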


Let us determine the complete algebra generated by first
integrals~\eqref{eq:h1i2}, \eqref{eq:h3d53}, \eqref{eq:hh3}  and
$\rho$ which in  the case $d = -5/3$ has the following form
\begin{equation*}
\rho =-\frac{1}{2}(q_1p_1+3q_2p_2).
\end{equation*}
It is easy to check that first integral given by
\[
\widetilde H_3=\{I_2,H_4\}\,,
\]
satisfies equations \eqref{par-eq} with $\kappa_2=2$. The complete
algebra of integrals and $\rho$ is given by
\[
\begin{array}{llll}
  \{I_1,\rho\}=2I_1, \quad \{I_2,\rho\}=I_2,
  &\{H_4,\rho\}=H_4\,,\quad &\{\widetilde H_3,\rho\}=2\widetilde H_3
  ,\qquad \{I_2,H_4\}=\widetilde H_3,
  \\
  \{I_1,I_2\}=\{I_1,\widetilde H_3\}=\{I_1,H_4\}=0,  &\{I_2,\widetilde H_3\}=8 I_2^3,\quad
  &\{\widetilde H_3,H_4\}=24 I_2^2H_4.
\end{array}
\]
Using this algebra we easily obtain the desired angle variable
\[
\omega_2=-\dfrac{1}{8}\dfrac{F\left(\dfrac{I_2}{\sqrt{I_1}}\right)I_1+
4\sqrt{H_4I_2^3+54
    I_1^2}}{I_2^3},
\]
up to canonical transformation $\omega_2\to \omega_2+f(I_2)$. Here $F$
is arbitrary function of the constant of motion, chosen in such a way
that relation $\{\omega_1,\omega_2\}=0$ is satisfied.

\section{Case 4}
\label{sec:c4}
In case 4 of Theorem~\ref{prop12} we have
\begin{equation}\label{h-case1}
 H_1=2p_1p_2+q_1^{-d-2}q_2^d.
\end{equation}
The system generated by this  Hamiltonian  is integrable for an
arbitrary value of parameter $d$ with the following first
integral
\begin{equation}
  I_2=(p_1q_1-p_2q_2)^2-2\frac{q_2^{d+1}}{q_1^{d+1}}.
  \label{eq:hah2}
\end{equation}
This system admits separation of variables and we show that for
rational values of $d$ it is super-integrable.

\subsection{Super-integrability and additive separation of variables}

Let us introduce the following canonical coordinates $(r,\varphi)$ and
momenta $(p_r,p_{\varphi})$ defined by
\begin{equation}
  q_1=r(\cosh\varphi+ \sinh\varphi)=re^{ \varphi},\qquad
  q_2=r(\cosh\varphi- \sinh\varphi)=re^{-\varphi},
\end{equation}
and
\begin{equation}
  p_1=\frac{e^{-\varphi}}{2}\left(p_r+\frac{p_{\varphi}}{r}\right),\qquad
  p_2=\frac{e^{\varphi}}{2}\left(p_r-\frac{p_{\varphi}}{r}\right).
\end{equation}
In new variables Hamiltonian $H_1$ and first integral $J_2= I_2/2$
take the forms
\begin{equation}
  H_1=\frac{p_r^2}{2}-\frac{p_{\varphi}^2}{2r^2}+\frac{e^{-2(d+1)\varphi}}{r^2},
  \quad J_2=\frac{1}{2}p_{\varphi}^2-e^{-2(d+1)\varphi}.
\end{equation}
New Hamilton's equations read
\begin{equation}
  \begin{split}
    &\dot r=p_r,\qquad\qquad
 \dot p_r=\frac{-p_{\varphi}^2+2e^{-2(d+1)\varphi}}{r^3},\\
    &\dot \varphi=-\frac{p_{\varphi}}{r^2},\qquad\,\, \dot
    p_{\varphi}= \frac{2 (d+1)e^{-2(d+1)\varphi}}{r^2}.
  \end{split}
\end{equation}
Let us note that
\begin{equation}
  H_1=\frac{1}{2}p_r^2-\frac{1}{r^2}J_2.
\end{equation}
In order to perform the explicit integration we introduce as in
\cite{Borisov:09::} a new independent variable $\tau$ such that
$\mathrm{d}\tau/\mathrm{d}t=1/r^2$. Then we find that
\[
p_r=\dfrac{r'}{r^2},\qquad p_{\varphi}=-\varphi',
\]
where prime denotes the differentiation with respect to $\tau$.  In
effect we have
\[
H_1=\dfrac{r'^2}{2r^4}-\dfrac{1}{r^2}J_2,\mtext{and}
J_2=\dfrac{\varphi'^2}{2}-e^{-2(d+1)\varphi},
\]
i.e., we effectively separated variables and we can make two
independent quadratures
\begin{equation}
  \int \dfrac{\mathrm{d}r}{\sqrt{2(H_1r^4+J_2r^2)}}=\tau +C_1,\qquad
  \int
  \dfrac{\mathrm{d}\varphi}{\sqrt{2\left(J_2+e^{-2 (d+1)\varphi}\right)}}=
  \tau +C_2.
  \label{eq:ccalki0}
\end{equation}
The explicit forms of these elementary integrals are following
\begin{equation}
  \dfrac{1}{\sqrt{2J_2}}\ln\dfrac{r}{2(\sqrt{J_2}+\sqrt{J_2+H_1r^2}\,)}=\tau+C_1,
  \label{eq:sum10}
\end{equation}
and
\begin{equation*}
  \dfrac{1}{(d+1)\sqrt{2J_2}}\operatorname{arcsinh}\left[\sqrt{J_2}e^{ (d+1)\varphi}\right]=\tau+C_2.
\end{equation*}
Using well-known formula
\[
\operatorname{arcsinh}z=\ln [z+\sqrt{z^2+1}],
\]
one can rewrite the last integral as
\begin{equation}
  \dfrac{1}{(d+1)\sqrt{2J_2}}\ln\left[e^{ (d+1)\varphi}\sqrt{J_2}+\sqrt{e^{ 2(d+1)\varphi}J_2+1}
  \right]=\tau+C_2.
  \label{eq:sum20a}
\end{equation}
From \eqref{eq:sum10} and \eqref{eq:sum20a} we deduce that
\begin{equation}\begin{split}
    & I=(d+1)\sqrt{2J_2}(C_2-C_1)\\
    &=\ln\left[e^{ (d+1)\varphi}\sqrt{J_2}+\sqrt{e^{
          2(d+1)\varphi}J_2+1}
    \right]-(d+1)\ln\dfrac{r}{2(\sqrt{J_2}+\sqrt{J_2+H_1r^2}\,)}\\
    & = \ln\left[\frac{2^{d+1}}{r^{d+1}}\left(e^{
          (d+1)\varphi}\sqrt{J_2}+\sqrt{e^{
            2(d+1)\varphi}J_2+1}\right) \left(
        \sqrt{J_2}+\sqrt{J_2+H_1r^2}\right)^{d+1} \right],
  \end{split}
\end{equation}
is a first integral of the system.  We use relations
\[
J_2+H_1r^2=\frac{1}{2}r^2p_r^2,\qquad e^{
  2(d+1)\varphi}J_2+1=\dfrac{1}{2}e^{ 2(d+1)\varphi}p_{\varphi}^2,
\]
to introduce explicitly the momenta and then $I$ takes the form
\[
I=\ln\left[ \frac{2^{d/2}}{r^{d+1}}e^{
    (d+1)\varphi}(\sqrt{2J_2}-p_{\varphi})(\sqrt{2J_2}+rp_r)^{d+1}
\right],
\]
We can take as the first integral its exponent, more precisely
\begin{equation}
  H_3=2^{-d/2}\exp(I)=\frac{1}{r^{d+1}}e^{ (d+1)\varphi}(\sqrt{I_2}-p_{\varphi})(\sqrt{I_2}+rp_r)^{d+1}.
\end{equation}
When we come back to original variables, then this first integral
takes the form
\[
H_3=\frac{1}{q_2^{d+1}}(p_2q_2-p_1q_1+\sqrt{I_2}) \left(p_1 q_1 + p_2
  q_2+\sqrt{I_2}\right)^{d+1},
\]
It is easy to notice that
\[
H_4=\frac{1}{q_2^{d+1}}(p_2q_2-p_1q_1-\sqrt{I_2}) \left(p_1 q_1 + p_2
  q_2-\sqrt{I_2}\right)^{d+1}.
\]
is also a first integral.
For rational $d$ these functions are algebraic. Thus we can
recapitulate these considerations by the following theorem.
\begin{theorem}
  Hamiltonian system given by \eqref{h-case1} is super-integrable with
  algebraic additional first integral iff $d$ is a rational number.
\end{theorem}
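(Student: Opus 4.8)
The plan is to prove the two implications separately, relying on the explicit separation of variables already carried out. For the direction ``$d\in\Q\Rightarrow$ super-integrable with an algebraic integral'' I would simply invoke the first integral $H_3$ constructed above,
\[
H_3=\frac{1}{q_2^{d+1}}\,(p_2q_2-p_1q_1+\sqrt{I_2})\left(p_1q_1+p_2q_2+\sqrt{I_2}\right)^{d+1},
\]
together with its partner $H_4$. The crucial point is algebraicity. Since $I_2$ of \eqref{eq:hah2} is a polynomial, $\sqrt{I_2}$ is algebraic; writing $d=m/n$ in lowest terms, both $q_2^{-(d+1)}=q_2^{-(m+n)/n}$ and the factor $(p_1q_1+p_2q_2+\sqrt{I_2})^{(m+n)/n}$ satisfy polynomial equations over the field of rational functions adjoined $\sqrt{I_2}$, hence are algebraic, and a product of algebraic functions is algebraic. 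It then remains to record functional independence of $H_3$ from $H_1$ and $I_2$, which I would verify by computing the rank of the Jacobian $\partial(H_1,I_2,H_3)/\partial(p_1,p_2,q_1,q_2)$ at a generic phase point, or equivalently by noting that $H_3$ is a nonconstant function of the surviving angle on a generic Liouville torus. This already yields super-integrability.

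For the converse I would argue by monodromy. On a generic common level set of $H_1$ and $I_2$ the separation \eqref{eq:ccalki0} exhibits the motion as a product of two one–dimensional motions, so any additional first integral restricts there to a function of the surviving angle variable $\omega_2\propto I/\sqrt{I_2}$; it therefore suffices to determine when a nonconstant single-valued algebraic function of $\omega_2$ can exist. Analytically continuing the two quadratures around the branch points of $\sqrt{H_1r^2+J_2}$ and of $\sqrt{J_2+e^{-2(d+1)\varphi}}$ shifts the constants $C_1,C_2$ by the corresponding periods, so the integral $I=(d+1)\sqrt{2J_2}(C_2-C_1)$, which has the form $\ln[\cdots]-(d+1)\ln[\cdots]$, acquires increments in $2\pi\rmi\bigl(n_1-(d+1)n_2\bigr)$ with $n_1,n_2\in\Z$. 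Exponentiating, $H_3$ is multiplied by $\exp\!\bigl(-2\pi\rmi(d+1)n_2\bigr)$, and the group of such multipliers is finite precisely when $d+1$, equivalently $d$, is rational. When $d$ is irrational these multipliers are dense in the unit circle, so no branch of $\exp(I)$, and by the reduction to $\omega_2$ no nonconstant algebraic function of the angle, can take finitely many values; hence an algebraic additional first integral forces $d\in\Q$.

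The main obstacle I expect lies in the converse, and within it the reduction step: rigorously showing that \emph{every} algebraic additional first integral descends, on the generic torus, to a function of $\omega_2$ inheriting its monodromy, so that the period computation genuinely excludes all algebraic integrals rather than only the explicit $H_3$. The period bookkeeping itself, identifying the increments of $I$ with integer combinations of $2\pi\rmi$ and $2\pi\rmi(d+1)$ coming from the two separated quadratures (equivalently, tracking the fractional powers $(q_1/q_2)^{(d+1)/2}$ and $(q_1q_2)^{-(d+1)/2}$ that appear in $H_3$), is routine once the branch structure of the two integrands is laid out, but it must be carried out carefully to pin down the multiplier group and thereby close the argument.
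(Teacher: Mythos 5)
Your forward direction is essentially the paper's own argument. The theorem there is offered as a ``recapitulation'' of the explicit separation of variables: the paper constructs $H_3=q_2^{-(d+1)}(p_2q_2-p_1q_1+\sqrt{I_2})\,(p_1q_1+p_2q_2+\sqrt{I_2})^{d+1}$ and its partner $H_4$ from the two quadratures \eqref{eq:ccalki0}, and simply remarks that ``for rational $d$ these functions are algebraic.'' Your field-theoretic justification (adjoin $\sqrt{I_2}$ to the rational functions; rational powers of elements of that field are algebraic; products of algebraic functions are algebraic) is a correct and more explicit version of that one-line remark, and the functional-independence check you defer is likewise not written out in the paper (it is implicit in the later construction of the polynomial integrals $\widetilde H_3,\widetilde H_4$).

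Where you diverge is the converse, and here you are attempting strictly more than the paper proves. The paper's ``only if'' amounts to the observation that the \emph{particular} integral $H_3$ produced by the quadratures fails to be algebraic for irrational $d$; it nowhere argues that \emph{no} algebraic additional integral can exist in that case. Your monodromy computation is sound as far as it goes: continuation around the branch points of the two separated quadratures shifts $I$ by $2\pi\rmi\bigl(n_1-(d+1)n_2\bigr)$, so $H_3$ (and every nonzero power of it) acquires multipliers $\rme^{-2\pi\rmi(d+1)n_2}$, which form a group that is finite iff $d\in\Q$ and dense in the unit circle otherwise; a single-valued algebraic function invariant under a dense subgroup of rotations of $H_3$ would have to be a function of $H_1$ and $I_2$ alone. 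But the reduction step you flag yourself --- that every algebraic additional first integral, restricted to a generic common level set of $H_1$ and $I_2$, descends to a function of the single surviving angle and inherits its monodromy --- is precisely the content of the ``only if'' and is left open in your proposal. So the honest summary is: your proof of the implication the paper actually establishes is complete and matches the paper's route; your converse is an attempted strengthening whose key lemma is not closed, but the paper does not close it either, so you are not behind the source on any point it genuinely proves.
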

Using $H_3$ and $H_4$ we can get first integrals which are polynomial
with respect to the momenta.  For
example, if $d$ is positive integer, then we take the following
first integrals
\[
\widetilde H_3=\frac{1}{\sqrt{I_2}}(H_3-H_4),\qquad \widetilde
H_4=H_3+H_4.
\]
In the sum $ H_3+H_4$  terms containing odd powers of
$\sqrt{I_2}$ disappear,  and, as a result, $\widetilde H_4$ is a
polynomial with respect to the momenta of degree
$d+2$.  The difference  $H_3-H_4$ contains only  odd powers of
$\sqrt{I_2}$ , thus it is divisible by $\sqrt{I_2}$,  and
$\widetilde H_3$ is polynomial in the momenta of degree $d+1$.

If  $d$ is a positive rational number of the form $d=d_1/d_2$, then we
put
\[
\begin{split}
  &F_3=H_3^{d_2}=\frac{1}{q_2^{d_1+d_2}}(p_2q_2-p_1q_1+\sqrt{I_2})^{d_2}
  \left(p_1 q_1 +
    p_2 q_2+\sqrt{I_2}\right)^{d_1+d_2},\\
  &F_4=H_4^{d_2}=\frac{1}{q_2^{d_1+d_2}}(p_2q_2-p_1q_1-\sqrt{I_2})^{d_2}
  \left(p_1 q_1 + p_2 q_2-\sqrt{I_2}\right)^{d_1+d_2},
\end{split}
\]
and then
\[
\widetilde H_3=\frac{1}{\sqrt{I_2}}(F_3-F_4),\qquad \widetilde
H_4=F_3+F_4,
\]
are  polynomial first integrals of degrees $d_1+2d_2-1$ and $d_1+2d_2$, respectively.

In the case when $d\geq-1$ is rational negative, i.e. $d=-d_1/d_2$,
and $d_2\geq d_1$ where
$d_1,d_2\in\N$ we take
\[
\widetilde H_3=\frac{1}{\sqrt{I_2}}(H_3^{d_2}-H_4^{d_2}) ,\qquad
\widetilde H_4=H_3^{d_2}+H_4^{d_2},
\]
as  polynomial in the momenta first integrals  of degrees
$2d_2-d_1-1$ and $2d_2-d_1$, respectively.

It is not clear if  there exist a polynomial  in momenta first
integral functionally independent with $H_1$ and $I_2$  for rational $d$ smaller than $-1$.

Let us note that potentials from case 4 written in Cartesian
coordinates as
\[
V=\frac{(x+\rmi y)^{k-1}}{(x-\rmi y)^{k+1}}
\]
were considered in the recent paper \cite{Kalnins:09}. In this paper
authors proved the super-integrability of such potentials for $k$
rational for classical and quantum systems. But the explicit form of
the additional first integral was not announced.

\subsection{Examples}
We showed that for rational $d$ the system is super-integrable. We
have the  second additional first integral  $\widetilde H_{3}$ which
can be chosen as a polynomial with respect to the momenta. Moreover,
there exists another additional polynomial integral of motion
$\widetilde H_4$ which can be normalised in such a way that
\[
\widetilde H_4=\{I_2,\widetilde H_3\}\,,
\]
satisfying  the equations \eqref{par-eq} as well.  For
$\kappa_2=1$ adding polynomial integrals $\widetilde H_{3,4}$ to the generators
$\rho$, $I_1$, $I_2$ we obtain the complete algebra of integrals of
motion and function $\rho$ defined  by the following brackets
\begin{equation} \label{alg-int1}
  \begin{array}{llll}
    \{I_1,\rho\}=-2\alpha I_1,&\quad\{I_1,I_2\}=0,\quad
    &\{I_1,\widetilde H_k\}=0\,,
    \quad &\{\rho,I_2\}=0\,,\\
    \\
    \{I_2, \widetilde H_3\}=\widetilde H_4\,,\quad
    &\{\widetilde H_{3,4},\rho\}=\widetilde H_{3,4},\quad
    \quad &\{I_2,\widetilde H_4\}=2\kappa^2\,I_2\widetilde H_3,\quad&\{\widetilde H_4,\widetilde H_3\}=\kappa^2\, \widetilde H_3^2\,.
  \end{array}
\end{equation}
Here parameter $\kappa$ depends on $\alpha$ and $d$. Let us present
some particular examples with cubic polynomial integrals
\begin{itemize}
 \item  for $\alpha=-\dfrac{1}{3}$, $\kappa=18$, $V=\dfrac{q_2^2}{q_1^{4}}$:
\[
\begin{split}
& \widetilde H_3=2p_2^3-2q_2p_2q_1^{-3}-p_1q_1^{-2},\\
&\widetilde H_4=12 p_2^3 (p_2 q_2-p_1 q_1)-6q_1^{-3} (p_1^2 q_1^2 + p_1 p_2 q_1 q_2 +
 4 p_2^2 q_2^2)+\frac{6 q_2^3}{q_1^6} ,\\ &
 36 H_1^3 - \widetilde H_4^2 + 36 \widetilde H_3^2 I_2=0,
\end{split}
\]
\item  for $\alpha=-1$, $\kappa=2 $, $V=\dfrac{1}{q_2^{2/3}\,q_1^{4/3}}$
\[
 \begin{split}
 & \widetilde H_3=2p_2(p_2q_2-p_1q_1)^2
-\dfrac{2p_2q_2-p_1q_1}{q_2^{2/3}q_1^{1/3}},\\
&\widetilde H_4=4 p_2 (p_2 q_2-p_1 q_1)^3-\dfrac{2 (p_1 q_1 - 4 p_2 q_2) (p_1 q_1 - p_2 q_2)}{q_1^{1/3} q_2^{2/3}}
+\frac{2}{q_1^{2/3} q_2^{1/3}},
 \end{split}
\]
\end{itemize}

fourth order integrals
\begin{itemize}
 \item  for  $\alpha=-\dfrac{1}{4}$, $\kappa=32$,  $V=\dfrac{q_2^3}{q_1^{5}}$
\[
 \begin{split}
&\widetilde H_3= p_2^4-\dfrac{3q_2^2p_2^2}{2q_1^4}-\dfrac{q_2p_2p_1}{q_1^3}-\dfrac{p_1^2}{2q_1^2}+
\dfrac{q_2^4}{4q_1^8},\\
&\widetilde H_4=8 p_2^4 (p_2 q_2-p_1 q_1)-\frac{4 (p_1^3 q_1^3 + p_1^2 p_2 q_1^2 q_2 + p_1 p_2^2 q_1 q_2^2 +
    5 p_2^3 q_2^3)}{q_1^4}+\dfrac{2 q_2^4 (3 p_1 q_1 + 5 p_2 q_2)}{q_1^8},\\
& 8 H_1^4 - \widetilde H_4^2 + 64 \widetilde H_3^2 I_2=0,
 \end{split}
\]
 \item  for $ \alpha=-1$, $\kappa=2$, $V=\dfrac{1}{q_2^{3/4}q_1^{5/4}}$
\[
 \begin{split}
 &\widetilde H_3= p_2(p_2q_2-p_1q_1)^3+\dfrac{1}{4\sqrt{q_1q_2}}-
\dfrac{(3p_2q_2-p_1q_1)(p_2q_2-p_1q_1)}{2q_2^{3/4}q_1^{1/4}},\\
 &\widetilde H_4= -2 p_2 (p_1 q_1 - p_2 q_2)^4-\dfrac{(p_1 q_1 - 5 p_2 q_2) (p_1 q_1 - p_2 q_2)^2}{q_1^{1/4}
q_2^{3/4}}+
\dfrac{3 p_1 q_1 - 5 p_2 q_2}{2 \sqrt{q_1} \sqrt{q_2}},
 \end{split}
\]

\end{itemize}
and fifth order additional integral of motion
\begin{itemize}
 \item  for $\alpha=-\dfrac{1}{5}$, $\kappa=50$,  $ V=\dfrac{q_2^4}{q_1^6}$
\[
 \begin{split}
&\widetilde H_3=\dfrac{1}{2}\,p_2\left(2p_2^2-\dfrac{q_2^3}{q_1^{5}}\right)
  \left(2p_2^2-\dfrac{3q_2^3}{q_1^{5}}\right)-\dfrac{p_1^3}{q_1^{2}}
  -\dfrac{2q_2p_2p_1^2}{q_1^{3}}
  +\dfrac{q_2^2p_1(q_2^3-3p_2^2q_1^5)}{q_1^{9}},\\
&\widetilde H_4= 20 p_2^5 (p_2 q_2-p_1 q_1)-\dfrac{10 (p_1^4 q_1^4 + p_1^3 p_2 q_1^3 q_2 + p_1^2 p_2^2 q_1^2
q_2^2 + p_1 p_2^3 q_1 q_2^3 + 6 p_2^4 q_2^4)}{q_1^5}\\
&+\dfrac{5 q_2^5 (4 p_1^2 q_1^2 + 7 p_1 p_2 q_1 q_2 + 9 p_2^2
  q_2^2)}{q_1^{10}}-\dfrac{5 q_2^{10}}{q_1^{15}}, \\
&25 H_1^5 - \widetilde H_4^2 + 100 \widetilde H_3^2 I_2=0.
 \end{split}
\]
\end{itemize}

According to \cite{ts07f,ts08b} $\widetilde H_3$ has to be function on the both
action variables $I_{1,2}$ and the single angle variable
$\omega_2$. Using the complete algebra \eqref{alg-int1} we can easily
get
\[
\widetilde H_3=\left(e^{\kappa\sqrt{2I_2}\,\omega_2}
  +c\,e^{-\kappa\sqrt{2I_2}\,\omega_2}\right)I_1^{1/2\alpha}\,.
\]
Here $\omega_2$ is defined up to canonical transformation $\omega_2\to
\omega_2+f(I_2)$ and parameter $c$ is calculated from the bracket
$\{I_2,\omega_2\}=1$. For instance, for $\alpha=-1/3$ and $d=-4$ we
have $c=-1/16$.

On the other hand,  if $\widetilde H_{3}$ is the known polynomial solution of
\eqref{par-eq}, which satisfies the algebra \eqref{alg-int1}, then we can
directly calculate $\omega_2$
\begin{equation}\label{angle-case1}
  \omega_2=\dfrac{1}{\kappa\sqrt{2I_2\,}} \, \ln\left(\,\, \dfrac{
      \widetilde H_3I_2+\sqrt{\widetilde H_3^2I_2^2-4cI_2I_1^{1/\alpha} }}{I_2
      I_1^{1/2\alpha}}\,\, \right)
\end{equation}
without separation of variables.

\section{Bi-Hamiltonian systems with higher order integrals of motion}
\label{sec:hi}
Let us recall that in Section~\ref{sec:bi} we investigated systems
given by multi-parameter family of
Hamiltonian functions given by
\begin{equation}\label{ham-monn}
  H=2p_1p_2+q_1^{-\frac{\beta(d+1)+\alpha}{\alpha}}\,q_2^d,
\end{equation}
see~\eqref{ham-mon}. For these systems with arbitrary  values of
parameters $\alpha$, $\beta$ and $d$, the corresponding bivector $P'$
given by \eqref{poi-2} is the Poisson bivector
compatible, by Proposition~\ref{pro:bi},  with canonical bivector
$P$. In previous sections we distinguished infinitely many  integrable
cases with additional first integrals which are polynomial in the momenta of the second
degree. Among them infinitely  many systems  are super-integrable.

In this section we continue our integrability analysis and we look for
cases when the additional first integral is a polynomial in momenta of
degree greater than two.
To this end we
solve equations
\eqref{par-eq} with respect to $H_2$ of the form~\eqref{anzatse}.
The existence of solutions of the obtained equations depends on
values of parameters $\alpha$, $\beta$, $d$ and
 $\kappa_2$.

Assuming that the additional first integral is of degree 4 in the
momenta and setting $\kappa_2=1$ we found   the following solutions
\begin{eqnarray}
  &&\rho=-\dfrac{1}{4}\,(q_1p_1+5p_2q_2)\,,\qquad V=q_1^3q_2^{-9/5}\,,\nn\\
  &&H_2= 4p_1^4-10\left(3p_1^2q_1^2-30p_1p_2q_1q_2+25p_2^2q_2^2\right)q_2^{-4/5}+225q_1^4q_2^{-8/5}\,,\nn\\
  \nn\\
  &&\rho=-\dfrac{1}{6}\,(2p_1q_1-p_2q_2)\,,\qquad V=q_1^2q_2^5\,,\nn\\
  &&H_2=16p_1^3(p_1q_1-p_2q_2)+4p_1q_1q_2^6(p_1q_1-2p_2q_2)+q_2^8\left(p_2^2-q_1^3q_2^4\right)\,,\nn\\
  \nn\\
  &&\rho=-\dfrac{1}{3}\,(p_1q_1+4p_2q_2)\,,\qquad V=q_1^2q_2^{-7/4}\,,\nn\\
  &&H_2=2p_1^3(p_1q_1-p_2q_2)-q_2^{-3/4}\left(13p_1^2q_1^2-80p_1p_2q_1q_2+64p_2^2q_2^2\right)+64q_1^3q_2^{-3/2}\,,\nn\\
  \nn\\
  &&\rho=-p_1q_1-\dfrac{1}{5}\,p_2q_2\,,\qquad V=q_1^{-3/4}q_2^{9/4}\,,\nn\\
  &&H_2=4p_1(p_1q_1-p_2q_2)^3-2q_1^{-3/4}q_2^{-5/4}\left(3p_1^2q_1^2-12p_1p_2q_1q_2+p_2^2q_2^2\right)
  +9q_1^{-1/2}q_2^{-5/2}\,.\nn
\end{eqnarray}
Assuming that $H_2$ is of  sixth degree with respect to the momenta  we obtained other variety
of solutions:
\begin{eqnarray}
  &&\rho=-\dfrac{1}{14}\,(7p_1q_1+p_2q_2)\,,\qquad V=q_1^{-2/3}q_2^{-10/3}\,,\nn\\
  &&H_2=4p_1^2(p_1q_1-p_2q_2)^4+(7p_1^2q_1^2-28p_1p_2q_1q_2+p_2^2q_2^2)q_1^{-4/3}q_2^{-14/3}\nn\\
  &&\,\,\,\,\,\,\,\,\,\,\,\,+2p_1(-4p_1^3q_1^3+13p_1^2p_2q_1^2q_2-20p_1p_2^2q_1q_2^2+2p_2^3q_2^3)q_1^{-2/3}q_2^{-7/3}
  -6q_1^{-1}q_2^{-7}\,,\nn\\
  \nn\\
  &&\rho=-\dfrac{1}{6}\,(p_1q_1+7p_2q_2)\,,\qquad V=q_1^2q_2^{-10/7}\,,\nn\\
  &&H_2=4p_1^6-2058q_1^3q_2^{9/7}+343q_2^{-6/7}(p_1q_1-7p_2q_2)(p_1q_1-p_2q_2)
  +14p_1^3(-4p_1q_1+7p_2q_2)q_2^{-3/7}\,,\nn\\
  \nn\\
  &&\rho=-p_1q_1+2p_2q_2\,,\qquad V=q_1^{-2/3}q_2^{-5/6}\,,\nn\\
  &&H_2=2q_1^{-2/3}\left(p_1q_1-p_2q_2\right)^2\left(
    q_2^{1/6}-2p_1q_1^{2/3}(p_1q_1-p_2q_2) \right)+q_1^{-1/3}q_2^{-1/3}\,,\nn\\
  \nn\\
  &&\rho=-\dfrac{1}{5}\,(p_1q_1+p_2q_2)\,,\qquad V=q_1^{3/2}q_2^{-7/2}\,,\nn\\
  &&H_2=q_1p_1^6-p_2q_2p_1^5-\dfrac{5q_1^{3/2}p_1^4}{2q_2^{5/2}}
  +\dfrac{3q_1^2p_1^2}{2q_2^5}
  +\dfrac{3q_1p_1p_2}{4q_2^4}+\dfrac{p_2^2}{4q_2^3}-\dfrac{q_1^{5/2}}{8q_2^{15/2}}\,.\nn
\end{eqnarray}

Here we present one super-integrable system for which equation
\eqref{par-eq} has two functionally independent solutions $H_2$ and $H_3$:
\[
\rho=-p_1q_1-\dfrac{1}{4}\,p_2q_2\,,\qquad V=q_1^{-2/3}q_2^{-7/3}\,.\nn\\
\]
In this case, for $\kappa_2=1$, there is fourth order polynomial
integral of motion

\[
H_2=\dfrac{p_1(p_1q_1-p_2q_2)^3}{2}-
\dfrac{13p_1^2q_1^2-44p_1p_2q_1q_2+4p_2^2q_2^2}{16q_1^{2/3}q_2^{4/3}}
+q_1^{-1/3}q_2^{-8/3},
\]
and for $\kappa_2=2$ sixth order polynomial first integral
\begin{eqnarray}
  H_3&=& 4p_1^2 (p_1q_1 - p_2q_2)^4 + q_1^{-4/3} q_2^{-8/3}\left(10 p_1^2q_1^2 - 16 p_1 p_2 q_1 q_2 + p_2^2 q_2^2\right)\nn\\
  &-&4q_1^{-2/3} q_2^{-4/3} p_1(p_1q_1 - p_2q_2) (2p_1^2q_1^2 -
  6p_1p_2q_1 q_2 + p_2^2 q_2^2)-\dfrac{3}{q_1 q_2^4}\,.\nn
\end{eqnarray}
In this case the complete algebra of integrals and $\rho$ is given by
\begin{eqnarray}
  &&\{\rho,H_1\}=(\alpha+\beta) H_1,
  \quad \{\rho,H_2\}=-H_2\,,\quad \{\rho,H_3\}=-2H_3\,,\quad \{\rho,H_4\}=-3H_4,\nn\\
  \nn\\
  &&\{H_2,H_3\}=H_{4},\qquad \qquad
  \{H_2,H_4\}=\dfrac{3}{8}\,H_3^2,\qquad \{H_3,H_4\}=\dfrac{27}{8}
  H_1^4\,.\nn
  \label{alg-int2}
\end{eqnarray}
\begin{remark}
  According to our knowledge it is the first example of
  super-integrable system with
   first integrals of degree two, four and six with respect to the
   momenta. In all examples known to us super-integrable systems  have
   at least two additional first integrals of degree two with respect to the
   momenta
  \cite{cd06,mw07,ran01,ts08a,ts08b}.
\end{remark}

\subsection{Additive separation of variables without
  super-integrability}
Let us consider one of the bi-Hamiltonian systems listed above
\begin{eqnarray}
  &&H_1=2p_1p_2+q_1^3q_2^{-9/5},\qquad
  \rho=-\dfrac{1}{4}\,(q_1p_1+5p_2q_2)\,,\qquad \kappa_2=1\,,\nn\\
  &&H_2=
  4p_1^4-10\left(3p_1^2q_1^2-30p_1p_2q_1q_2+25p_2^2q_2^2\right)q_2^{-4/5}+225q_1^4q_2^{-8/5}\,.\nn
\end{eqnarray}
In this case second Poisson bivector is equal to
\[
P'=\mathcal L_{\rho X} P=\dfrac{1}{2}\left(
  \begin{array}{cccc}
    0 & q_1p_1-5q_2p_2 & p_1p_2-\dfrac{3q_1^3}{2q_2^{9/5}} & {5}p_2^2+\dfrac{9q_1^4}{10q_2^{14/5}} \\
    \\
    * & 0 & p_1^2-\dfrac{15q_1^2}{2q_2^{4/5}} & {5}p_1p_2+\dfrac{9q_1^3}{2q_2^{9/5}} \\
    \\
    * & * & 0 & -\dfrac{3q_1^2}{10q_2^{14/5}}\bigl(3q_1p_1+25q_2p_2\bigr) \\
    \\
    * & * & * & 0 \\
  \end{array}
\right)
\]
and the corresponding recursion operator $\mathcal N=P'P^{-1}$ is
degenerated
\[
\det (\mathcal N-\lambda\,\id
)=\lambda^2\left(\lambda-\dfrac{3}{2}\,H_1 \right)^2.
\]
According to \cite{ts10}, we have to look for another solution $P'$ of
the equations \eqref{m-eq1} associated with the non-trivial
Darboux-Nijenhuis variables.  The existence of such solution  converts $\R^4$  into a regular bi-Hamiltonian
manifold.

It is easily to find three different second order polynomial solutions
of equations \eqref{m-eq1}.  Only one of them allows us to get
variables of separation. Namely, if
\[
Z^1=-4q_2p_1^2\,,\qquad Z^2=0\,,\qquad
Z^3=25q_2^{1/5}(3p_1q_1+5p_2q_2)\,,\] and
\[Z^4=-5q_2^{-6/5}
\left((2p_1p_2q_2^{9/5}+q_1^3)\sqrt{10}-3p_1q_2^{2/5}q_1^2-15p_2q_2^{7/5}q_1\right)
\]
are coordinates of the vector field $Z=\sum Z^k\partial_k$, then
bivector
\[
P'=\mathcal L_Z P
\]
is the solution of \eqref{m-eq1} for the given integrable system.  For
this Poisson bivector $P'$ the control matrix $F$
\begin{equation}\label{f-mat}
  P'\mathrm{d}H_i=P\sum_{j=1}^2 F_{ij}\,\mathrm{d}H_j,\qquad
  i=1, 2,
\end{equation}
is a non-degenerate matrix of the form
\[
F=\left(
  \begin{array}{cc}
    10q_2^{1/5}(-15q_1+q_2^{2/5}\sqrt{10}p_1) & \dfrac{1}{2} \\
    -500q_2^{1/5}(2q_2p_1^2+15q_2^{1/5}q_1^2-3q_2^{3/5}\sqrt{10}p_1q_1+5q_2^{8/5}\sqrt{10}p_2) & 0 \\
  \end{array}
\right)\,.
\]
The eigenvalues $u_{1,2}$ of  $F$
\begin{eqnarray}
  &&A(\lambda)=(\lambda-u_1)(\lambda-u_2)=\nn\\
  &&=\lambda^2-10q_2^{1/5}(q_2^{2/5}\sqrt{10}p_1-15q_1)\lambda+250q_2^{2/5}
  \Bigl(15q_1^2+\sqrt{10}q_2^{2/5}(5p_2q_2-3q_1p_1)+2p_1^2q_2^{4/5}\Bigr),
  \nn
\end{eqnarray}
coincide with eigenvalues of the corresponding recursion operator $
\mathcal N=P'\cdot P^{-1}=\mathcal L_Z P\cdot P^{-1} $.  Thus
$u_{1,2}$ are the variables of separation and the so-called
Darboux-Nijenhuis coordinates. The conjugated momenta $v_{1,2}$ are
equal to
\[
v_{1,2}=-\dfrac{q_1}{40\sqrt{10}q_2^{3/5}}-\dfrac{p_1}{200\,q_2^{1/5}}\pm
\dfrac{u_1-u_2}{200\,q_2^{4/5}},\qquad \{u_j,v_k\}=\delta_{ij}\,.
\]
The separated equations look like
\begin{equation}\label{sep-eqUV}
  -2\,\left(\sigma
    v_k\right)^3 u_k+H_2+2u_kH_1=0\,,\qquad k=1,2,\quad
  \sigma=10^{3/2}\,.
\end{equation}
Thus, in order to get solutions $v_k(t,\beta_1,\beta_2)$ and
$u_k(t,\beta_1,\beta_2)$ we have to solve the following Abel-Jacobi
equations
\begin{eqnarray}
  \beta_1-t&=&H_2\int^{v_1}\dfrac{\mathrm{d}v}{(\sigma^3 v^3+H_1)^2}+H_2\int^{v_2}\dfrac{\mathrm{d}v}{(\sigma^3 v^3+H_1)^2}\,,\nn\\
  \beta_2&=&\int^{v_1}\dfrac{\mathrm{d}v}{2(\sigma^3v^3-H_1)}+\int^{v_2}\dfrac{\mathrm{d}v}{2(\sigma^3v^3-H_1)}.
\end{eqnarray}
There are no addition theorem for the above quadratures and,
therefore, we can suppose, that there are no additional polynomial
integrals of motion \cite{ts08a,ts08b}. In fact, the second angle
variable
\begin{eqnarray}
  \omega_2&=&\dfrac{1}{2}\sum_{k=1}^2\int^{v_k}\dfrac{\mathrm{d}v}{\sigma^3v^3-H_1}
  =\dfrac{1}{12\sigma H_1^{2/3}}
  \sum_{k=1}^2\left[\,2\ln\left(\sigma v_k-H_1^{1/3}\right)\right.\nn\\
  &-& \left.\ln\left(\sigma^2v_k^2+\sigma v_kH_1^{1/3}+H_1^{2/3}\right)
    -2\sqrt{3}\arctan\left(\dfrac{2\sigma
        v_k+H_1^{1/3}}{\sqrt{3}H_1^{1/3}}\right)\right] \nn
\end{eqnarray}
cannot be rewritten as a function on polynomials $H_1, H_2$ and $H_3$
as in the super-integrable case \eqref{angle-case1}.

In the similar manner we can get solutions of the equations of motion
for other systems with fourth order integral of motion. Construction
of the variables of separation for the systems with sixth order
integrals of motion is an open problem.

\section*{Acknowledgements}
AJM and MP  thank  very much Artur Sergyeyev and Maciej B\l{}aszak for  discussions related to this work.
For them this research has been partially supported by grant No. N N202 2126 33
of Ministry of Science and Higher Education of Poland. For AJM this
research has been also  partially supported by EU
funding for the Marie-Curie Research Training Network AstroNet.

\appendix
\section{Appendix. Hypergeometric equation}
\label{sec:app}
Hypergeometric equation
\begin{equation}
  z(1-z)w''+[c-(a+b+1)z]w'-abw=0,
  \label{eq:riemann}
\end{equation}
is a special case of the Riemann $P$ equation. Thus it possesses three
regular singularities at $z=0$, $z=1$ and $z=\infty$, with exponents
and their differences of the following form
\begin{align*}
  &z=0, && \rho_1=1-c, && \rho_2=0,&& \lambda=1-c,\\
  &z=1,&&\sigma_1=c-a-b,&&\sigma_2=0,&&\nu=c-a-b,\\
  &z=\infty,&&\tau_1=b,&&\tau_2=a,&&\mu=b-a.
\end{align*}
Exponents satisfy the Fuchs relation, i.e. their sum is equal to one.

As it is well known one of  solutions of equation~\eqref{eq:riemann}
holomorphic at the origin is the  hypergeometric function
given by the following series
\begin{equation}
\label{eq:ghyp}
 \phantom{\vert}_2F_{1}\left(a,b,c, z
  \right):= \sum_{k=0}^{\infty}
\dfrac{(a)_k (b)_k }{(c)_k k!} z^{k}.
\end{equation}
Here $(x)_k=x(x+1)\cdots(x+k)$ is the Pochhammer symbol and $(x)_0=1$.
The above series is a polynomial iff $a$ or $b$ is a non-positive
integer, see, e.g., \cite{Poole:60::,Iwasaki:91::}.

The following lemma gives the necessary and sufficient condition for
\eqref{eq:riemann} to be reducible. It is a classical, well known
fact, see, e.g., \cite{Iwasaki:91::}.
\begin{lemma}
  \label{lem:redrie}
  Equation~\eqref{eq:riemann} is reducible if and only there exist
  $i$, $j$, $k\in \{1,2\}$, such that
  \begin{equation}
    \rho_i +\sigma_j +\tau_k \in \Z.
  \end{equation}
  Equivalently, equation~\eqref{eq:riemann} is reducible if and only
  if at least one number among
  \begin{equation}
    \lambda+\mu+\nu,\quad
    -\lambda+\mu+\nu,\quad  \lambda-\mu+\nu,\quad  \lambda+\mu-\nu,
  \end{equation}
  is an odd integer.
\end{lemma}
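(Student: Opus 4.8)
The plan is to prove the reducibility criterion through the classical correspondence between reducibility of a second-order Fuchsian operator and the existence of a one-dimensional invariant subspace of its monodromy, i.e.\ of a solution $y_0$ whose logarithmic derivative $y_0'/y_0$ is a rational function. I would first record the local data already displayed in the Riemann scheme: the exponents $\rho_1=1-c,\rho_2=0$ at $z=0$, $\sigma_1=c-a-b,\sigma_2=0$ at $z=1$, and $\tau_1=b,\tau_2=a$ at $z=\infty$, together with the Fuchs relation $\rho_1+\rho_2+\sigma_1+\sigma_2+\tau_1+\tau_2=1$. The whole argument then splits into the two implications of the first stated equivalence, after which the passage to the exponent differences $\lambda,\mu,\nu$ is a short computation.

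For the forward direction, suppose \eqref{eq:riemann} is reducible, so there is a solution $y_0$ spanning a monodromy-invariant line. Being a common eigenvector of the local monodromies, $y_0$ behaves like $z^{\rho_i}$ near $0$, like $(1-z)^{\sigma_j}$ near $1$, and like $z^{-\tau_k}$ near $\infty$, for some choice of indices $i,j,k\in\{1,2\}$. I would then set $g(z):=y_0(z)\,z^{-\rho_i}(1-z)^{-\sigma_j}$; this function is single-valued on $\CPOne$, holomorphic and nonvanishing at $0$ and $1$, hence entire, and at infinity grows like a power of $z$, so $g$ is a polynomial. Comparing the exponent of $y_0=z^{\rho_i}(1-z)^{\sigma_j}g(z)$ at infinity with $z^{-\tau_k}$ gives $\rho_i+\sigma_j+\tau_k=-\deg g\in\Z$, which is the asserted condition.

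For the converse, assume $\rho_i+\sigma_j+\tau_k\in\Z$ for some triple. The Fuchs relation shows that this sum and the one built from the complementary indices add up to $1$, so exactly one of the two is a non-positive integer; relabelling, I may assume $\rho_i+\sigma_j+\tau_k=-N$ with $N\ge 0$. Then I would look for a solution of the form $z^{\rho_i}(1-z)^{\sigma_j}P(z)$ with $\deg P=N$: substituting into \eqref{eq:riemann} reduces it to a hypergeometric operator one of whose parameters is the non-positive integer $-N$, so the holomorphic solution is the terminating series, a polynomial of the required degree. This produces a hyperexponential solution, hence a first-order right factor, so the operator is reducible. The resonant cases, where some exponent difference is itself an integer and the local monodromy may fail to be diagonalizable, are the only delicate point: there one must check that an eigenvector (rather than a full invariant line through a Jordan block) still exists, which it does precisely under the integrality condition.

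Finally I would translate the criterion into the form stated in terms of $\lambda,\mu,\nu$. A direct computation gives $\lambda+\mu+\nu=1-2a$, $-\lambda+\mu+\nu=2(c-a)-1$, $\lambda-\mu+\nu=1-2b$, and $\lambda+\mu-\nu=2(b-c)+1$; each is an odd integer exactly when, respectively, $a$, $c-a$, $b$, or $c-b$ is an integer. Running through the eight sums $\rho_i+\sigma_j+\tau_k$ shows these take the values $a$, $1-a$, $b$, $1-b$, $c-a$, $1-c+a$, $c-b$, $1-c+b$, so the condition $\rho_i+\sigma_j+\tau_k\in\Z$ for some $i,j,k$ holds iff one of $a,b,c-a,c-b$ is an integer, matching the four odd-integer conditions. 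The main obstacle is the clean handling of the converse and of the resonant subcases; the equivalence of the two number-theoretic formulations is then purely arithmetic.
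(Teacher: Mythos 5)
The paper does not actually prove this lemma --- it is stated as a classical fact with a pointer to \cite{Iwasaki:91::} --- so your proof cannot be measured against an argument in the text; it must stand on its own. What you give is the standard argument: reducibility is equivalent to the existence of a solution spanning a monodromy-invariant line, the forward direction extracts from such a solution an entire function that must be a polynomial, and the converse constructs a hyperexponential solution $z^{\rho_i}(1-z)^{\sigma_j}P(z)$ with $P$ a terminating hypergeometric series. The forward direction is complete (a common eigenvector of the local monodromies is necessarily log-free, its leading exponent at each singularity is a root of the indicial equation, and the Liouville argument for $g$ is sound), and the closing arithmetic --- the eight sums $\rho_i+\sigma_j+\tau_k$ reduce to ``one of $a,b,c-a,c-b$ is an integer,'' which matches the four odd-integer conditions via $\lambda+\mu+\nu=1-2a$, etc. --- is correct.

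The genuine gap is in the resonant case of the converse, which you flag but misdiagnose and do not close. The problem is not whether ``an eigenvector rather than a full invariant line through a Jordan block exists''; it is that the terminating series you invoke may simply fail to exist. After the substitution $w=z^{\rho_i}(1-z)^{\sigma_j}u$ you obtain a hypergeometric equation with one upper parameter equal to $-N$, but if its third parameter $c'$ is a non-positive integer $-M$ with $M<N$ (equivalently, an exponent difference of the original equation is an integer of the wrong sign relative to your chosen indices), then $(c')_{M+1}=0$ while $(-N)_{M+1}\neq 0$, and the series \eqref{eq:ghyp} is undefined, so no polynomial solution with that leading exponent exists. A concrete instance is $a=-3$, $b=\tfrac12$, $c=-2$: here $0+0+a=-3\in\Z$ and your recipe calls for a degree-$3$ polynomial solution holomorphic and nonvanishing at $0$, i.e.\ ${}_2F_1(-3,\tfrac12,-2;z)$, which does not exist; the equation is nevertheless reducible because $z^{3}=z^{1-c}$ is a solution, i.e.\ one must run the ansatz with the \emph{other} exponent at the resonant singularity (note $\rho_1+\sigma_2+\tau_1=3+0-3=0$, giving degree $0$). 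To finish the proof one has to show that among the integer sums $\rho_i+\sigma_j+\tau_k$ (there are at least four of them once one exponent difference is an integer) a choice can always be made that is non-positive \emph{and} avoids the degenerate parameter $c'\in\Z_{\le 0}$; this is exactly the case analysis carried out in \cite{Iwasaki:91::}, and it is the step your write-up replaces with an assertion.
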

From the above lemma it follows that if equation~\eqref{eq:riemann} is
reducible, then we can always renumber exponents in such a way that
\begin{equation*}
 \rho_1+\sigma_1+\tau_1\in-\N_0,
\end{equation*}
where $\N_0$ denotes the set of non-negative integers. But then, from
the Fuchs relation, we also have
\begin{equation*}
  \rho_2+\sigma_2+\tau_2\in\N.
\end{equation*}
Hence, if~\eqref{eq:riemann} is reducible, we assume from now on that
the exponents are numbered in this way.

If the difference of exponents at a singular point is an integer, then
it can happen that a local solution around this singularity contains a
logarithm. Such a singularity is called logarithmic. For
equation~\eqref{eq:riemann}, it is enough to know the exponents to
decide which singularity is logarithmic. To formulate the next lemma
which gives the necessary and sufficient conditions for a singularity
of \eqref{eq:riemann} to be logarithmic we introduce the following
notation. For a non-negative integer $m\in\N_0$ we define
\begin{equation}
  \langle m\rangle:=\begin{cases}
    \emptyset & \mtext{if} m=0,\\
    \{1, \ldots, m\} &\mtext{otherwise.}
  \end{cases}
  \label{eq:mym}
\end{equation}
For $s\in\{0,1,\infty\}$ let $e_{s,1}$ and $e_{s,2}$ denote exponents
of equation~\eqref{eq:riemann}, ordered in such a way that $\Re
e_{s,1}\geq \Re e_{s,2}$. With the above notation we have the
following.
\begin{lemma}
  \label{lem:iwa3}
  Let $r\in\{0,1,\infty\}$. Then $r$ is a logarithmic singularity of
  equation~\eqref{eq:riemann} if and only if $m:=e_{r,1}-e_{r,2}\in
  \N_0$, and
  \begin{equation}
    e_{r,1}+e_{s,i} +e_{t,j}\not\in \langle m\rangle, \mtext{for} i,j\in\{1,2\},
    \label{eq:sumiaki}
  \end{equation}
  where $r,s,t$ are pairwise different elements of $\{0,1,\infty\}$.
\end{lemma}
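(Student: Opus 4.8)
The plan is to reduce the assertion to a local Frobenius analysis at one regular singular point and then to propagate it by symmetry. First I would invoke the standard dichotomy at a regular singular point with exponents $e_{r,1},e_{r,2}$ (real parts ordered) and difference $m=e_{r,1}-e_{r,2}$: when $m\notin\N_0$ there are two Frobenius solutions of the form $z^{e_{r,i}}(\text{holomorphic})$ and the point is non-logarithmic, while when $m=0$ the repeated indicial root always produces a logarithm --- and in that case condition~\eqref{eq:sumiaki} is vacuous, so the equivalence holds trivially. The only substantial case is $m\in\N$. Since the homographic substitutions permuting $\{0,1,\infty\}$ carry~\eqref{eq:riemann} into another hypergeometric equation with correspondingly permuted exponents, and since~\eqref{eq:sumiaki} is symmetric under relabelling the singular points, it suffices to settle the criterion at $r=0$.

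At $r=0$ the exponents are $\{0,1-c\}$, so $m=\abs{1-c}$, and I would build the Frobenius solution attached to the smaller exponent $e_{0,2}$. Its coefficient recursion has an indicial factor that vanishes exactly at the resonant index $n=m$; by the general theory the point is logarithmic if and only if the right-hand side accumulated at that index --- the resonance obstruction --- does not vanish. Concretely this obstruction is controlled by the second Kummer solution carrying the dangerous lower parameter: for $c\ge 2$ it is $z^{1-c}\,{}_2F_{1}(a+1-c,b+1-c,2-c;z)$, and for $c\le 0$ it is ${}_2F_{1}(a,b,c;z)$. The denominator Pochhammer symbol first vanishes precisely at $n=m$, and the obstruction vanishes exactly when a numerator parameter is a non-positive integer that either truncates the series before that index or cancels the offending factor, so that a genuine power-series second solution survives.

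The final step is to match this obstruction with~\eqref{eq:sumiaki}. Using $e_{1,i}\in\{0,c-a-b\}$ and $e_{\infty,j}\in\{a,b\}$, I would compute the four sums $e_{0,1}+e_{s,i}+e_{t,j}$ with $\{s,t\}=\{1,\infty\}$; in the regime $c\ge2$ they reduce to $a,b,c-a,c-b$, and each lies in $\langle m\rangle=\{1,\dots,m\}$ exactly when the corresponding numerator parameter kills the dangerous term. Hence a logarithm is absent iff at least one sum lies in $\langle m\rangle$, which is precisely the negation of~\eqref{eq:sumiaki}; the regime $c\le0$ is handled identically once one recalls that there $e_{0,1}=1-c$ is the larger exponent. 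The hard part will be this last bookkeeping: one must keep straight which exponent plays the role of $e_{r,1}$, treat the two sign regimes uniformly, and verify that truncation or cancellation occurs at or before --- never strictly after --- the resonant index, so that the candidate solution really avoids the singular coefficient. Once these Pochhammer computations are in place, the symmetry reduction of the first paragraph upgrades the result from $r=0$ to $r=1$ and $r=\infty$, completing the proof.
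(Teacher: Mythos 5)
Your outline is sound, and it is worth saying up front that the paper itself offers no proof of this lemma: it only points to Lemma~4.7 of Iwasaki--Kimura--Shimomura--Yoshida, whose argument is the same local Frobenius analysis you sketch, so you are not diverging from the intended route. Your treatment of the easy cases is correct ($m\notin\N_0$ gives two power-series Frobenius solutions; $m=0$ forces a logarithm from the repeated indicial root, and $\langle 0\rangle=\emptyset$ makes \eqref{eq:sumiaki} vacuously true, so the equivalence holds), and your identification of the resonance obstruction at $r=0$ is the right computation: the recursion for the solution attached to the smaller exponent degenerates first at $n=m$, and the obstruction is proportional to $(a')_m(b')_m$, where $a'$, $b'$ are the numerator parameters of the Kummer solution you name. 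The four sums in \eqref{eq:sumiaki} then collapse in pairs (e.g.\ for $c\geq 2$ they are $a$, $b$, $c-a$, $c-b$, and $a\in\langle m\rangle$ iff $c-a\in\langle m\rangle$ because $c=m+1$) to exactly the two conditions $(a')_m=0$ and $(b')_m=0$; that is the bookkeeping you postpone, and it does close.

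Two points should be made explicit to remove genuine gaps. First, the symmetry reduction: a M\"obius map permuting $\{0,1,\infty\}$ returns you to hypergeometric normal form only after a gauge factor $z^{p}(1-z)^{q}$, which shifts the exponents at $0$, $1$, $\infty$ by $p$, $q$, $-p-q$ respectively. You need to observe that these shifts sum to zero, so every triple sum $e_{r,1}+e_{s,i}+e_{t,j}$, every difference $m$, and the logarithmic or non-logarithmic character of each point are invariant; without this remark the reduction to $r=0$ is unjustified as stated. Second, your caveat about truncation occurring ``at or before the resonant index'' is precisely where a careless version fails: for $c=2$, $a=0$ one has $a'=a+1-c=-1$, a non-positive integer, yet $(a')_1\neq 0$ and the logarithm genuinely appears, consistent with $a=0\notin\langle 1\rangle$. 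Phrasing the obstruction once and for all as $(a')_m(b')_m=0$ avoids this trap and the case split between ``truncates earlier'' and ``cancels at $n=m$''. With these two items written out, the proof is complete and coincides in substance with the one cited by the paper.
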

For more details, see Lemma~4.7 and its proof on pp. 91--93
in~\cite{Iwasaki:91::}.

For considerations of this paper we need characterisation when
reducible hypergeometric equation has algebraic solutions.  The answer
is contained in the following theorem.
\begin{theorem}
  \label{thm:kimura}
  Assume that hypergeometric equation \eqref{eq:riemann} is reducible.
  Then all its solutions are algebraic iff
  \begin{enumerate}
  \item all exponents are rational, and
  \item exactly two or four of $\lambda+\mu+\nu$, $-\lambda+\mu+\nu$,
    $\lambda-\mu+\nu$, $\lambda+\mu-\nu$ are odd integers, and
  \item no one of singularities is logarithmic.
  \end{enumerate}
\end{theorem}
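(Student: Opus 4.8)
The plan is to pass from the analytic question of algebraicity to the finiteness of the monodromy group $G\subset\mathrm{GL}(2,\C)$ of equation~\eqref{eq:riemann}, and then to read off finiteness from the Riemann scheme. I would invoke the classical principle that a Fuchsian equation has all of its solutions algebraic if and only if $G$ is finite: the ``only if'' part is immediate, since an algebraic function has only finitely many branches, while the ``if'' part uses that a finite $G$ forces each local monodromy at $0,1,\infty$ to have finite order, hence rational exponents and no logarithms, so the finitely many branches have moderate, algebraic growth at the singular points. Thus the statement reduces to characterising, in terms of $\lambda,\mu,\nu$, exactly when $G$ is finite.

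First I would use the standing hypothesis that~\eqref{eq:riemann} is reducible. By Lemma~\ref{lem:redrie} this means $G$ fixes a line, so in a suitable basis every monodromy matrix is upper triangular, with diagonal characters $\chi_1,\chi_2\colon\pi_1(\CP^1\setminus\{0,1,\infty\})\to\C^\star$. The key structural input is Maschke's theorem: a finite group has completely reducible complex representations. Hence, \emph{if} $G$ is finite, then the reducible two-dimensional representation must actually split as $\chi_1\oplus\chi_2$, so $G$ is simultaneously diagonalisable and embeds, via $(\chi_1,\chi_2)$, into $\C^\star\times\C^\star$; conversely, a split representation with both characters of finite image generates a finite group. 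Therefore
\[
G\ \text{finite}\iff \text{the representation is completely reducible and } \chi_1,\chi_2\ \text{have finite image.}
\]

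It remains to translate these two requirements into conditions 1--3. Finiteness of the images of $\chi_1,\chi_2$ is equivalent to all local eigenvalues $\rme^{2\pi\rmi\rho}$ being roots of unity, i.e.\ to rationality of every exponent; this is condition~1. Complete reducibility, on the other hand, fails in exactly two ways, each of which I would exclude separately. It fails if some local monodromy is a nontrivial Jordan block, hence of infinite order; by Lemma~\ref{lem:iwa3} this is precisely the presence of a logarithmic singularity, whose exclusion is condition~3. It also fails if, although every local monodromy is semisimple, the characters are glued by a nonsplit extension, so that only a \emph{single} invariant line exists. To rule this out I would count the $G$-invariant lines by counting rank-one sub-systems compatible with the Riemann scheme: such a sub-system amounts to choosing one exponent at each singular point with integer sum, and by the Fuchs relation this integrality is exactly the oddness of one of $\lambda+\mu+\nu,\ -\lambda+\mu+\nu,\ \lambda-\mu+\nu,\ \lambda+\mu-\nu$. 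A short parity computation, using $(-\lambda+\mu+\nu)+(\lambda-\mu+\nu)+(\lambda+\mu-\nu)=\lambda+\mu+\nu$, shows that the number of these four that are odd integers is never equal to three, hence lies in $\{0,1,2,4\}$; reducibility forces it to be at least one, and complete reducibility requires a \emph{second} invariant line, i.e.\ at least two of the four to be odd. This is condition~2, which rules out precisely the indecomposable case with exactly one odd sum.

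The main obstacle will be this last correspondence: proving rigorously that ``exactly one odd sum'' forces an indecomposable (hence infinite) $G$, whereas ``two or four odd sums together with no logarithm'' forces a genuine splitting. I would make it explicit rather than purely cohomological. Using that $\phantom{\vert}_2F_{1}(a,b,c,z)$ terminates to a polynomial whenever $a$ or $b$ is a non-positive integer (see the Appendix), I would exhibit each invariant line as a concrete solution $z^{\rho_i}(1-z)^{\sigma_j}$ times a polynomial, and then verify that two such solutions are linearly independent exactly when two of the sums are odd integers and no pair of exponents collides logarithmically. The delicate bookkeeping is localized at the confluent exponents, where the would-be second factor degenerates and where Lemma~\ref{lem:iwa3} decides between a clean second eigenline and a logarithm; handling conditions~2 and~3 jointly at those points is the crux of the argument.
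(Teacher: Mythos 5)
The paper does not actually prove Theorem~\ref{thm:kimura}; it only cites Kimura and Iwasaki et al., so your attempt is necessarily an independent argument. Its skeleton is the standard one and is sound: algebraicity of all solutions of a Fuchsian equation is equivalent to finiteness of the monodromy group $G$; Maschke's theorem converts finiteness of a reducible two-dimensional $G$ into ``completely reducible with torsion diagonal characters''; torsion characters are equivalent to rational exponents; a non-semisimple local monodromy is exactly a logarithmic singularity in the sense of Lemma~\ref{lem:iwa3}; and your parity identity $(-\lambda+\mu+\nu)+(\lambda-\mu+\nu)+(\lambda+\mu-\nu)=\lambda+\mu+\nu$ correctly shows the number of odd sums lies in $\{0,1,2,4\}$. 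All of that I accept.

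The gap sits exactly where you locate it, and it is twofold. First, your dictionary ``invariant line $\leftrightarrow$ odd sum'' is not one-to-one as stated: a single odd combination, say $\lambda+\mu+\nu\in 2\Z+1$, makes \emph{both} complementary triples $\rho_1+\sigma_1+\tau_1$ and $\rho_2+\sigma_2+\tau_2$ integers, so counting integer-sum triples does not by itself count invariant lines. You need the observation (stated in the paper right after Lemma~\ref{lem:redrie}) that a triple actually realized by an eigen-solution $z^{\rho}(1-z)^{\sigma}P(z)$ has sum in $-\N_0$, while the Fuchs relation forces the complementary sum into $\N$; only then does ``two invariant lines with distinct characters $\Rightarrow$ two distinct odd sums'' follow, and this step is needed in both directions of the theorem (the case of equal characters, where all of $\lambda,\mu,\nu$ are integers and all four sums are odd, must be treated separately). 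Second, and more seriously, the sufficiency direction --- that conditions 2 and 3 actually \emph{produce} the second eigenline --- is only announced as a plan. Exhibiting, for each admissible pattern of integers among $a$, $b$, $c-a$, $c-b$, a terminating Kummer solution realizing the second candidate triple, and showing that Lemma~\ref{lem:iwa3} is precisely what prevents the two candidates from collapsing onto one line when exponents are congruent $\bmod\ \Z$, is where essentially all the content of Kimura's theorem lives; without that case analysis the ``if'' half of the statement is not established. The approach would work, but as written the proof is an (accurate) reduction of the theorem to its hardest step rather than a proof of it.
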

For proof and details see \cite{Kimura:69::,Iwasaki:91::,Poole:60::}.

\end{document}